\documentclass[a4paper,english, 11pt]{amsart}
\usepackage{amsthm}
\usepackage[foot]{amsaddr}
\usepackage{graphicx} \usepackage{array} \usepackage{colortbl,makecell}
\usepackage{pifont} 

\usepackage{amsmath, amssymb, amsfonts, verbatim}
\usepackage{hyphenat, epsfig, subcaption, multirow}
\usepackage{nicefrac}
\usepackage{paralist,enumitem}

\usepackage{dsfont} 

\usepackage{mathtools, etoolbox}
\usepackage{thmtools}
\usepackage{thm-restate}
\usepackage{xparse}

\usepackage[noend]{algorithmic}
\usepackage{algorithm}

\usepackage[font=small,labelfont=bf]{caption}

\usepackage[dvipsnames]{xcolor}

\usepackage{babel}
\usepackage{csquotes}

\usepackage[style=alphabetic,backref=true,maxnames=99,maxalphanames=99,
            url=false,isbn=false,doi=true]{biblatex}

\AtEveryBibitem{\clearname{editor}}
\AtEveryBibitem{\clearlist{publisher}}
\AtEveryBibitem{\clearfield{pages}}
\DeclareFontFamily{U}{mathx}{\hyphenchar\font45}
\DeclareFontShape{U}{mathx}{m}{n}{
      <5> <6> <7> <8> <9> <10>
      <10.95> <12> <14.4> <17.28> <20.74> <24.88>
      mathx10
      }{}
\DeclareSymbolFont{mathx}{U}{mathx}{m}{n}
\DeclareMathSymbol{\bigtimes}{1}{mathx}{"91}

\usepackage{tcolorbox}
\tcbuselibrary{skins,breakable}
\tcbset{enhanced jigsaw}

\definecolor{DarkRed}{rgb}{0.5,0.1,0.1}
\definecolor{DarkBlue}{rgb}{0.1,0.1,0.5}

\usepackage{nameref}
\definecolor{ForestGreen}{rgb}{0.1333,0.5451,0.1333}
\definecolor{Red}{rgb}{0.9,0,0}
\usepackage[linktocpage=true,
	colorlinks,
	linkcolor=DarkRed,citecolor=ForestGreen,
    bookmarks,bookmarksopen,bookmarksnumbered]
	{hyperref}
\usepackage[noabbrev,nameinlink,capitalize]{cleveref}
\crefname{property}{property}{Property}
\creflabelformat{property}{(#1)#2#3}
\crefname{equation}{Eq}{Eq}
\creflabelformat{equation}{(#1)#2#3}

\usepackage{bm}
\usepackage{url}
\usepackage{xspace}
\usepackage[mathscr]{euscript}
\usepackage{mathrsfs}

\usepackage{tikz}
\usetikzlibrary{arrows}
\usetikzlibrary{arrows.meta}
\usetikzlibrary{shapes}
\usetikzlibrary{backgrounds}
\usetikzlibrary{positioning}
\usetikzlibrary{decorations.markings}
\usetikzlibrary{patterns}
\usetikzlibrary{calc}
\usetikzlibrary{fit}
\tikzset{vertex/.style={circle, black, fill=Yellow, line width=1pt, draw, minimum width=8pt, minimum height=8pt, inner sep=0pt}}

\usepackage[framemethod=TikZ]{mdframed}

\usepackage[margin=1in]{geometry}

\usepackage{soul}

\renewcommand{\paragraph}[1]{\medskip\noindent\textbf{#1}}

\newtheorem{theorem}{Theorem}
\newtheorem{lemma}{Lemma}[section]
\newtheorem{proposition}[lemma]{Proposition}
\newtheorem{corollary}[lemma]{Corollary}
\newtheorem{claim}[lemma]{Claim}
\newtheorem{fact}[lemma]{Fact}

\newtheorem{definition}[lemma]{Definition}

\newtheorem*{claim*}{Claim}
\newtheorem*{proposition*}{Proposition}
\newtheorem*{lemma*}{Lemma}
\newtheorem*{problem*}{Problem}

\crefname{lemma}{Lemma}{Lemmas}
\crefname{claim}{Claim}{Claims}
\crefname{enumi}{Step}{Steps}
\crefname{step}{Step}{Step}

\newenvironment{proofclaim}[1][{\it Proof of claim. \hspace{0.066cm}}]{\noindent {}{#1}{}}{ \strut\hfill \qed \vspace{2ex}}

\newtheorem{remark}[lemma]{Remark}

\theoremstyle{definition}

\newenvironment{abox}{\begin{tcolorbox}[
		enlarge top by=5pt,
		enlarge bottom by=5pt,
frame hidden,
		overlay broken = {
			\draw[line width=1pt, black]
			(frame.north west) rectangle (frame.south east);},
		overlay = {
			\draw[line width=1pt, black]
			(frame.north west) rectangle (frame.south east);},
		boxsep=0pt,
		left=4pt,
		right=4pt,
		top=10pt,
		arc=0pt,
		boxrule=1pt,toprule=1pt,
		colback=white
	]}
{\end{tcolorbox}}

\newtheorem{mdalg}{Algorithm}
\newenvironment{Algorithm}{\begin{abox}\begin{mdalg}}{\end{mdalg}\end{abox}}

\renewcommand{\qed}{\nobreak \ifvmode \relax \else
      \ifdim\lastskip<1.5em \hskip-\lastskip
      \hskip1.5em plus0em minus0.5em \fi \nobreak
      \vrule height0.75em width0.5em depth0.25em\fi}

\renewcommand{\leq}{\leqslant}
\renewcommand{\geq}{\geqslant}
\renewcommand{\le}{\leqslant}
\renewcommand{\ge}{\geqslant}

\DeclarePairedDelimiter{\bracket}[]
\DeclarePairedDelimiter{\paren}()

\DeclarePairedDelimiter{\abs}{|}{|}

\DeclarePairedDelimiter{\ceil}{\lceil}{\rceil}

\DeclarePairedDelimiter{\set}{\{}{\}}

\DeclarePairedDelimiterXPP\lonenorm[1]{}\lVert\rVert{_1}{#1}

\DeclareMathOperator*{\distrib}{\mu}
\DeclareMathOperator*{\support}{supp}
\DeclareMathOperator*{\expect}{\boldsymbol{\mathrm{Exp}}}
\DeclareMathOperator*{\variance}{Var}

\DeclarePairedDelimiterXPP{\TVD}[2]{\Delta_{\textnormal{\text{TVD}}}}(){}{#1, #2}
\DeclarePairedDelimiterXPP{\PTD}[2]{\Lambda}(){}{#1, #2}

\DeclarePairedDelimiterXPP{\Ot}[1]{\widetilde{O}}(){}{#1}
\DeclarePairedDelimiterXPP{\Omgt}[1]{\widetilde{\Omega}}(){}{#1}
\DeclarePairedDelimiterXPP{\BigO}[1]{O}(){}{#1}

\NewDocumentCommand{\Prob}{sO{}E{_}{{}}m}{{\boldsymbol{\mathrm{Pr}}}_{#3}
  \IfBooleanTF{#1}
  {\bracket*{#4}}
  {\bracket[#2]{#4}}
}

\NewDocumentCommand{\Exp}{sO{}E{_}{{}}m}{\expect_{#3}
  \IfBooleanTF{#1}
  {\bracket*{#4}}
  {\bracket[#2]{#4}}
}

\DeclarePairedDelimiterXPP{\Var}[1]{\variance}[]{}{#1}
\DeclarePairedDelimiterXPP{\Cov}[1]{\covariance}[]{}{#1}
\DeclarePairedDelimiterXPP{\eexp}[1]{\exp}(){}{#1}

\NewDocumentCommand{\Dist}{sO{}E{_}{{}}m}{\distrib_{#3}
  \IfBooleanTF{#1}
  {\paren*{#4}}
  {\paren[#2]{#4}}
}

\DeclarePairedDelimiterXPP{\Supp}[1]{\support}(){}{#1}

\DeclarePairedDelimiterXPP{\KL}[2]{\kldiv}(){}{#1 \;\delimsize\|\; #2}

\DeclarePairedDelimiterXPP{\Ent}[1]{\entropy}(){}{#1}
\DeclarePairedDelimiterXPP{\Inf}[2]{\inform}(){}{#1 \; ; \; #2}

\renewcommand{\epsilon}{\varepsilon}
\newcommand{\eps}{\varepsilon}

\newcommand{\poly}{\mbox{\rm poly}}

\renewcommand{\wp}{$\text{w.p.}$\xspace}
\newcommand{\whp}{$\text{w.h.p.}$\xspace}
\newcommand{\wehp}{$\text{w.e.h.p.}$\xspace}

\newenvironment{tbox}{\begin{tcolorbox}[
		enlarge top by=5pt,
		enlarge bottom by=5pt,
		 breakable,
		 boxsep=0pt,
                  left=4pt,
                  right=4pt,
                  top=10pt,
                  arc=0pt,
                  boxrule=1pt,toprule=1pt,
                  colback=white
                  ]}
{\end{tcolorbox}}

\newcommand{\Language}[1]{\textnormal{\textsc{#1}}\xspace}

\newcommand{\Local}{\Language{Local}}
\newcommand{\LOCAL}{\Local}

\newcommand{\alg}[1]{\texttt{#1}\xspace}

\newcommand{\ov}[1]{\overline{#1}}

\newcommand{\rv}[1]{\boldsymbol{#1}}

\newcommand{\rL}{\rv{\mathrm{L}}}

\newcommand{\cS}{\mathcal{S}}

\newcommand{\cV}{\mathcal{V}}
\newcommand{\cU}{\mathcal{U}}
\newcommand{\cZ}{\mathcal{Z}}
\newcommand{\cX}{\mathcal{X}}
\newcommand{\cY}{\mathcal{Y}}

\newcommand{\fC}{\mathscr{C}}
\newcommand{\fL}{\mathscr{L}}
\newcommand{\fI}{\mathcal{I}}
\newcommand{\fR}{\mathscr{R}}
\newcommand{\fG}{\mathscr{G}}

\newcommand{\bN}{\mathbb{N}}

\newcommand{\bR}{\mathbb{R}}
\newcommand{\bF}{\mathbb{F}}

\newcommand{\col}{\varphi}

\DeclareMathOperator*{\dom}{dom}

\newcommand{\Vsparse}{V_{sparse}}

\title{Beyond Brooks: $(\Delta-1)$-Coloring in Semi-Streaming} 

\author{Maxime Flin}
\address{Aalto University, Finland.}
\email{maxime.flin@aalto.fi}
\author{Magnús M. Halldórsson}
\address{Reykjavik University, Iceland.}
\email{mmh@ru.is}

\thanks{M.\ Flin was supported in part by the Research Council of Finland, Grants 359104 and 363558. Part of this work was done while the first author was working at Reykjavik University, funded by the Icelandic Research Fund, Grant 2310015-053. M.M.\ Halld\'orsson was supported by the Icelandic Research Fund, Grant 2511609.}
\date{}

\begin{document}
\begin{abstract}
Reed [J.~Comb.~Theory B, 1999] showed that graphs of maximum degree $\Delta \geq 10^{14}$ without $\Delta$-cliques are $(\Delta-1)$-colorable. 
We design a one-pass semi-streaming algorithm for computing such a coloring. Additionally, we prove that any one-pass $(\Delta-k)$-coloring algorithm for $0\leq k < (\Delta+1)/2$ requires $\Omega(n(k+1))$ space.
\end{abstract}
 \maketitle

\section{Introduction}

Graph coloring is a central problem in combinatorics and theoretical computer science. Given a graph $G=(V,E)$ and an integer $q \ge 1$, a $q$-coloring assigns a color from $\set{1, \dots, q}$ to each vertex so that adjacent vertices receive different colors.
Coloring has been extensively studied in diverse computational models, including the classical RAM \cite{LundY94, KhannaLS00}, distributed message-passing \cite{Linial92,BEPS16,HSS18,CLP20}, sublinear models \cite{ACK19,CFGUZ19,CDP21,AY25}, dynamic algorithms \cite{BCHN18,HP22,BGKL22,BRW24}, communication complexity \cite{FM25,CMNS25}, 
and streaming models \cite{ACK19,AA20,BBMU21,ACS22,CGS22,AKM23,AY25}.
In semi-streaming, the algorithm must produce a coloring after processing the edges of an $n$-vertex graph one at a time in an adversarial order while using $O( n \cdot \poly\log n )$ space.

In this work, we study the streaming complexity of coloring with fewer than $\Delta$ colors, where $\Delta$ is the maximum degree.
We give the first one-pass semi-streaming algorithm that outputs a $(\Delta-1)$-coloring for graphs with large enough degree and no $\Delta$-clique. 

\paragraph{Coloring With $\Delta+1$ or $\Delta$ Colors.}
The classic $(\Delta+1)$-coloring problem is easy sequentially because one can always \emph{extend a partial solution} to a full solution. 
Notwithstanding, the greedy approach does not work in semi-streaming; thus to $(\Delta+1)$-color in semi-streaming, Assadi, Chen, and Khanna \cite{ACK19} had to invent the influential \emph{palette sparsification} technique: if one samples $O(\log n)$ colors from $\set{1, 2, \ldots, \Delta+1}$ independently for each vertex, then with high probability, all vertices can be colored with a color from their list. It has been further leveraged in sublinear algorithms \cite{AA20,AKM23,AY25}, distributed algorithms \cite{HKNT22,FGHKN24}, dynamic algorithms \cite{BRW24}, and discrete mathematics \cite{KK23,KK24,Dhawan24}.

By a theorem of Brooks \cite{brooks1941}, $\Delta$ colors always suffice except for $(\Delta+1)$-cliques or odd cycles.
Assadi, Kumar, and Mittal \cite{AKM23} gave a semi-streaming algorithm for $\Delta$-coloring, in spite of the fact that palette sparsification provably fails \cite[Proposition C.2]{ACK19} and such an algorithm only exists when each edge appears at most once in the stream \cite{AKM23}. To that end, they introduced the novel sketching technique of sparse recovery.

In the scale of things, $\Delta$-coloring is conceptually easy. It has spawned a great many solutions, starting with \cite{MV69,LOVASZ75}, including newer
linear-time algorithms \cite{BW14,SS25};
see comprehensive treatments in \cite{CR13} and the book of \cite{ST15}. 
\emph{Modulo a single vertex}, 
it can be greedily colored, 
as there is an order under which it suffices to assign each node an arbitrary available color.
The $(\Delta-1)$-coloring problem does not have such a greedy or near-greedy property.

\paragraph{Beyond Brooks: $(\Delta-1)$-Coloring.}
In \cite{Reed98}, Reed showed that Brooks barely scratched the surface: in fact, as the size of the largest clique decreases, so does the chromatic number.
Borodin and Kostochka \cite{BK77} conjectured that every graph of maximum degree $\Delta \ge 9$ without a $\Delta$-clique is $(\Delta-1)$-colorable.
Reed \cite{Reed99} proved the conjecture for $\Delta \ge 10^{14}$. 
The conjecture is otherwise open, but holds for claw-free graphs \cite{CR13}. 
Despite significant advances in $(\Delta+1)$- and even $\Delta$-coloring, going further in the semi-streaming model has remained elusive.

Unsurprisingly, Reed's result is an order of magnitude more involved than the many simple proofs of Brooks' theorem. Reed's approach --- based on technical structural observations and the probabilistic method --- is highly non-trivial and remains (effectively) the \emph{only one known}.
As such, Reed's theorem is not merely a quantitative improvement but an exhibit for the existence of a conceptual leap between $(\Delta-1)$-coloring and $\Delta$-coloring.
Indeed, unlike proofs of Brooks' Theorem (including that of \cite{AKM23}), Reed's argument is far from being greedy: it begins by transforming the graph --- specifically, by adding edges.
Given the already strenuous effort necessary to obtain a $(\Delta+1)$-coloring and $\Delta$-coloring in semi-streaming and the technical depth of Reed's theorem, it is natural to ask if $\Delta$-coloring is the best we can hope for in semi-streaming:
\begin{quote}
    \emph{Does there exist a (one-pass) semi-streaming algorithm to color with fewer than $\Delta$ colors any given graph of maximum degree $\Delta$ (large enough) and no $\Delta$-cliques?}
\end{quote}
Even if such an algorithm exists, the concern is that removing one more color may require an explosion of cases, considering, e.g., the much larger number of cases in the $\Delta$-coloring algorithms \cite{AKM23,FHM23} than for $(\Delta+1)$-coloring \cite{ACK19,HNT21}.

\subsection{Our Contributions}
The main technical contribution of this paper is a semi-streaming algorithm for $(\Delta-1)$-coloring graphs of sufficiently high degree. Concretely, we prove the following.

\begin{theorem}
    \label{thm:streaming}
    There exists a universal constant $\Delta_0$ for which the following holds.
    There exists a randomized one-pass semi-streaming algorithm that given any graph $G=(V,E)$ with maximum degree $\Delta \geq \Delta_0$ and containing no $\Delta$-clique, outputs a $(\Delta-1)$-coloring of $G$ with high probability.
\end{theorem}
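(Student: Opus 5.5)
The plan is to make Reed's structural proof of the Borodin--Kostochka bound algorithmic, in the same style as the semi-streaming $\Delta$-coloring algorithm of \cite{AKM23}: sketches computed in one pass, then offline post-processing. In particular, we extend a palette-sparsification argument to a graph that has been made more uniform.

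\textbf{Step 1: sparse/dense decomposition from sketches.} During the pass, we maintain three kinds of information.
\begin{itemize}
\item Sketches for an almost-clique decomposition. We sample each vertex with probability $\Theta(\log n/\Delta)$ and store all edges incident to sampled vertices. From these we estimate neighborhood similarity, as in \cite{ACK19,AKM23}. This partitions $V$ into $\epsilon\Delta$-sparse vertices and almost-cliques $C_1,\dots,C_t$. Each $C_i$ has size $\Delta\pm\epsilon\Delta$ and each of its vertices has at most $\epsilon\Delta$ external neighbors.
\item Palette-sparsification lists. Each vertex gets $L(v)$ of $\poly\log n$ random colors from $[\Delta-1]$, and we store every edge whose endpoints have intersecting lists. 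This is $\Ot{n}$ edges in expectation.
\item Sparse-recovery sketches. For each vertex we keep $\ell_0$-samplers and sparse-recovery sketches of its neighborhood and non-neighborhood (in the sense of \cite{AKM23}). These let us recover the few non-edges inside an almost-clique, and the few external edges, whenever these counts are $O(\poly\log n)$.
\end{itemize}

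\textbf{Step 2: Reed's case analysis on almost-cliques.}
\begin{itemize}
\item Sparse vertices. Standard palette sparsification shows they have $\Omega(\epsilon^2\Delta)$ slack in a random partial coloring (repeated colors in the neighborhood). This slack absorbs losing two colors relative to $\Delta+1$, provided $\Delta\ge\Delta_0$ is large enough that $\epsilon^2\Delta\gg 2$.
\item Almost-cliques with enough anti-edges. If $C$ has $\Omega(\Delta)$ non-edges, or its vertices have enough external neighbors, then same-coloring pairs of non-adjacent vertices creates slack. This is the main randomized step in Reed's proof, and we reproduce it with colors drawn from the sampled lists.
\item Tight almost-cliques. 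These have very few anti-edges and very few external edges, $O(\poly\log n)$ or bounded by a constant in Reed's structural lemmas. Sparse recovery then gives us their complete internal structure up to a small defect. Following Reed, we use the absence of a $\Delta$-clique: such a $C$ is either missing a matching or has a vertex with two non-neighbors inside. We then modify the graph offline (Reed adds edges between external neighbors so that each clique sees few outside vertices), and we color these cliques last with a matching-based argument, e.g.\ a Hall's-theorem bipartite assignment between vertices and their available colors.
\end{itemize}

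\textbf{Step 3: offline coloring.} With all stored data in hand, we run the following order:
\begin{enumerate}
\item the sparse vertices and loose cliques, via the sparsified lists;
\item the tight cliques, by brute force or matching on their recovered structure, ensuring that the colors of their few external neighbors are known.
\end{enumerate}
Success with high probability follows from concentration bounds and a union bound over the $n$ vertices.

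\textbf{Expected main obstacle.} The hardest step is the tight almost-cliques, which are exactly the ones near a $\Delta$-clique. Palette sparsification provably fails for them, and Reed's argument needs global graph modifications and reserved colors.

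For this step we would first try the following. Put a vertex $v$ in $C$ into a set $S$ of \enquote{special} vertices whenever we can certify non-clique-ness, namely $v$ has two anti-neighbors in $C$ or an external neighbor. Then find, through the stored sparse-recovery data, a pair of non-adjacent vertices in $C$ sharing many neighbors. Give that pair the same color, which saves one color. Save a second color by the external-degree slack.

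Verifying that the sketches suffice to find such pairs in every configuration Reed enumerates, while staying within $\Ot{n}$ space, is where we expect most of the work.
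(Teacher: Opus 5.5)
Your proposal has a genuine gap at exactly the step you flag as the main obstacle, and the fallback you sketch for it does not work.

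The dichotomy in your tight-clique bullet misses the hard case. The absence of a $\Delta$-clique forces a 2-anti-matching or a 3-independent set inside $C$ only when $|C|\ge\Delta+1$. Those almost-cliques are in fact the easy (\emph{solitary}) ones: any coloring of $V-C$ extends to $C$ once the recovered anti-edges are same-colored.

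The hard almost-cliques are non-solitary ones of size at most $\Delta$, e.g.\ $C$ itself a $(\Delta-1)$-clique whose vertices all have degree $\Delta$. Such a $C$ has no non-adjacent pair at all, so the same-colored pair you propose to find inside $C$ does not exist. The difficulty lies entirely in how $C$ attaches to the outside.

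When an external vertex (a \emph{friend}) sees $\Delta/\poly\log n$ vertices of $C$, external-degree slack cannot be relied on. Suppose every vertex of $C$ is adjacent to exactly one of two non-adjacent friends $s_1,s_2$, and $s_1,s_2$ receive the same color. That color is then forbidden for all $\Delta-1$ clique vertices, leaving $\Delta-2$ colors, so $C$ cannot be colored at all. A random coloring of the outside produces this with probability roughly $1/\Delta$ per clique, hence somewhere once there are many such cliques. No Hall-type matching performed when coloring tight cliques \emph{last} can repair a bad precoloring of their neighbors. Merely knowing those colors, as your Step 3 asks, is not enough.

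What is missing is a mechanism that \emph{constrains} the outside coloring so these cliques remain extendable. The paper's tool is the Reed Transform:
\begin{itemize}
\item Take an almost-clique with a $(\Delta-1)$-core $K$ and a $2\rho$-friend $s$. Pick $u\neq v$ in $N(s)\cap K$ whose other external neighbors are $x$ and $y$.
\item Delete $C$, add the edge $xy$, and color the rest of the graph.
\item Reinsert $C$ by giving $v$ the color of $x$. This is legal: $v$ is not adjacent to $x$, its only colored neighbor at that point is $y$, and $xy$ was an edge.
\item Same-color $s$ with an anti-neighbor $z\in K-N(s)$. Now $u$ sees two repeated colors, so $C$ can be completed greedily.
\end{itemize}

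Making this sound in one pass needs further ingredients your plan does not contain:
\begin{itemize}
\item Solitary and popular almost-cliques (two friends with a common core neighbor and distinct anti-neighbors in $K$) must be set aside first and colored last. This uses that any coloring of the rest extends to them, and the transform's analysis relies on their absence.
\item The added edges must be downsampled so that sparse vertices stay sparse and the decomposition survives.
\item Added edges must never join two vertices of the same critical almost-clique, or a $\Delta$-clique could be created.
\item A random symmetry-breaking set is needed so that pairs same-colored during the inversion do not conflict across cliques.
\item The remaining critical cliques whose external neighbors have between $2\rho$ and $\Delta/\rho$ edges into the core only get their second unit of slack through a two-stage \emph{temporal} argument that defers coloring those external neighbors.
\end{itemize}
The configuration-by-configuration verification you postpone is therefore not bookkeeping about sketch sizes; it is the substance of the theorem.
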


During the streaming pass, akin to \cite{AKM23}, our algorithm samples $\poly(\log n)$-sized random lists and stores only the edges of the sparsified graph, meanwhile using their sparse recovery technique to learn about the densest regions of the graph.
The coloring algorithm differs substantially: we begin by transforming the graph --- both removing extremely dense subgraphs and carefully adding edges --- and then leverage various kinds of probabilistic coloring arguments to construct a coloring.

Clearly, the impossibility results highlighted in \cite{AKM23} also apply to us. We therefore assume that each edge appears at most once in the stream.
It is also worth noting that randomness is essential, as Assadi, Chen, and Sun \cite{ACS22} demonstrated that any deterministic semi-streaming algorithm requires $\exp(\Delta^{o(1)})$ colors.

As a final remark on \cref{thm:streaming}, we observe that $\Delta_0$ is in fact the smallest universal constant for which graphs of maximum degree $\Delta \geq \Delta_0$ and no $\Delta$-clique admit a $(\Delta-1)$-coloring. Reed proved in \cite{Reed99} that $\Delta_0 \leq 10^{14}$. For graphs of maximum degree $\Delta \leq 10^{14}$, all edges $O(n\Delta) = O(n)$ in the graph can be stored, and an optimal coloring can be computed (although, not in polynomial time).
We have not tried to reduce this constant; in \cite[Section 5]{Reed99}, Reed claims that $\Delta_0 = 10^6$ also works with more care, while $\Delta_0 < 100$ would need different techniques.

\paragraph{Using Fewer Colors.}
A natural follow-up question from \cref{thm:streaming} is to what extent this can be pushed further.
Molloy and Reed \cite{MR14} gave a characterization of $(\Delta-k)$-colorable graphs in terms of forbidden local subgraphs, for all $k \lesssim \sqrt{\Delta}$. This generalizes Reed's result and suggests an alternative approach. However, their structural decomposition is delicate and obtained through phases of contracting vertices and computing optimal partitions. We prove that, in fact, computing a $(\Delta - k)$-coloring requires $\Omega( n (k + 1) )$ space, prohibiting semi-streaming algorithms for $(\Delta - k)$-coloring unless $k$ is polylogarithmic.

\begin{restatable}{theorem}{LowerBound}
    \label{thm:lowerbound}
    Let $\Delta, c > 0$ be non-negative integers such that $(\Delta + 1)/2 < c \leq \Delta$.
    Any (possibly randomized) one-pass streaming algorithm that outputs a $c$-coloring of given $c$-colorable $n$-vertex graph of maximum degree $\Delta$ requires $\Omega( n (\Delta - c + 1) )$ space.
\end{restatable}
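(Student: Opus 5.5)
The plan is to reduce from the one-way two-party communication problem \textsc{Index}, where Alice holds a string $x$ of $N$ bits, Bob holds an index $i$, and Bob must output $x_i$ with constant success probability. Any randomized one-way protocol for \textsc{Index} needs $\Omega(N)$ bits, and the reduction is standard. Alice streams a graph $G_A$ built from $x$ and sends the memory state to Bob. Bob streams further edges $G_B$ built from $i$ and reads $x_i$ off the $c$-coloring that is output. Write $k := \Delta - c + 1 \geq 1$. The goal is $N = \Theta(nk)$ with $G_A \cup G_B$ always $c$-colorable, of maximum degree at most $\Delta$, and with each edge appearing at most once.

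\textbf{Gadget.} I would split $V$ into $\Theta(n/c)$ disjoint blocks, each containing a "terminal" set $T$ of $c-1$ vertices and a set $U$ of query vertices, and reserve some fresh vertices for Bob. In each block and for each $u \in U$, Alice's bits choose one endpoint from each of $k$ disjoint pairs $\{t_j, t'_j\} \subseteq T$. She adds an edge from $u$ to each chosen endpoint, so that $u$ receives $k$ Alice edges. Bob's index names a block, a vertex $u$, and a pair $j$. He then does three things.
\begin{enumerate}[label=(\roman*)]
\item He \emph{pins} the vertices of $T$ to distinct colors $1, \dots, c-1$, for instance by attaching cliques of size $c-1$ on fresh vertices that share $c-2$ vertices pairwise. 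This forces any $c$-coloring to give distinct, prescribed-up-to-renaming colors to $T$ and to leave one color, say $c$, unused on $T$.
\item He adds a fresh vertex $z$ pinned to color $c$ and joins it to $u$.
\item He joins $u$ to every vertex of $T \setminus \{t_j, t'_j\}$.
\end{enumerate}
After these additions, $u$ must take a color from $\{\mathrm{col}(t_j), \mathrm{col}(t'_j)\}$ that is not blocked by Alice's edge. Exactly one endpoint of pair $j$ is Alice's neighbor of $u$, so the color of $u$ is forced. A proper coloring always exists (give $u$ the color of the unchosen endpoint), and Bob recovers the bit by comparing the color of $u$ with the colors of $t_j$ and $t'_j$.

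\textbf{Degree bookkeeping.} The edges of $u$ are of three kinds: at most $c-1$ of them lie inside $T \cup \{z\}$ (Alice's neighbors in other pairs coincide with Bob's neighbors, so the union is what matters), and Alice's edge into pair $j$ adds one more. The total therefore stays within $\Delta$. The condition $c > (\Delta+1)/2$ should be exactly what guarantees that $2k$ pair-vertices fit inside $T$ together with the pinning structure. I expect this parameter check, $2k \leq c-1$ or a variant with $T$ of size $c$ and slightly different pinning, to need care. It may force pairs to be replaced by a slightly different "exactly one of two" encoding when $k$ is close to $c/2$.

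\textbf{Counting and main obstacle.} Each query vertex carries $k$ bits. The hardest step is to ensure that a constant fraction of the $n$ vertices are query vertices while terminals keep degree at most $\Delta$. A terminal $t$ gains Alice edges from every $u$ in its block and pinning edges from Bob. I would therefore use blocks with $|U| = \Theta(c)$ query vertices whose pair-systems are rotated, so that each terminal lies in $O(1)$ pairs per query vertex on average. I would then bound the degree of $t$ by $|U| + (c-1)$ edges. If this exceeds $\Delta$, I would cap $|U|$ at about $k$ per block and instead enlarge $T$, so that $|U|\cdot k$ stays $\Theta(|T| \cdot k)$. In either case the total is $N = \Theta(n k)$ bits. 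Since Bob's pinning cliques use fresh vertices and $O(c)$ edges per block, they only change $n$ by a constant factor. The \textsc{Index} lower bound then yields $\Omega(n(\Delta - c + 1))$ space.
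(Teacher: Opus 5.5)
There is a genuine gap, and it lies in step (iii), not in the parameter checks you flag. Bob joins $u$ to every vertex of $T\setminus\{t_j,t'_j\}$. For each of the other $k-1$ pairs, however, Alice has already streamed the edge from $u$ to the chosen endpoint, and Bob cannot know which endpoint that was. The combined stream therefore repeats edges; your remark that ``the union is what matters'' concedes this. That contradicts your own requirement that each edge appear at most once. The requirement is essential: the paper works in that model, and with repeated edges even $\Delta$-coloring is already impossible in semi-streaming \cite{AKM23}. If Bob instead omits the edges he cannot certify, the unchosen endpoint of every other pair stays available to $u$. Then $\mathrm{col}(u)$ is no longer confined to $\{\mathrm{col}(t_j),\mathrm{col}(t'_j)\}$, and Bob learns nothing. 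The root cause is that Alice's unknown edges sit inside the near-clique $T\cup\{z,u\}$ that Bob must complete. The paper avoids exactly this with a complementary encoding. Each bit is a slot edge of a fixed $k$-regular bipartite graph, realized either in $A\cup B$ or in the copy $\bar A\cup\bar B$. Bob adds only new edges: a $K_{2,2}$ between the two copies of the queried slot, and edges to a fresh $(c-3)$-clique $C_p$. This produces a $(c+1)$-clique minus precisely the edge Alice omitted, while Alice's other edges at those four vertices leave the near-clique and are harmless.

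Even if repeated edges were tolerated, the degree budget at the terminals defeats your counting. In the queried block, a terminal $t\notin\{t_j,t'_j\}$ already has degree $c$ from the pinning (for instance $c-2$ clique edges plus $z$) together with its edge to $u$. It can therefore absorb at most $\Delta-c=k-1$ Alice edges from other query vertices. Since Alice does not know the query, every terminal may lie in the pair-systems of only about $k$ query vertices. For $k=1$, i.e.\ $c=\Delta$, this forces $|U|\le 1$ per block, so only $O(n/\Delta)$ bits are encoded instead of $\Omega(n)$. Your fallback does not repair this. $T$ cannot grow beyond $c-1$ while pinned to distinct colors, and $|U|\approx k$ with $|T|=c-1$ yields only $\Theta(k^2/c)$ bits per vertex. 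For $k\ge 2$, letting each terminal lie in at most $k-1$ pair-systems with $|U|\approx|T|/2$ would recover $\Theta(nk)$, but only modulo the repeated edges. In the paper's gadget every vertex has at most $k$ Alice slots and degree at most $k+(c-3)+2=\Delta$, which works uniformly down to $k=1$. The hypothesis $c>(\Delta+1)/2$ is used only to greedily color the vertices outside the near-clique, whose degree is at most $k<c$.
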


Very recently, Assadi, Sundaresan, and Yazdanyar \cite{ASY26} studied the problem of distinguishing graphs of low chromatic number (e.g., constant) from those with a large chromatic number (e.g., $\sqrt{n}$). They proved, amongst other things, that it requires $n^{2 - o(1)}$ space in adversarial streams. \cref{thm:lowerbound} is orthogonal to (and much simpler than) their results since it considers graphs whose chromatic number is one less than their clique number.

\subsection{Organization of the Paper.}
In \cref{sec:tech-intro}, we give an overview of the challenges and technical ideas necessary to obtain \cref{thm:streaming}. At the end of \cref{sec:tech-intro}, we also provide a more detailed comparison with \cite{AKM23,anonymous}. In \cref{sec:streaming-alg}, we detail the information stored during the streaming pass and describe the coloring algorithm in \cref{sec:coloring-alg}. \cref{thm:lowerbound} is presented in \cref{sec:lowerbound}.

 \section{Technical Introduction}
\label{sec:tech-intro}

We give an overview of the ideas behind \cref{thm:streaming}.
First, we present the three components for constructing $(\Delta-1)$-colorings, followed by a description of the high-level algorithm that combines them. We then explain how to implement these components in semi-streaming.
For a complete step-by-step description, see \cref{sec:coloring-alg}.
In this introduction, various concepts are simplified to aid intuition.

\paragraph{Pre-/Post-Processing.}
We start by identifying subgraphs in the graph that are easily colorable regardless of the coloring in the rest of the graph. During pre-processing, we remove those subgraphs and only color them at the very end during post-processing.

For instance, consider the $(\Delta+1)$-clique missing two disjoint edges, as illustrated in \cref{fig:2-anti-matching}. In any $(\Delta-1)$-coloring of such a subgraph, both endpoints of each anti-edge must be assigned the same color. Once these pairs of vertices are colored the same, the remaining vertices can be colored greedily. In general, the subgraphs shown in \cref{fig:degree-1-choosable} are such that any coloring of the outer vertices can be extended to a coloring of the inner vertices as well (they are, in fact, $(\deg-1)$-choosable; see \cref{sec:choosable-subgraphs}). This property allows us to defer their coloring to the very end of the algorithm.

To detect these subgraphs, during pre-processing, we decompose the graph into \emph{(locally) sparse vertices} --- which have many missing edges in their neighborhood --- and \emph{almost-cliques} --- sets of vertices that can be turned into $(\Delta+1)$-cliques by modifying an $\eps$-fraction of their edges and vertices. We look for those choosable subgraphs only in the densest almost-cliques, as the other vertices can be colored by other means.
We then distinguish two main types of choosable components:
the \emph{solitary} ones are inscribed inside an almost-clique (as on \cref{fig:2-anti-matching,fig:3-IS}), and the \emph{popular} ones for which the choosable component contains some vertices outside the almost-clique (as on \cref{fig:populaire}).

\begin{figure}[ht!]
    \centering
    \begin{subfigure}[b]{.3\linewidth}
        \centering
        \includegraphics[width=\linewidth,page=1]{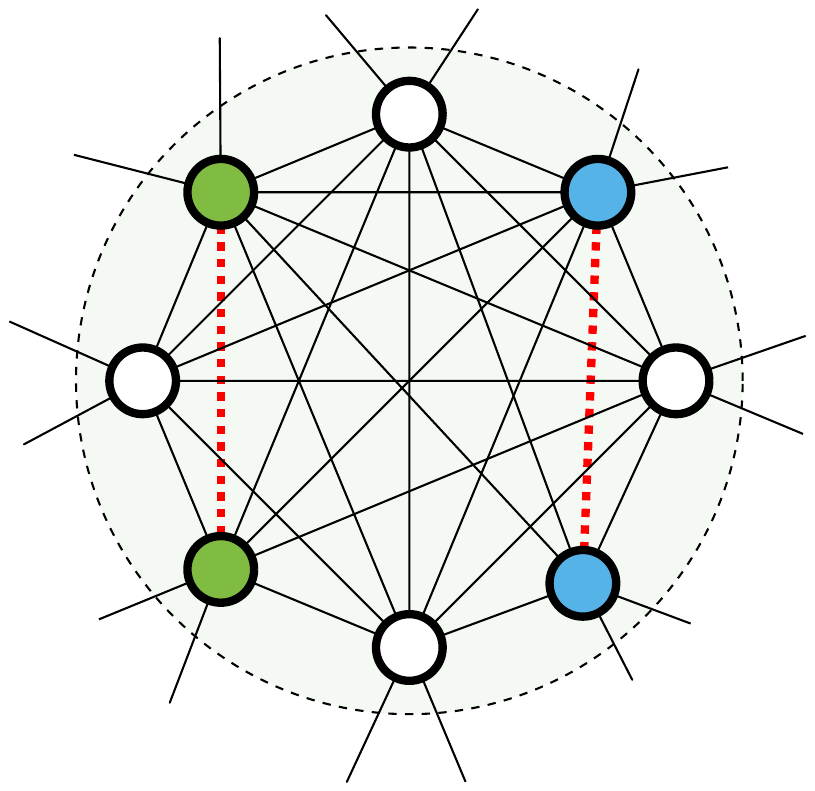}
        \caption{\label{fig:2-anti-matching}}
    \end{subfigure}
    \hfill
    \begin{subfigure}[b]{.3\linewidth}
        \centering
        \includegraphics[width=\linewidth,page=2]{figures/DCCs.pdf}
        \caption{\label{fig:3-IS}}
    \end{subfigure}
    \hfill
    \begin{subfigure}[b]{.3\linewidth}
        \centering
        \includegraphics[width=\linewidth]{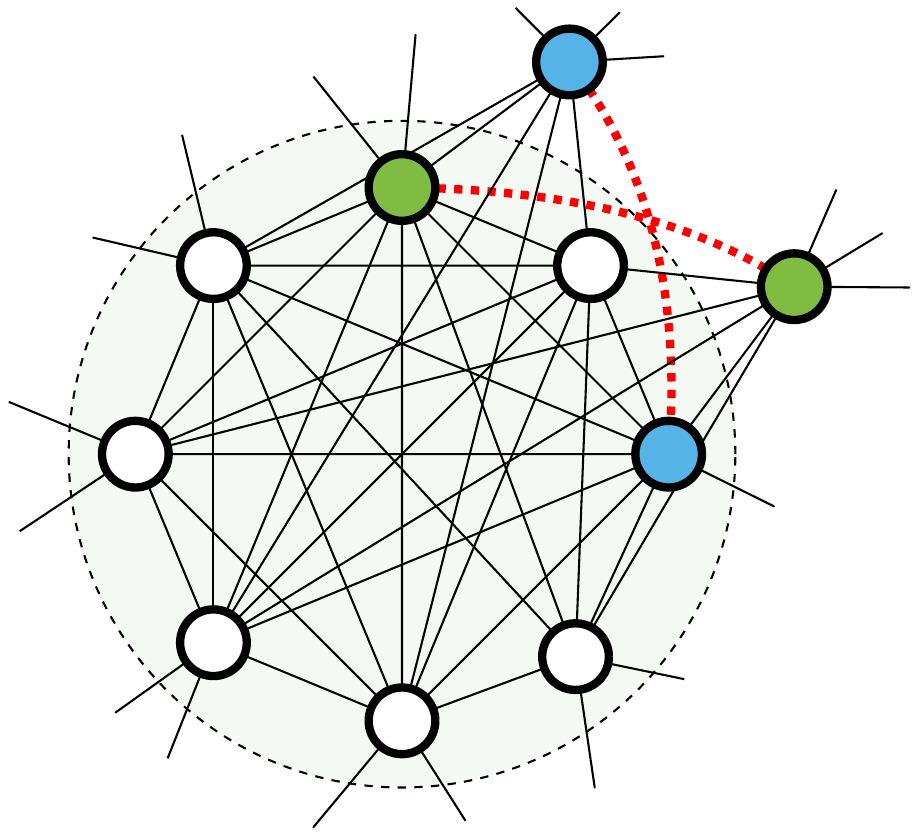}
        \caption{\label{fig:populaire}}
    \end{subfigure}

    \caption{The three types of subgraphs $G[X]$ such that any coloring of $G[V - X]$ can be extended to a coloring of $G$. The vertices inside the gray circle are in the almost-clique. \cref{fig:2-anti-matching} represent an almost-clique with an induced 2-anti-matching. \cref{fig:3-IS} represents an almost-clique with a 3-independent set. \cref{fig:populaire} represents a $(\Delta-1)$-clique with two vertices sharing a neighbor in the clique, each with a different anti-neighbor in the clique.
    To color each such graph, we must ensure that the highlighted pairs of vertices are same-colored, which is unlikely to occur through randomized coloring arguments.
    \label{fig:degree-1-choosable}}
\end{figure}

For the remainder of this technical introduction, we focus on the case of $\Delta$-regular nodes in disjoint $(\Delta-1)$-cliques, as this already captures the main challenges and intuition for $(\Delta-1)$-coloring.

\paragraph{Generating Slack Probabilistically.}
Rather than directly attempting to color all vertices, many coloring algorithms begin by coloring a few vertices randomly. The hope is that uncolored vertices have neighbors that pick the same color, which increases the flexibility when completing the coloring.

Consider the following random coloring process.
Each vertex flips a coin and, if it is heads, randomly picks a color. A vertex keeps its color only if no neighbor selects the same one.
If $v$ is a vertex in a $(\Delta-1)$-clique such as the one in \cref{fig:triplets}, basic probabilistic analysis shows that $v$ has a constant probability of saving two colors --- meaning that $v$ remains inactive, and both of its external neighbors retain colors used elsewhere in the clique. We say that $v$ was successful.

For a $(\Delta - 1)$-clique with all (or most) of its external neighbors distinct, one can show that with high probability it contains two successful vertices. Then, we can defer the coloring of successful vertices and color the rest of the clique greedily.
This technique generalizes to any almost-clique without an external node connected to most of its internal vertices.

\paragraph{Reed Transform.}
The most difficult case occurs when a $(\Delta-1)$-clique $C$ has an external vertex --- called a friend --- with numerous neighbors in $C$, yet lacks any structure that yields probabilistic slack or choosability. To illustrate the challenge, consider the situation where the vertices are adjacent to either $s_1$ or $s_2$, but not both. If the probabilistic coloring procedure colors $s_1$ and $s_2$ the same (an event with probability $\approx 1/\Delta$), then the coloring cannot be extended to the rest of $C$ (see \cref{fig:amicale}). It is crucial to note that recoloring vertices might convert successful vertices into \emph{un}successful ones, rendering the approach infeasible.

To handle such cliques, we apply the Reed Transform, based on a construction from \cite{Reed99}. Identify a specific substructure in $C$ (vertices $s, x, y, u, v$ as in \cref{fig:reed-transform}), delete $C$ from the graph, and add a single edge $xy$ (in dashed gray in \cref{fig:reed-transform}). Once the rest of the graph is colored --- forcing $x$ and $y$ to have distinct colors --- we reinsert and color $C$ by coloring $v$ as $x$ and $s$ as some anti-neighbor $z \in C - N(s)$.
Naturally, $x$ and $y$ must be chosen carefully to avoid creating a $\Delta$-clique when we connect them.

\begin{figure}[ht!]
    \centering
    \begin{subfigure}{.33\linewidth}
        \centering
        \includegraphics[width=\linewidth,page=1]{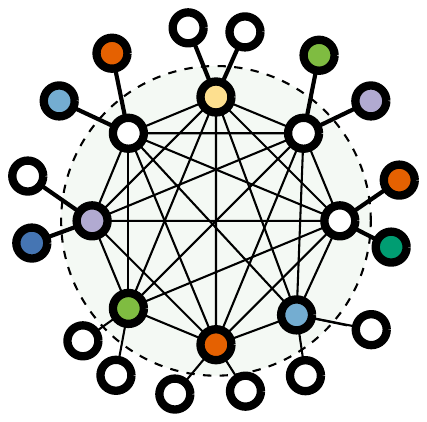}
        \caption{\label{fig:triplets}}
    \end{subfigure}
    \hfill
    \begin{subfigure}{.33\linewidth}
        \centering
        \includegraphics[width=\linewidth,page=2]{figures/triples.pdf}
        \caption{\label{fig:amicale}}
    \end{subfigure}
    \hfill
    \begin{subfigure}{.3\linewidth}
        \centering
        \includegraphics[width=\linewidth,page=3]{figures/triples.pdf}
        \caption{\label{fig:reed-transform}}
    \end{subfigure}
    \caption{A $(\Delta-1)$-clique in different configurations. On \cref{fig:triplets} it has many different external neighbors. On \cref{fig:amicale}, all vertices are either adjacent to $s_1$ on the left or $s_2$ on the right. If slack generation colors $s_1$ and $s_2$ the same, there is no way to extend the coloring. On \cref{fig:reed-transform}, a representation of the Reed Transform: the clique is removed, and the dashed edge added to the graph; the coloring can be later extended by same coloring the $s$ vertex with some vertex of the clique and the $v$-vertex like the $x$-vertex. The edge added between the $x$- and $y$-vertex ensures this is possible.}
\end{figure}

\paragraph{The Coloring Algorithm.}
With this, we now describe the high-level structure of our $(\Delta-1)$-coloring algorithm.
Despite our intricate analysis, each step is conceptually simple:
\begin{enumerate}
    \item Identify and remove choosable components (solitary and popular as in \cref{fig:degree-1-choosable});
    \item Apply the Reed Transform to troublesome $(\Delta-1)$-cliques;
    \item Use probabilistic slack to color the remaining graph greedily;
    \item Invert the Reed transform to color the removed cliques;
    \item Complete the coloring of the choosable components.
\end{enumerate}
Note how the steps pair with each other and correspond to one of the aforementioned three ideas.
For the remainder of this technical introduction, we outline how each step can be implemented in semi-streaming, discuss the challenges, and present our solutions.

\paragraph{Streaming Implementation.}
We focus here on the implementation of our coloring algorithm after the streaming pass is done, for the streaming algorithm itself largely follows from \cite{ACK19,AKM23}: running palette sparsification and computing linear sketches. See \cref{sec:summary-streaming} for a summary of the information recovered by the algorithm after the streaming pass.

First, we observe that we may focus on the densest almost-cliques, for the other ones can be colored by palette sparsification. Using the sparse-recovery technique of \cite{AKM23}, we recover almost all the edges incident to the densest almost-cliques. In particular, it allows us to identify the choosable subgraphs and the $(\Delta-1)$-cliques that need to be removed by the Reed Transform.
Implementing the Reed Transform requires that we avoid the creation of a $\Delta$-clique and that we preserve invertibility for later steps. We must also ensure that we do not densify sparse neighborhoods, as this could inadvertently create the dense subgraphs we sought to isolate in Step (1).
Remarkably, simple downsampling suffices: it limits the number of edges added per vertex while preserving enough flexibility to avoid creating a $\Delta$-clique.

We avoid monochromatic edges by keeping a \emph{serene coloring}: a coloring for which each node uses a color from its random list unless all its incident edges have been recovered, in which case it may choose its color freely.
We frequently combine palette sparsification with sparse-recovery. For instance, when coloring popular almost-cliques, some nodes are colored using their random lists, while others are colored greedily, as in the classical setting. Overall, our algorithm makes significantly more
use of the edges recovered by sparse recovery than \cite{AKM23}.
Though subtle, this difference proves consequential: it frees us to focus more deeply on the graph’s structural properties.

The second major challenge is a technical one: proving concentration on the probabilistic guarantees of the randomized coloring procedure described earlier.
Indeed, to invert the Reed Transform, we require that each friend has at least $\Delta/\poly(\log n)$ edges into the clique. Consequently, we can no longer guarantee that most external neighbors are distinct. At a high level, this undermines the apparent randomness of colors sampled outside the almost-clique—they may originate from too few vertices to provide sufficient diversity. Unfortunately, standard concentration inequalities fall short in capturing this behavior. Instead, we rely on a case analysis that combines multiple forms of slack-based concentration, sometimes even blending probabilistic and temporal slack to recover the guarantees we need.

\paragraph{Comparison with \cite{AKM23}.}
In \cite{AKM23}, the authors design a semi-streaming algorithm for $\Delta$-coloring and, in particular, introduces the sparse-recovery technique used by our algorithm. In contrast, while \cite{AKM23} would use it to recover edges incident on at most two vertices for each of the densest cliques (see Definitions 4.6 and 4.7 of \cite{AKM23}), we heavily use that we can recover all edges incident on the densest cliques, most notably for implementing and inverting the Reed Transform.

\paragraph{Coloring with fewer than $(\Delta-1)$-colors.}
Molloy and Reed \cite{MR14} generalized Reed's result to $(\Delta - k)$-colorable graphs, for all values of $k$ up to roughly $\sqrt{\Delta}$. As shown in \cite{BE19,anonymous}, this can be obtained in the LOCAL model \cite{Linial92,HS20} in a near-optimal round complexity of $\poly(\log\log n)$. The lower bound of \cref{thm:lowerbound} proves that $(\Delta - k)$-coloring is not attainable in semi-streaming unless $k = \poly(\log n)$.

Our work leaves open the question of $(\Delta - k)$-coloring in the streaming model for $k \ge 2$, specifically whether it can be achieved in $n \cdot \poly(k, \log n)$ space. Conceptually, as $k$ increases, the number of cases involving dense cliques grows rapidly, as 
it no longer suffices to track two anti-edges per dense clique.

To avoid this combinatorial explosion,  the main alternative is to follow the approach of \cite{MR14}, which provides a systematic treatment for arbitrary $k$ via a generalization of the Reed transform. This framework starts from a coarse structural decomposition (akin to the almost-clique decomposition) and progressively refines it through increasingly intricate local transformations. These transformations appear to require the full knowledge of the local topology as they rely, among other things, on the optimal coloring of dense induced subgraphs. While it is conceivable that, using the sparse recovery technique with $\poly(k)$-sparse vectors (thus, $n \cdot \poly(k,\log n)$ space), one could recover most of the edges necessary to implement this intricate transformation, the feasibility of this approach remains open.
 
\section{Preliminaries}

\paragraph{Notation.} 
For an integer $k \geq 1$, we use $[k]$ to denote $\set{1, 2, \ldots, k}$. For sets $A$ and $B$, we write $A - B = \set{a \in A : a \notin B}$. For succinctness, we abuse notation slightly and write $A - x$ and $A + x$ for $A - \set{x}$ and $A \cup \set{x}$ respectively. For a function $f: X \to Y$, let $f(A) = \set{f(a): a\in A}$ and $f^{-1}(B) = \set{x : f(x) \in B}$ where $A \subseteq X$ and $B \subseteq Y$ respectively. We abuse notation slightly and occasionally write $f^{-1}(y)$ instead of $f^{-1}(\set{y})$ when $y \in Y$.

Let $G=(V,E)$ be a simple graph. We denote by $n=|V|$ and $m=|E|$ its respective number of vertices and edges. For a set $S\subseteq V$, the induced subgraph $G[S]$ is the graph on vertex set $S$ and all edges of $E$ between pairs of vertices in $S$. The neighborhood of $v\in V$ is $N_G(v)=\set{u\in V: \set{u,v}\in E}$ and $\deg(v,G) = |N_G(v)|$. The closed neighborhood of $v$ is $N_G[v] = N_G(v) + v$. When $G$ is clear from context, we write $N(v)$ and $\deg(v)$. The maximum degree of $G$ is denoted by $\Delta$. We refer to a pair $\set{u,v}$ as a non-edge or an anti-edge when no edge connects $u$ and $v$. A $k$-anti-matching is a set of $k$ anti-edges such that each vertex belongs to at most one anti-edge. A $k$-independent set is a set of $k$ vertices that are not connected by any edge.

For an integer $q \geq 1$, a partial $q$-coloring is a mapping $\col : V \to [q] \cup \set{\bot}$ such that $\col(u) \neq \col(v)$ or $\bot \in \set{\col(u), \col(v)}$ for all $\set{u,v}\in E$. The set of colored vertices is the domain $\dom\col = \set{v\in V: \col(v) \neq\bot}$. Given a partial coloring $\col$, the uncolored degree of $v$ is $\deg_\col(v, G) = |N_G(v) - \dom\col|$. Throughout the paper, we denote by $L_\col(v) = [\Delta-1] - \col(N(v))$ the set of colors unused by neighbors of $v$ with respect to a (possibly partial) coloring $\col$. We say that a coloring $\psi$ is an extension of $\col$ if $\psi(v) = \col(v)$ for all $v\in \dom\col$.

We say that an event occurs \emph{with high probability} (in $n$) if it occurs with probability at least $1 - n^{-c}$, for a desirably large constant $c > 0$. An event occurs \emph{with exponentially high probability} in $x$ if it occurs with probability at least $1 - \exp(-\Omega(x))$. We abridge ``with probability'', ``with high probability'' and ``with exponentially high probability'' as ``\wp'', ``\whp'' and ``\wehp'' respectively.

\subsection{Sparse-Dense Decomposition}
We use the variant of the sparse-dense decomposition from \cite{HSS18,ACK19,AW22,AKM23} (inspired by \cite{Reed98,RM02}). It partitions vertices between (locally) sparse vertices --- with many anti-edges in their neighborhood --- and dense vertices clustered in almost-cliques --- subgraphs that resemble a $(\Delta+1)$-clique.

\begin{definition}
A vertex $v$ is \emph{\underline{$\zeta$-sparse}} in $G$ if $G[N(v)]$ contains at most $\binom{\Delta}{2} - \zeta \Delta$ edges.
\label{def:sparse}
\end{definition}

\begin{definition}
    \label{def:AC}
    A set $C \subseteq V$ is an \emph{\underline{$(\eps,\delta)$-almost-clique}} if 
    \begin{enumerate}
        \item\label[part]{part:acd-size}
            $(1 - \eps/2)\Delta \leq |C| \leq (1 + \eps/2)\Delta$,
        \item\label[part]{part:acd-neighbors}
            $|C - N[v]|, |N(v) - C| \leq \eps\Delta$ for all $v \in C$, and
        \item \label[part]{part:acd-ext}
            every $v \notin C$ has $|C - N(v)| \ge \delta \Delta$.
    \end{enumerate}
\end{definition}

We stress that (\ref{part:acd-ext}) is not always included as part of the definition of almost-clique (e.g., in \cite{HSS18,ACK19}). It is however essential for us that nodes outside an almost-clique have $\delta\Delta$ non-neighbor inside the almost-clique (we will always have $\delta = \Omega( \eps )$).

Let $v$ be a vertex in some almost-clique $C$. We call external neighbors $E(v) = N(v) - C$ its $e(v) = |E(v)|$ neighbors outside $C$. The vertices of $C$ that are not adjacent to $v$ are its anti-neighbors $A(v) = C - N[v]$; let $a(v) = |A(v)|$ be the anti-degree of $v$. Part (\ref{part:acd-neighbors}) in \cref{def:AC} means that $e(v), a(v) \leq \eps\Delta$.

\begin{definition}
\label{def:acd}
For $\eps,\eta \in (0,1)$, an \emph{\underline{$(\eta,\eps,\delta)$-almost-clique decomposition}} of $G$ is a vertex-partition $\Vsparse, C_1, C_2, \ldots, C_k$ such that
\begin{enumerate}
    \item every $v\in \Vsparse$ is $\eta \eps^2 \Delta$-sparse;
        \label[part]{part:acd-sparse-nodes}
    \item for every $i\in [k]$, the set $C_i$ is an $(\eps,\delta)$-almost-clique.
\end{enumerate}
\end{definition}

Vertices of $\Vsparse$ are called sparse, and those of $V - \Vsparse$ are called dense. It follows almost directly from the definitions that dense vertices have sparsity proportional to their anti-degree and external-degree. For a proof, we refer readers to \cite[Lemma 6.2]{HKMT21} for the bound in terms of anti-degrees and to \cite[Lemma 3.6]{FH25} for the bound in terms of external degrees\footnote{In contrast to our definition, which includes sparse vertices as external neighbors, in \cite{HKMT21}, the authors count as external neighbors only vertices from other almost-cliques. The original
proof of \cref{fact:acd-dense-sparsity} is in 
a technical report 
\cite{HNT21}.}.

\begin{lemma}
\label{fact:acd-dense-sparsity}
If $\Vsparse, C_1, \ldots, C_k$ is an $(\eta,\eps,\delta)$-almost-clique decomposition, then every $v\in C_i$ with $e(v) \geq 2/(\eta\eps^2)$ is $\max\set{ \eta\eps^2/4 \cdot e(v), (1-3\eps)/2 \cdot a(v)}$-sparse. 
\end{lemma}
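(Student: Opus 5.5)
The plan is to bound the number of edges in $G[N(v)]$ through a degree-deficiency count, and then to treat the two terms of the maximum separately. Write $d=\deg(v)\le\Delta$. For $u\in N(v)$ let $\mathrm{out}(u)$ be the number of neighbors of $u$ outside $N[v]$. Each such $u$ has at most $\Delta-1-\mathrm{out}(u)$ neighbors inside $N(v)$, since one edge goes to $v$. Summing over $N(v)$ gives
\[ 2\,|E(G[N(v)])| \le d(\Delta-1) - \sum_{u\in N(v)} \mathrm{out}(u) \le \Delta(\Delta-1) - \sum_{u} \mathrm{out}(u), \]
so $v$ is $\frac{1}{2\Delta}\sum_u \mathrm{out}(u)$-sparse. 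Independently, $|E(G[N(v)])|\le\binom{d}{2}-\#\{\text{non-edges in }N(v)\}\le\binom{\Delta}{2}-\#\{\text{non-edges}\}$. Hence $v$ is also $\frac{1}{\Delta}\#\{\text{non-edges in }N(v)\}$-sparse. Both bounds are usable, and I would pick whichever fits each term.

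\textbf{Anti-degree term.} Every edge between $N(v)\cap C$ and $A(v)$ contributes to $\sum_u \mathrm{out}(u)$. Fix $w\in A(v)$. By \cref{part:acd-size,part:acd-neighbors} of \cref{def:AC}, $w$ has at least $|C|-1-\eps\Delta\ge(1-3\eps/2)\Delta-1$ neighbors in $C$. At most $|C-N(v)|=a(v)+1\le\eps\Delta+1$ of them lie outside $N(v)$. So $w$ has at least $(1-5\eps/2)\Delta-2\ge(1-3\eps)\Delta$ neighbors in $N(v)$. The last inequality needs $\eps\Delta\ge4$, which holds because $2/(\eta\eps^2)\le e(v)\le\eps\Delta$. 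Therefore $\sum_u\mathrm{out}(u)\ge a(v)(1-3\eps)\Delta$, and the first bound yields sparsity $(1-3\eps)a(v)/2$.

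\textbf{External-degree term.} Fix $u\in E(v)$ and count the non-edges between $u$ and $N(v)\cap C$. If $u$ lies in another almost-clique $C_j$, then $u$ has at most $\eps\Delta$ neighbors outside $C_j$, and in particular in $C$. This leaves at least $|N(v)\cap C|-\eps\Delta\ge(1-3\eps)\Delta$ non-edges, which is far more than needed. If $u\in\Vsparse$, \cref{part:acd-ext} says $u$ has at least $\delta\Delta$ non-neighbors in $C$. At least $\delta\Delta-a(v)-1$ of them lie in $N(v)$. Summing over $E(v)$ with the second bound, these non-edges are distinct pairs since each has exactly one endpoint $u\in E(v)$. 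This gives sparsity roughly $e(v)(\delta-(a(v)+1)/\Delta)$.

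I would then split into two cases. If $a(v)\ge\eta\eps^2 e(v)/2$, the anti-degree term already dominates $\eta\eps^2 e(v)/4$. Otherwise $a(v)$ is tiny, and the external count gives $\Omega(e(v))$ as long as $\delta\gtrsim\eta\eps^2$. The additive $-1$ terms and the condition $e(v)\ge2/(\eta\eps^2)$ absorb the remaining constant losses.

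\textbf{Main obstacle.} The difficulty is the sparse external neighbor $u$ when one does not want to rely on the size of $\delta$. The claimed constant $\eta\eps^2/4$ suggests the intended argument uses $u$'s own $\eta\eps^2\Delta$-sparsity instead. The route I would try first: if $u$ is adjacent to most of $N(v)\cap C$, then $N(u)$ contains almost all of $C$, which is nearly a clique. The $\eta\eps^2\Delta^2$ missing edges in $N(u)$ must then come largely from $N(u)-C$, that is, from $u$'s neighbors outside $C$. I would try to charge these to deficiencies $\mathrm{out}(\cdot)$ inside $N(v)$ at rate $\Omega(\eta\eps^2\Delta)$ per $u$. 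Making this charging work without double counting is the step I expect to be delicate. As a fallback I would appeal to the proof cited in \cite[Lemma 3.6]{FH25}.
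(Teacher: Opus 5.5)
\paragraph{Verdict: the anti-degree half is complete; the external-degree half has a gap.}
Your anti-degree half is correct and complete. Each $w\in A(v)$ sends at least $(1-3\eps)\Delta$ edges into $N(v)$, and the degree-deficiency count turns this into $(1-3\eps)a(v)/2$-sparsity. (The paper itself gives no proof of this lemma; it only cites \cite{HKMT21} and \cite{FH25}.)

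The gap is in the external-degree half. For a sparse $u\in E(v)$, your only source of non-edges is \cref{part:acd-ext}. So what you actually prove is the lemma under an extra hypothesis, roughly $\delta\ge \eta\eps^2/4+\eta\eps^3/2$. The statement allows arbitrary $\delta$, and its bound does not involve $\delta$. For small $\delta$ (say $\delta=0$) your argument gives nothing. For the paper's uses, where $\delta\in\{\eps,\eps/2\}$, your version would suffice, but it is not the stated result.

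There is also a small slip. From $a(v)\ge\eta\eps^2e(v)/2$, the anti-degree term only gives $(1-3\eps)\eta\eps^2e(v)/4$. The case threshold must be scaled by $(1-3\eps)^{-1}$.

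\paragraph{Why the proposed repair fails, and the missing idea.}
Your repair is left undone, and as sketched it makes the wrong comparison. $G[C]$ may miss $\Theta(\eps\Delta^2)$ edges, far more than the $\eta\eps^2\Delta^2$ missing edges that $u$'s sparsity guarantees. So ``$N(u)$ contains most of $C$'' does not force $u$'s missing edges into $N(u)-C$.

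The missing idea is to compare $N(u)$ with $N(v)$ directly. Let $X_u=N(v)-N(u)$ (note $u\in X_u$) and $m_v=\binom{\Delta}{2}-|E(G[N(v)])|$. Since $N(v)-X_u\subseteq N(u)$, and deleting $X_u$ from $G[N(v)]$ removes at most $|X_u|\Delta$ edges, the $\eta\eps^2\Delta$-sparsity of $u$ gives
\[
\binom{\Delta}{2}-\eta\eps^2\Delta^2 \;\ge\; |E(G[N(u)])| \;\ge\; \binom{\Delta}{2}-m_v-|X_u|\Delta .
\]
Hence $m_v+|X_u|\Delta\ge\eta\eps^2\Delta^2$ for every sparse $u\in E(v)$. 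Now split into two cases.
\begin{itemize}
\item Some sparse $u$ has $|X_u|\le\eta\eps^2\Delta/2+1$. Then $m_v\ge\eta\eps^2\Delta^2/2-\Delta\ge\eta\eps^2e(v)\Delta/4$. This uses $e(v)\le\eps\Delta$, and the fact that $2/(\eta\eps^2)\le e(v)\le\eps\Delta$ makes $\eta\eps^2\Delta\ge 2/\eps$, which absorbs the $-\Delta$.
\item Every $u\in E(v)$ has more than $\eta\eps^2\Delta/2$ non-neighbours in $N(v)-u$. This is automatic for $u$ in another almost-clique, by \cref{part:acd-neighbors}. Sum over $E(v)$; each non-edge of $G[N(v)]$ is counted at most twice. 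This gives more than $\eta\eps^2e(v)\Delta/4$ non-edges.
\end{itemize}
Either way $v$ is $\eta\eps^2e(v)/4$-sparse. The argument never uses $\delta$ and needs no case split on $a(v)$.
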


A semi-streaming algorithm for computing an almost-clique decomposition exists, as given by {\cite[Proposition 3.5]{AKM23}}.

\begin{proposition}
    \label{lem:compute-acd-streaming}
    There are universal constants $\eta, \eps_0 \in (0,1)$ such that the following holds. For any $0 < \eps < \eps_0$, there is a streaming algorithm using $O\paren{ \eps^{-2}n\log n }$ space
    that computes an $(\eta,\eps,\eps)$-almost-clique decomposition.
\end{proposition}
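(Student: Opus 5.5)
The plan is to start from a decomposition that satisfies only the first two parts of \cref{def:AC}, and then \emph{repair} it so that part (\ref{part:acd-ext}) also holds. For the first step we use the classical streaming construction of \cite{ACK19,AKM23}, run with a smaller parameter $\eps' = c\eps$ for a small constant $c$. During the pass:
\begin{enumerate}
    \item every vertex is sampled independently with probability $p = \Theta(\log n/(\eps'^2\Delta))$;
    \item we store all edges with at least one sampled endpoint, which takes $O(np\Delta) = O(\eps^{-2}n\log n)$ space in expectation and \whp;
    \item from the stored edges we estimate $|N(u)\cap N(v)|$ for each edge $uv$ up to additive error $\eps'^2\Delta$, by Chernoff and a union bound over the $\poly(n)$ pairs.
\end{enumerate}
Exactly as in \cite{ACK19,HSS18}, we call an edge a \emph{friend edge} if $|N(u)\cap N(v)|\gtrsim(1-\eps')\Delta$, and a vertex \emph{dense} if most of its incident edges are friend edges. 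Taking the connected components of dense vertices under friend edges yields sets $C_i$ satisfying parts (\ref{part:acd-size}) and (\ref{part:acd-neighbors}) with parameter $\eps'$. The remaining vertices are $\eta\eps'^2\Delta$-sparse for a universal $\eta$. These are standard arguments, which I would import rather than redo.

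The repair step is the new part: while some vertex $v\notin C$ has $|C-N(v)|<\eps\Delta$, move $v$ into $C$. The sample stored during the pass suffices to decide this. For each $v$ and $C$ we estimate $|N(v)\cap C|$ by $p^{-1}|N(v)\cap C\cap S|$, where $S$ is the sampled set. This is accurate to $\pm\eps'\Delta$ \whp, so we use a threshold between $\eps\Delta/2$ and $\eps\Delta$; $\delta$ may then be taken as $\eps/2$, absorbed by rescaling. Three checks make the repair well defined and valid.
\begin{itemize}
    \item \emph{Each vertex is pulled toward at most one clique.} If $v$ has $\geq(1-2\eps)\Delta$ neighbors in each of two disjoint cliques, then $\deg(v)>\Delta$, a contradiction. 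So no vertex is claimed twice, and removing $v$ from its old clique (if it was dense) is not needed when $\eps$ is small. Indeed, a vertex of $C_j$ has at most $\eps'\Delta$ neighbors outside $C_j$, so it is never a candidate for another clique.
    \item \emph{Few vertices join each clique.} Count edges between $C$ and the outside. Each vertex of $C$ has at most $\eps'\Delta$ external neighbors, so there are at most $\eps'\Delta|C|$ such edges. Each added vertex contributes at least $(1-2\eps)\Delta$ of them. Hence at most $\approx\eps'(1+\eps')\Delta\leq 2c\eps\Delta$ vertices join $C$, and part (\ref{part:acd-size}) survives with $\eps$ once $c$ is small.
    \item \emph{Part (\ref{part:acd-neighbors}) survives.} Old members gain at most $2c\eps\Delta$ new anti-neighbors and lose external neighbors. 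A new member $v$ has $a(v)<\eps\Delta$, and $e(v)\leq\Delta-|N(v)\cap C|\leq \Delta-(|C|-\eps\Delta)\leq 2\eps\Delta$; a slightly tighter choice of the threshold makes this at most $\eps\Delta$.
\end{itemize}
Sparse vertices that are moved only shrink $\Vsparse$, so part (\ref{part:acd-sparse-nodes}) is unaffected. Because additions never make a vertex outside $C$ closer to $C$, one round of additions suffices, and no cascade occurs.

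The main obstacle is the constant bookkeeping, with estimation errors layered on top. Three thresholds must interlock: the friend-edge threshold, the joining threshold $\eps\Delta$, and the final $\delta$. They must be arranged so that a vertex left outside truly has $\geq\delta\Delta$ anti-neighbors in $C$ despite $\pm\eps'\Delta$ sampling error, while an added vertex still satisfies $e(v),a(v)\leq\eps\Delta$. I would first fix $\eps'=\eps/100$ and verify each inequality with slack, accepting that $\delta$ is a constant fraction of $\eps$, which is then normalized to $\delta=\eps$ by renaming constants.
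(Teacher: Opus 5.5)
Your repair argument is sound up to its last step, and it takes a more explicit route than the paper, which gives no argument of its own and simply imports \cite[Proposition 3.5]{AKM23}. The pieces that work:
\begin{itemize}
\item The absorption round is well defined: a vertex is pulled toward at most one $C$, and members of other almost-cliques are never candidates.
\item At most about $\eps'\Delta$ vertices join each $C$, and no cascade occurs.
\item With $\eps'=\eps/100$ and joining threshold $\eps\Delta/2$ on the estimate, you do get a $(10^{-4}\eta,\eps,\delta)$-decomposition with $\delta\approx 0.49\eps$.
\end{itemize}
Two small repairs are needed. First, at sampling rate $p=\Theta(\log n/(\eps'^2\Delta))$ the common-neighbourhood estimates are accurate only to $\pm\eps'\Delta$, not $\pm\eps'^2\Delta$; this is harmless, since the friend-edge threshold only needs $\Theta(\eps')\Delta$ accuracy. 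Second, each $C$ is computed from the sample $S$, so $p^{-1}|N(v)\cap C\cap S|$ is not a sum of independent indicators over a fixed set. Use a fresh independent sample for the repair step; the space bound is unchanged.

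The step that fails is the last one, normalizing $\delta$ to $\eps$ ``by renaming constants''. In \cref{def:AC}, $\eps$ is an upper bound in parts (\ref{part:acd-size}) and (\ref{part:acd-neighbors}), while $\delta$ is a lower bound in part (\ref{part:acd-ext}). An $(\eps,c\eps)$-almost-clique with $c<1$ is an $(\eps'',\eps'')$-almost-clique only if both $\eps''\ge\eps$ and $\eps''\le c\eps$ hold. Rescaling $\eta$ does not affect this at all.

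No choice of thresholds rescues it either. With $\delta=\eps$ there is zero margin between ``inside'' and ``outside'', and such a decomposition need not exist. Let $C_0$ be a clique on $\Delta-3$ vertices. Let $v$ be adjacent to all but $b=\lceil\eps\Delta\rceil-1$ vertices of $C_0$, and to $b+3$ further pendant vertices.
\begin{itemize}
\item If $\eta\eps^2\Delta>4$, no vertex of $C_0$ is $\eta\eps^2\Delta$-sparse.
\item Part (\ref{part:acd-neighbors}) then forces $C_0$ into a single almost-clique $C$, and the only possibilities are $C\in\{C_0,C_0+v\}$.
\item $C=C_0+v$ fails part (\ref{part:acd-neighbors}), since $|N(v)-C|=b+3>\eps\Delta$.
\item $C=C_0$ fails part (\ref{part:acd-ext}) with $\delta=\eps$, since $v\notin C$ has $|C-N(v)|=b<\eps\Delta$.
\end{itemize}
So you should claim what your argument actually establishes: an $(\eta',\eps,c\eps)$-decomposition for a constant $c\in(0,1)$. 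That is also all the paper uses downstream. It only ever needs $\delta=\Omega(\eps)$, and \cref{lem:RT} already degrades $\delta$ to $\eps/2$.
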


\subsection{Palette Sparsification}
Our streaming algorithm uses the palette sparsification technique from \cite{ACK19} to color some almost-cliques. Assadi, Chen and Khanna view the coloring of an almost-clique as a perfect matching in a bipartite graph, where one side of the bipartition represents vertices and the other side the colors. An edge exists between a vertex and a color in this graph if the color is available to the vertex with respect to the current coloring. When we apply palette sparsification, each edge is retained in the graph independently with some probability, and they prove that, under the right assumptions, it is possible to match all vertices with high probability.

We use the following formulation of the statement, from \cite[Lemma 3.10]{AKM23}, in a black-box manner. We refer readers to \cite[Section 3.5]{ACK19} for a detailed proof. We say that a matching is an $\fL$-perfect matching iff it matches all vertices of $\fL$.

\begin{lemma}
    \label{lem:palette-graph}
    Let $\fG = (\fL \cup \fR, E)$ be a bipartite graph with the following properties:
    \begin{enumerate}
        \item $k = |\fL|$ and $k \leq |\fR| \leq 2k$;
        \item every vertex $v\in \fL$ has $\deg(v, \fR) \geq 2k/3$; and
        \item for every set $S \subseteq \fL$ with $|S| \geq k/2$ we have that $\sum_{v\in S} \deg(v, \fR) \geq |S| \cdot k - k/4$.
    \end{enumerate}
    For any $\delta \in (0,1)$, the subgraph obtained by sampling each edge in $\fG$ independently with probability at least $\frac{20}{k}\paren*{ \log k + \log(1/\delta)}$ contains an $\fL$-perfect matching with probability at least $1 - \delta$.
\end{lemma}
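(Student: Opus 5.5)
The plan is to prove \cref{lem:palette-graph} through Hall's theorem, in its K\H{o}nig vertex-cover form, plus a union bound over all potential obstructions. Let $\fG_p$ denote the sampled subgraph and $p \geq \frac{20}{k}(\log k + \log(1/\delta))$ the sampling probability. If $\fG_p$ has no $\fL$-perfect matching, Hall's condition fails. So there are $S \subseteq \fL$ and $T \subseteq \fR$ with $|T| = |S| - 1$ such that \emph{no} sampled edge joins $S$ to $\fR - T$. We may take $T \subseteq N_\fG(S)$, since vertices outside $N_\fG(S)$ are never useful.

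For a fixed pair $(S,T)$, the edges of $\fG$ between $S$ and $\fR - T$ are all dropped independently. Hence the pair is a witness with probability at most $(1-p)^{e_\fG(S, \fR - T)} \leq \exp(-p \cdot e_\fG(S,\fR-T))$. The proof then has two steps: lower-bound $e_\fG(S,\fR-T)$ using properties (2) and (3), and compare the result with the number of choices of $(S,T)$ with $|S| = s$.

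\textbf{Small $S$} ($s \leq k/2$). By property (2), each $v\in S$ has at least $2k/3 - (s-1) \geq k/6$ neighbors outside $T$. So $e_\fG(S,\fR - T) \geq sk/6$, and the failure probability is at most $\exp(-\tfrac{20}{6} s(\log k + \log(1/\delta)))$. The number of pairs is at most $\binom{k}{s}\binom{2k}{s-1} \leq (2ek)^{2s}$. Since $k^{-10s/3}\delta^{10s/3}$ beats $(2ek)^{2s}$ with room to spare, summing over $s$ gives a contribution of at most $\delta/2$ (for $k$ above a small constant; tiny $k$ are trivial because then $p \geq 1$).

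\textbf{Large $S$} ($s > k/2$). Property (3) gives
\[
e_\fG(S,\fR - T) \;\geq\; \sum_{v\in S}\deg(v,\fR) - s(s-1) \;\geq\; s(k - s + 1) - k/4 \;\geq\; \tfrac{k}{4}(k-s+1),
\]
where the last inequality uses $s \geq k/2$ and $k - s + 1 \geq 1$. Thus the failure probability is about $\exp(-5(k-s+1)(\log k + \log 1/\delta))$, which decays in the ``co-size'' $k - s + 1$ rather than in $s$. The union bound must therefore be parametrized by that co-size. $S$ is determined by its complement in $\fL$, which gives $\binom{k}{k-s} \leq k^{k-s}$ choices.

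The \textbf{main obstacle} I expect is counting $T$. Naively $\binom{2k}{s-1}$ can be exponential in $k$ even when $k - s$ is constant, and that would overwhelm the bound. I would first refine the witness: take $S$ to be a \emph{maximal} Hall violator (equivalently, take a minimum vertex cover in K\H{o}nig's theorem). Then $T = N_{\fG_p}(S)$ is forced and $\fG_p[S\cup T]$ has a matching saturating $T$. I would then describe $T$ by its complement inside $N_\fG(S)$, of size at most $|N_\fG(S)| - s + 1$. Alternatively, I would charge the choice of $T$ to sampled edges: each vertex of $T$ must receive a sampled edge from $S$, which costs a factor $p\cdot s$ per vertex and roughly cancels the $\binom{|N_\fG(S)|}{|T|}$ count. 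The aim is to bound the number of relevant configurations by $(2ek)^{O(k-s+1)}$. Against $\exp(-5(k-s+1)\log(k/\delta))$ the large-$S$ terms then also sum to at most $\delta/2$, which proves the lemma. If this charging turns out to be delicate, I would fall back on the argument of \cite[Section 3.5]{ACK19}, which handles exactly this regime.
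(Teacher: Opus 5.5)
\textbf{There is a genuine gap, in the large-$S$ case.} Your set-up (Hall witnesses $(S,T)$ with $|T|=|S|-1$, then a union bound) is the right one. The small-$S$ case is correct, and so is the bound $e_\fG(S,\fR - T)\ge \frac{k}{4}(k-s+1)$ for $s>k/2$. But the large-$S$ case is where the whole difficulty of the lemma sits, and neither of your remedies closes it.

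\emph{Describing $T$ by its complement does not shrink the count.} The set $N_\fG(S)$ can have size $|\fR|=2k$ even for $S=\fL$. For example, make every $v\in\fL$ adjacent to a common set of $k$ right vertices plus one private right vertex; this satisfies (1)--(3). With $s=k$ there are $\binom{2k}{k-1}$ candidate sets $T$, against a per-pair bound of only $(\delta/k)^{5}$. The complement of $T$ in $N_\fG(S)$ has size $k+1$, not $O(k-s+1)$, so it gives the same count. Passing to a maximal violator does not change which deterministic sets $T$ you must union-bound over.

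\emph{Charging a factor $p\cdot s$ per vertex of $T$ is not a cost.} Since $s>k/2$, you have $ps > 10(\log k+\log(1/\delta))>1$. A right vertex with many neighbors in $S$ receives a sampled edge almost surely. Even for right vertices with a single neighbor in $S$, the saving $\binom{|L|}{\ell}p^{\ell}$ is not small when $\ell\le kp$. So the target of ``$(2ek)^{O(k-s+1)}$ relevant configurations'' is false. The number of sets $T$ is genuinely exponential and must be paid for by a probability exponent that grows with more than $k-s+1$.

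\textbf{The missing idea.} Split $\fR$ by degree into $S$, and pay with the deterministic edge surplus from property (3) rather than with sampled edges. Let $H=\{r\in\fR : |N_\fG(r)\cap S|\ge k/4\}$ and $L=\fR - H$. For a candidate $T$, put $\ell=|T\cap L|$ and $h=|H - T|$.
\begin{itemize}
\item Heavy vertices outside $T$ give $e_\fG(S,\fR - T)\ge hk/4$.
\item Light vertices inside $T$ waste capacity: $e_\fG(S,T)\le s(s-1-\ell)+\ell k/4$. Property (3) then yields $e_\fG(S,\fR - T)\ge s(k-s+1)-k/4+\ell(s-k/4)\ge \frac{k}{4}(k-s+1+\ell)$ for $s>k/2$.
\end{itemize}
In the example above, each private vertex put into $T$ pushes a common vertex out of $T$, which is exactly this trade-off.

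Averaging the two bounds, a fixed pair is a witness with probability at most $(\delta/k)^{2.5(k-s+1+\ell+h)}$. The number of pairs with given $(s,\ell,h)$ is at most $k^{k-s}(2k)^{\ell+h}$. The resulting triple geometric series is $O((\delta/k)^{2.5})$, which finishes the large case.

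\textbf{Comparison with the paper.} The paper does not prove this lemma at all. It imports it as a black box from \cite[Lemma 3.10]{AKM23} and refers to \cite[Section 3.5]{ACK19}. Your fallback therefore matches what the paper does, but as a self-contained proof your proposal is missing precisely this counting step.
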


As observed by \cite{AKM23}, every almost-clique $C$ such that $G[C]$ contains at least $10^7 \eps\Delta$ anti-edges (they call them holey) can be colored using palette sparsification. Note that the extension exists with high probability regardless of the coloring outside $C$. 
\begin{lemma}
    \label{lem:streaming-color-trouée}
    Let $C$ be an almost-clique inducing at least $10^7\eps\Delta$ anti-edges and let $\rL(v)$ be a list of $O_\eps(\log n)$ random colors in $[\Delta-1]$ for each $v\in C$. For any partial $(\Delta-1)$-coloring $\col$ outside $C$ (i.e., $C \cap \dom\col = \emptyset$), there exists, \whp, an extension $\psi$ of the coloring to $C$ such that $\psi(v) \in \rL(v)$ for every $v\in C$.
\end{lemma}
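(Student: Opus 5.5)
We reduce \cref{lem:streaming-color-trouée} to \cref{lem:palette-graph}, essentially as in \cite[Lemma 3.11]{AKM23} and \cite[Section 3.5]{ACK19}. Fix the outside coloring $\col$; it does not depend on the lists $\rL(v)$ for $v\in C$, so the random lists stay independent of $\col$. We build the palette graph $\fG$ with $\fL = C$, with $\fR = [\Delta-1]$ minus a few colors to be chosen below, and with an edge $vc$ whenever $c\in L_\col(v)$. Including each color of $[\Delta-1]$ in $\rL(v)$ independently with probability $p = \Theta_\eps(\log n/\Delta)$ samples each edge of $\fG$ independently with probability $p$, and $|\rL(v)| = O_\eps(\log n)$ \whp. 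An $\fL$-perfect matching in the sampled graph is exactly a proper coloring $\psi$ of $C$ that extends $\col$ and satisfies $\psi(v)\in\rL(v)$. Edges inside $C$ are respected because a matching gives distinct colors, and edges to outside are respected because the matched color lies in $L_\col(v)$.

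The problem is that $|C|$ may exceed $\Delta-1$, so we first shrink $\fL$. Since $G[C]$ has at least $10^7\eps\Delta$ anti-edges and every vertex has anti-degree at most $\eps\Delta$, we can greedily pick a large anti-matching. Each chosen anti-edge blocks at most $2\eps\Delta$ others, so a maximal anti-matching has size $M \ge 10^7\eps\Delta/(2\eps\Delta+1)$. That bound is only a constant, however, so instead we proceed as \cite{ACK19} does. We let only an $\fL$ of size $k$ need distinct colors: pairs of anti-neighbors in the anti-matching are meant to share a color, and we handle them via a merged vertex whose available colors are $L_\col(u)\cap L_\col(w)$.

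The plan is to take $k = |C| - M$ vertices in $\fL$, namely unmatched vertices and merged pairs, with $M \ge |C|-\Delta+1$ plus extra slack, so that $k \le |\fR|$. The merged vertices each sample their own list; concretely we use only $\rL(u)$ for the pair and color both endpoints the same. We then verify the degree conditions. The number of colors unavailable to $v$ is at most the number of colored outside neighbors, $e(v) \le \eps\Delta$. So $\deg(v,\fR) \ge |\fR| - 2\eps\Delta \ge 2k/3$. For condition (3), $\sum_{v\in S}(|\fR|-\deg(v,\fR)) \le \sum_{v\in S} 2e(v)$, and we need this to be at most $|S|(|\fR|-k)+k/4$. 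This is where the large number of anti-edges enters: it lets us take $|\fR|-k$ large, on the order of $\sum e(v)/\Delta$. In \cite{AKM23} this uses the density relation, which says that few anti-edges force $|C|\le \Delta+$(anti-edges)/$\Delta$-type bounds. Equivalently, by counting edges leaving $C$, the total external degree is controlled by the missing internal degree, since each vertex has degree at most $\Delta$.

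\textbf{Main obstacle.} The hard step is condition (3) when $C$ is large and the anti-edges are concentrated. We must find enough disjoint anti-edges, and thus enough color reuse, to get $|\fR|-k \gtrsim \eps\Delta$, while the holey assumption only gives about $10^7\eps\Delta$ anti-edges. The plan is to follow \cite{AKM23}: either use a random sampled anti-matching of linear-in-$\eps\Delta$ size, with the constant $10^7$ absorbing the losses, or otherwise note that the anti-edges concentrate on few vertices. Those few vertices are then sparse by \cref{fact:acd-dense-sparsity}, which contradicts the decomposition. Finally, applying \cref{lem:palette-graph} with failure probability $\delta = n^{-c}$ and a union bound over almost-cliques completes the proof.
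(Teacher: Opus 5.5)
There is a genuine gap in how you use the anti-edges. You merge each anti-edge $\{u,w\}$ of your greedy anti-matching into one vertex of $\fL$, and then color both endpoints with a color from $\rL(u)$ only. So $\psi(w)$ need not lie in $\rL(w)$, which is exactly what the lemma forbids ($\psi(v)\in\rL(v)$ for \emph{every} $v\in C$). This matters algorithmically. Vertices of a holey almost-clique may have $a(v)+e(v)>4\rho$, so $N_G(w)$ is not recovered. A conflict between $w$ and a neighbor $x$ is then visible only if $wx$ is in the sparsified graph, i.e.\ if $\rL(w)\cap\rL(x)\neq\emptyset$.

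Giving the merged vertex the list $\rL(u)\cap\rL(w)$ does not rescue the argument. Each color would then survive with probability $p^2=\Theta_\eps(\log^2 n/\Delta^2)$, far below the rate $\Theta(\log n/k)$ that \cref{lem:palette-graph} requires.

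The missing idea is the \emph{colorful matching} of \cite{ACK19,AKM23}, on which the proof the paper cites rests. Using one batch of the random lists, you show that with high probability $G[C]$ contains $M=\Omega(\bar m/(\eps\Delta))$ disjoint anti-edges $\{u,w\}$, where $\bar m\ge 10^7\eps\Delta$ is the number of anti-edges. Each of these is assigned its own color from $\rL(u)\cap\rL(w)\cap L_\col(u)\cap L_\col(w)$. A fixed anti-edge is colorful only with probability about $|\rL(u)|\,|\rL(w)|/\Delta$, so the matching must be selected among all anti-edges after the lists are revealed. This probabilistic step is the real content of the lemma. Only afterwards do you delete these $2M$ vertices and $M$ colors and apply \cref{lem:palette-graph} to the rest, using a second, independent batch of lists.

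Your diagnosis of the ``main obstacle'' is also off. Take $k=|C|-2M$ and $|\fR|=\Delta-1-M$, and use $e(v)\le\Delta+1-|C|+a(v)$. Then condition (3) of \cref{lem:palette-graph} is implied by $|S|(M-2)+k/4\ge\sum_{v\in S}a(v)$. So it suffices that $M\ge 2+O(\bar m/\Delta)$, together with $M\ge|C|-\Delta+1$ so that $k\le|\fR|$. Both follow from the bound $M\ge\bar m/(2\eps\Delta+1)$, which you dismissed as ``only a constant'', once $\bar m\ge C\eps\Delta$ for a moderate constant $C$. You never need $|\fR|-k\gtrsim\eps\Delta$; the large constant $10^7$ serves to absorb the constant-factor loss of the colorful matching relative to a maximum anti-matching.

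Finally, your fallback argument is invalid. You argue that anti-edges concentrated on few vertices make those vertices sparse and so contradict the decomposition. But \cref{fact:acd-dense-sparsity} says precisely that dense vertices are sparse in proportion to their anti-degree, which is perfectly consistent with the decomposition.
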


We stress that \cref{lem:streaming-color-trouée} assumes that $C$ is uncolored.
Since the proof is almost verbatim that of \cite[Lemma 5.10]{AKM23} or \cite{ACK19}, we skip details and refer interested readers to \cite[Appendix B.3]{AKM23}.

\subsection{Concentration}
We use the classic Chernoff Bound on sums of independent random variables:
\begin{proposition}[{Chernoff bounds, \cite[Theorem 1.10.1 and 1.10.5]{Doerr2020}}]\label{lem:chernoff}
Let $X_1, \ldots, X_n$ be a family of independent random variables with values in $[0,1]$, and let $X = \sum_{i \in [n]} X_i$.
Suppose $\mu_{\mathsf{L}} \leq \Exp{X} \leq \mu_{\mathsf{H}}$, then
\begin{align}
  \text{for all } \delta \geq 0, \text{ we have that }
  &&
  \Prob*{ X > (1+\delta) \mu_{\mathsf{H}} }
  & \le \exp\paren*{-\frac{\delta^2}{2 + \delta}\mu_{\mathsf{H}}}\ .
  \label{chernoff-mult-upperbound}
\\
  \text{for all } \delta \in(0,1), \text{ we have that }
  &&
  \Prob*{ X < (1-\delta) \mu_{\mathsf{L}} }
  & \le \exp\paren*{-\frac{\delta^2}{2}\mu_{\mathsf{L}}}\ .
  \label{chernoff-mult-lowerbound}
\end{align}
\end{proposition}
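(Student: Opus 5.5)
The proof is the standard moment-generating-function (Cramér–Chernoff) argument, run with $\mu_{\mathsf{H}}$ and $\mu_{\mathsf{L}}$ in place of $\mu = \Exp{X}$. Monotonicity in $\mu$ lets us replace $\mu$ by these bounds.

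\emph{Step 1: MGF bound.} Let $p_i = \Exp{X_i}$, so $\mu = \sum_i p_i$. Since $x \mapsto e^{tx}$ is convex and $X_i \in [0,1]$, we have $e^{tX_i} \le 1 - X_i + X_i e^t$. Hence
\[
\Exp{e^{tX_i}} \le 1 + p_i(e^t - 1) \le \exp\paren*{p_i(e^t-1)},
\]
and by independence $\Exp{e^{tX}} \le \exp\paren*{\mu(e^t-1)}$ for every real $t$.

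\emph{Step 2: upper tail.} We may assume $\mu_{\mathsf{H}} > 0$ and $\delta > 0$, since otherwise the bound is at least $1$ and holds trivially. Take $t = \ln(1+\delta) > 0$. Then $e^t - 1 > 0$, so $\exp(\mu(e^t-1)) \le \exp(\mu_{\mathsf{H}}(e^t-1))$ because $\mu \le \mu_{\mathsf{H}}$. Markov's inequality applied to $e^{tX}$ gives
\[
\Prob*{X > (1+\delta)\mu_{\mathsf{H}}} \le \paren*{\frac{e^\delta}{(1+\delta)^{1+\delta}}}^{\mu_{\mathsf{H}}} = \exp\paren*{-\mu_{\mathsf{H}}\,h(\delta)},
\]
where $h(\delta) = (1+\delta)\ln(1+\delta) - \delta$. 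It remains to prove $h(\delta) \ge \delta^2/(2+\delta)$. Both sides vanish at $\delta = 0$, so compare derivatives. The left-hand derivative is $\ln(1+\delta)$ and the right-hand derivative is $(\delta^2+4\delta)/(2+\delta)^2$. We use the classical bound $\ln(1+\delta) \ge 2\delta/(2+\delta)$, which itself follows by differentiating once more. Then
\[
\frac{2\delta}{2+\delta} = \frac{2\delta^2 + 4\delta}{(2+\delta)^2} \ge \frac{\delta^2 + 4\delta}{(2+\delta)^2},
\]
which gives the claim.

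\emph{Step 3: lower tail.} Take $t = \ln(1-\delta) < 0$ for $\delta \in (0,1)$. Now $e^t - 1 < 0$, so $\exp(\mu(e^t-1)) \le \exp(\mu_{\mathsf{L}}(e^t-1))$ because $\mu \ge \mu_{\mathsf{L}}$. Markov's inequality applied to $e^{tX}$, on the event $X < (1-\delta)\mu_{\mathsf{L}}$, gives
\[
\Prob*{X < (1-\delta)\mu_{\mathsf{L}}} \le \paren*{\frac{e^{-\delta}}{(1-\delta)^{1-\delta}}}^{\mu_{\mathsf{L}}}.
\]
It remains to show $g(\delta) = (1-\delta)\ln(1-\delta) + \delta - \delta^2/2 \ge 0$. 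We have $g(0) = 0$ and $g'(\delta) = -\ln(1-\delta) - \delta \ge 0$, which finishes the proof.

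\emph{Main obstacle.} The only non-mechanical points are the two elementary calculus inequalities for $h$ and $g$, and remembering to use the sign of $e^t - 1$ when replacing $\mu$ by $\mu_{\mathsf{H}}$ or $\mu_{\mathsf{L}}$. Everything else is routine.
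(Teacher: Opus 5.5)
Your proof is correct; it is the standard moment-generating-function (Cram\'er--Chernoff) argument behind the bounds the paper simply cites from Doerr. Using the sign of $e^t-1$ to replace the expectation by $\mu_{\mathsf{H}}$ or $\mu_{\mathsf{L}}$ directly in the MGF bound handles the one-sided expectation estimates cleanly. The only unstated point is that deducing the final bound in Step 3 from $g(\delta)\ge 0$ requires $\mu_{\mathsf{L}}\ge 0$; when $\mu_{\mathsf{L}}<0$ the event $X<(1-\delta)\mu_{\mathsf{L}}$ is empty because $X\ge 0$, so the bound holds trivially.
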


Let $Y_1, \ldots, Y_n$ be boolean random variables (with values in $\set{0,1}$). We say they form a \emph{read-$k$ family} if they can be expressed as a function of independent random variables $X_1, \ldots, X_m$ such that each $X_j$ influences at most $k$ variables $Y_i$. More formally, there are sets $P_i \subseteq [m]$ for each $i \in [n]$ such that: (1) for each $i\in[n]$, the variable $Y_i$ is a function of $\set{X_j: j \in P_i}$ and (2) $|\set{i: j\in P_i}| \le k$ for each $j \in [m]$.

\begin{proposition}[read-$k$ bound, \cite{GLSS15}]
\label{lem:k-read}
Let $Y_1, \ldots, Y_n$ be a read-$k$ family of boolean variables, and let $Y$ be their sum.
Then, for any $\delta > 0$,
\[
  \Prob*{ \abs*{Y - \Exp{Y}} > \delta n}
  \le 2\exp\paren*{-\frac{2\delta^2 n}{k}} \ .
\]
\end{proposition}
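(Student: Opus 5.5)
The statement immediately above is the read-$k$ concentration bound (\cref{lem:k-read}), which is cited from \cite{GLSS15}. I would not reprove it from scratch. Still, here is the argument I would follow to justify it.

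\textbf{Approach.} The plan is the Gavinsky--Lovett--Saks--Srinivasan approach through the moment generating function, reduced to independent-variable bounds via a Shearer-type (fractional covering) inequality. Write $p_i = \Exp{Y_i}$ and $p = \frac1n\sum_i p_i$. By symmetry (replace $Y_i$ with $1-Y_i$, which is still read-$k$), it suffices to bound the upper tail $\Prob{Y \ge (p+\delta)n}$.

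\textbf{Step 1: product moment bound.} The first key step is the inequality
\[
\Exp*{\prod_{i} Y_i} \le \prod_i \Exp{Y_i}^{1/k}.
\]
More generally, I need it for any subset $S$ of the $Y_i$ and for products of nonnegative functions $f_i$ of the variables $\{X_j : j\in P_i\}$. I would prove it by induction on the number $m$ of underlying independent variables. Condition on $X_1,\dots,X_{m-1}$, and apply H\"older's inequality over $X_m$ to the at most $k$ factors that depend on $X_m$. This turns each of those factors into an $L^k$-norm in $X_m$. The resulting functions again form a read-$k$ family on $m-1$ variables, so the induction closes. This is the generalized H\"older/Shearer inequality (Finner's inequality).

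\textbf{Step 2: from moments to tails.} Next I would turn Step 1 into a tail bound. Pick a random subset $T \subseteq [n]$ that includes each index independently with probability $q$. The event $\{Y \ge (p+\delta)n\}$ forces $\Prob_T[\text{all } Y_i=1 \text{ for } i\in T]$ to be at least $(1-q)^{n-Y}$, which is exponentially large in $n$. Conversely, averaging Step 1 over $T$ gives
\[
\Exp{(1-q+qY_1)\cdots(1-q+qY_n)} \le \prod_i (1-q+q\,p_i)^{1/k}\cdot(\text{correction}).
\]
I would apply Markov's inequality to the quantity $\Exp_T\bigl[\prod_{i\in T}Y_i\bigr]$ and optimise over $q$, as in the Chernoff proof. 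This gives an exponent of $-n\,\mathrm{KL}(p+\delta\,\|\,p)/k$, and Pinsker's inequality then yields $2\delta^2 n/k$. The factor $2$ in front comes from the union with the lower tail.

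\textbf{Main obstacle.} The hard part is Step 2. The naive exponential-moment argument loses the $1/k$ in the wrong place, so the random-subset averaging has to be arranged carefully in order to obtain the KL-divergence exponent divided by $k$. Since the paper uses the result only as a black box, I would ultimately defer to \cite{GLSS15} for this computation.
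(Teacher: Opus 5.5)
The paper gives no proof of this proposition; it imports it as a black box from \cite{GLSS15}, so deferring to that reference matches how the paper uses it. As a proof sketch, your Step~1 is sound: the induction with H\"older in $X_m$ proves $\Exp{\prod_i f_i} \le \prod_i \Exp{f_i^k}^{1/k}$ for nonnegative read-$k$ functions. Step~2, however, has a genuine gap, and your diagnosis of the obstacle is backwards.

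Drop the unspecified ``correction'' from your display. What remains, $\Exp{\prod_i(1-q+qY_i)} \le \prod_i (1-q+qp_i)^{1/k}$, is exactly what Step~1 gives after using $\Exp{f_i^k} \le \Exp{f_i}$ for $[0,1]$-valued $f_i$, and it is too weak. Since $\prod_i (1-q+qY_i) = (1-q)^{n-Y}$, Markov gives $\Prob{Y \ge (p+\delta)n} \le (1-q+qp)^{n/k}(1-q)^{-(1-p-\delta)n}$. Here the $1/k$ acts only on the numerator. For every $\delta \le (1-p)(1-1/k)$ the right-hand side is nondecreasing in $q$, so the best choice is $q=0$ and the bound is trivial. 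All the content is hidden in the ``correction''.

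The missing step is to keep the $L^k$-norms from Step~1 exactly rather than discard them. Your random-subset averaging is literally the exponential-moment method with $e^{-\lambda} = 1-q$. Applying Step~1 to $f_i = e^{\lambda Y_i}$ gives
\[
\Exp{e^{\lambda Y}} \le \prod_i \paren*{1-p_i+p_i e^{k\lambda}}^{1/k} \le \paren*{1-p+p e^{k\lambda}}^{n/k},
\]
where the second step uses concavity of $\log$. Markov's inequality with $\mu = k\lambda$ then gives
\[
\Prob{Y \ge (p+\delta)n} \le \paren*{e^{-\mu(p+\delta)}\paren*{1-p+pe^{\mu}}}^{n/k}.
\]
This is the $k$-th root of the usual Chernoff expression, and optimizing over $\mu$ yields $e^{-n\,\mathrm{KL}(p+\delta\,\|\,p)/k}$. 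Pinsker and the symmetric lower tail then give the stated $2\exp(-2\delta^2 n/k)$.

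So the exponential-moment argument loses nothing, provided you apply the functional form of Step~1 rather than its Boolean specialization. In your parametrization, the correct factor is $\|1-q+qY_i\|_k = \paren*{(1-q)^k(1-p_i)+p_i}^{1/k}$, and substituting $(1-q)^k = e^{-\mu}$ recovers the same bound.
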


 \section{The Streaming Algorithm}
\label{sec:streaming-alg}

In this section, we describe the information we collect during the streaming pass. This amounts to the almost-clique decomposition (\cref{lem:compute-acd-streaming}), the sparsified graph (\cref{sec:streaming-palette-sampling}) and some linear sketches to recover edges in the densest regions of the graph (\cref{sec:streaming-sparse-recovery}). Overall, we use $\Ot{n}$ space with high probability. In \cref{sec:structural-decomposition}, we introduce structural definitions essential to our coloring algorithm.

\paragraph{Parameters.}
Let us define some parameters:
\begin{eqnarray*}
    &\alpha = 150 \quad&\text{a conveniently large constant,}\\
    &\eps = 10^{-8} \quad&\text{the parameter of the almost-clique decomposition in \cref{lem:compute-acd-streaming,}}\\
    &\rho = \frac{c\log n}{\eps^2} \quad&     \text{{the sparsity threshold above which palette sparsification works,}}
\end{eqnarray*}
where the constant $c$ is large enough to have $\exp(-\Omega(\eps^2\rho)) \leq 1/\poly(n)$. We use the dependency on $\eps$ of $\rho$ in \cref{sec:coloring-H-not-critical}, \cref{eq:sparse-slackgen} to ensure that all dense vertices have two same-colored neighbors. At a high-level, the parameter $\rho$ is chosen so that we can color nodes with greater sparsity using palette sparsification, and those with less sparsity using information recovered from vectors with $O(\rho)$ non-zero values. By choosing $\rho = \Theta(\eps^{-2}\log n)$, the algorithm uses $\Ot{n\rho} = \Ot{n}$ space while the probabilistic guarantees from slack generation and palette sparsification hold with high probability.

\paragraph{Assumption on $\Delta$.}
We assume $\Delta$ is known in advance and that $\Delta \geq 2\alpha\rho^3$ so that $\exp(-\Omega( \Delta / \rho^2 ))$ and $\exp( - \Omega(\Delta/\rho))$ are at most $1/\poly(n)$.
When $\Delta$ is not known in advance, we run the algorithm $\ceil{\log n}$ times in parallel with $\Delta$ set to each power of two. After the streaming pass, we know $\Delta$ and we thus use the information from the streaming pass corresponding to the largest power of two smaller than $\Delta$.
When $\Delta \leq \rho^3$, we can afford to store all edges of the graph and use that, by \cite{Reed99}, a $(\Delta-1)$-coloring exists for $\Delta \geq \Delta_0$, for some large enough universal constant $\Delta_0$.

\subsection{Palette Sampling}
\label{sec:streaming-palette-sampling}

To avoid storing all edges, our algorithm uses palette sparsification to color some of the vertices. Similarly to \cite{ACK19,AKM23}, we partition the list of colors $\rL(v)$ sampled by $v$ into multiple independent random lists. This is done to simplify the analysis by removing dependencies between the successive coloring steps. The exact sampling process for each of the $\rL_i(v)$ is therefore chosen to be convenient in the analysis of \cref{sec:coloring-alg}, where the index $i$ corresponds to the step at which the coloring algorithm uses $\rL_i(v)$.
For each vertex $v\in V$, let 
\begin{itemize}
    \item $\rL_2(v)$: sample one color in $[\Delta-1]$ uniformly at random,
    \item $\rL_3(v)$: each color of $[\Delta-1]$ is sampled \wp $\rho/\Delta$,
    \item $\rL_4(v)$: each color of $[\Delta-1]$ is sampled independently \wp $\rho/\Delta$, 
    \item $\rL_5(v)$: each color of $[\Delta-1]$ is sampled independently \wp $\rho^2/\Delta$,
    \item $\rL_6(v)$: each color of $[\Delta-1]$ is sampled independently \wp $\rho^3/\Delta$.
\end{itemize}
and $\rL(v)$ is the union of those lists. The random lists of colors are sampled before the beginning of the streaming pass.

During the streaming pass, we store all the edges of the sparsified graph.
\begin{definition}
    \label{def:sparsified-graph}
    For a graph $G=(V,E)$ and random lists $\rL$, the sparsified graph is the subgraph of $G$ with vertex set $V$ and all edges $\set{u,v}\in E$ such that $\rL(u) \cap \rL(v)$ is not empty.
\end{definition}

The following lemma is direct by applications of the Chernoff Bound. We refer readers to \cite[Lemma 4.1]{ACK19} for details. In particular, it shows that \whp the sparsified graph requires only $\Ot{ n }$ space to store.
\begin{lemma}
    \label{lem:sparsified-graph-edges}
    W.h.p., each $\rL(v)$ contains at most $O(\rho^3)$ colors and the sparsified graph contains at most $O(n\rho^6)$ edges.
\end{lemma}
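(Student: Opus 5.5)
Both bounds follow from Chernoff (\cref{lem:chernoff}) plus a union bound, so the plan is to bound the list sizes first and then the number of edges sharing a color.

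\textbf{List sizes.} For each $v$, $|\rL(v)|$ is a sum of independent indicators: one color from $\rL_2(v)$, plus Bernoulli indicators over the $\Delta-1$ colors for $\rL_3,\dots,\rL_6$. Its expectation is at most $1+2\rho+\rho^2+\rho^3 = O(\rho^3)$. By \cref{chernoff-mult-upperbound} with $\mu_{\mathsf H}=\Theta(\rho^3)\ge \Theta(\log n)$, we have $|\rL(v)| = O(\rho^3)$ with probability $1-n^{-c}$. A union bound over the $n$ vertices gives the first claim.

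\textbf{Edges of the sparsified graph.} Count, for each vertex $v$, the number $X_v$ of neighbors $u\in N(v)$ with $\rL(u)\cap\rL(v)\neq\emptyset$; the number of stored edges is at most $\sum_v X_v$. Condition on $\rL(v)$ and assume it has size at most $O(\rho^3)$, which holds \whp by the previous step. For a fixed color $c\in\rL(v)$ and a neighbor $u$, the probability that $c\in\rL(u)$ is at most $1/(\Delta-1)+2\rho/\Delta+\rho^2/\Delta+\rho^3/\Delta = O(\rho^3/\Delta)$. A union bound over the $O(\rho^3)$ colors of $\rL(v)$ gives $\Prob{\rL(u)\cap\rL(v)\neq\emptyset} = O(\rho^6/\Delta)$. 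Since the lists are sampled independently across vertices, conditioned on $\rL(v)$ the indicators for distinct $u\in N(v)$ are independent. Hence $\Exp{X_v}\le \Delta\cdot O(\rho^6/\Delta) = O(\rho^6)$, and Chernoff with $\mu_{\mathsf H}=\Theta(\rho^6)\gg\log n$ yields $X_v = O(\rho^6)$ with high probability. A union bound over all $v$ then bounds the total by $O(n\rho^6)$.

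\textbf{Where care is needed.} The only subtlety is the conditioning: $X_v$ is not a sum of independent variables unconditionally, because all the terms depend on $\rL(v)$. Fixing $\rL(v)$ first, and only then applying Chernoff, resolves this. Note that if $\Delta$ is small, the expectation bound still holds since $\deg(v)\le\Delta$, and the claimed bound only gets easier. The step I expect to require the most attention is simply this conditioning argument; the rest is routine.
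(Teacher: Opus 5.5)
Your proof is correct and follows the route the paper indicates: the paper gives no proof of its own, saying only that the lemma follows from the Chernoff bound and citing \cite[Lemma 4.1]{ACK19}. You bound the list sizes by Chernoff plus a union bound, then bound each vertex's degree in the sparsified graph by conditioning on $\rL(v)$ and applying Chernoff to the independent indicators over its neighbors. The only nitpick is that $|\rL(v)|$ is dominated by, not equal to, your sum of indicators, since the sublists may overlap; this does not affect the bound.
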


\subsection{Structural Decomposition}
\label{sec:structural-decomposition}
First, partition almost-cliques based on their size.

\begin{definition}
    We say the almost-clique $C$ is \emph{\underline{large}} if $|C| \geq \Delta + 1$, \emph{\underline{small}} if $|C| \leq \Delta + 1 - \rho$ and \emph{\underline{critical}} otherwise.
\end{definition}

Our algorithm focuses almost entirely on coloring critical almost-cliques. The large will be easy to color because they always contain some anti-edges. This motivates the following definition.

\begin{definition}
    \label{def:solitary}
    The \emph{\underline{core of $C$}} is a $K \subseteq C$ such that $G[K]$ is a clique of maximal size. 
    We say $C$ is \emph{\underline{solitary}} iff the core of $C$ has size at most $|C| - 2$.
\end{definition}

If the largest clique of $G[C]$ is not unique, we pick an arbitrary one as the core of $C$. When $C$ is not solitary, the core is uniquely defined unless $G[C]$ contains exactly one anti-edge. We call \underline{core-critical} the vertices in the core of a critical almost-clique. When we have an almost-clique decomposition $\Vsparse, C_1, C_2, \ldots, C_k$, we write $K_i$ for the core of $C_i$ for all $i\in[k]$.
Since $G$ does not contain a $\Delta$-clique, a large almost-clique is solitary. What makes the solitary almost-cliques easy to color is the following fact.

\begin{fact}
    \label{fact:solitary}
    $C$ is solitary iff $G[C]$ contains a 2-anti-matching or a 3-independent set.
\end{fact}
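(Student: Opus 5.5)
The plan is to reformulate both sides of the equivalence in terms of the complement graph $H = \overline{G[C]}$, the graph on $C$ whose edges are the anti-edges of $G[C]$. A clique in $G[C]$ is an independent set in $H$. Hence the core $K$ has size $|C| - \tau(H)$, where $\tau(H)$ is the minimum vertex cover number of $H$. So $C$ is solitary iff $\tau(H) \ge 2$. A 2-anti-matching in $G[C]$ is a matching of size two in $H$. A 3-independent set in $G[C]$ is a triangle in $H$. The statement therefore reduces to the purely combinatorial claim: $\tau(H) \ge 2$ iff $H$ contains two disjoint edges or a triangle.

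For the backward direction, suppose $H$ has two disjoint edges. Any vertex cover must contain an endpoint of each of them, and these endpoints are distinct, so $\tau(H) \ge 2$. Suppose instead $H$ has a triangle. A single vertex covers only two of its three edges, so again $\tau(H)\ge 2$. Translated back, a clique in $G[C]$ can contain at most one endpoint of each anti-edge of a 2-anti-matching. Likewise, it can contain at most one vertex of a 3-independent set. Either way, the clique misses at least two vertices of $C$, so the core has size at most $|C|-2$.

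For the forward direction, assume $H$ has no two disjoint edges, i.e.\ every two edges of $H$ share an endpoint. This is the only step with any content, though it is still elementary. I would show that $H$'s edge set is then a star or a triangle, by elementary case analysis. If $H$ has no edges, then $\tau=0$. Otherwise pick an edge $uv$; every other edge meets $u$ or $v$. If all edges contain $u$, then $\{u\}$ is a vertex cover, and similarly for $v$. If neither holds, there exist edges $ux$ with $x\ne v$ and $vy$ with $y\ne u$. Since $ux$ and $vy$ must intersect, $x=y$, giving the triangle $uvx$. Any further edge must meet each of $uv$, $ux$, $vx$ in an endpoint. An edge outside the triangle contains at most one of $u,v,x$ (two of them would make it a triangle edge). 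If it contains exactly one, it misses the opposite triangle edge. So every edge lies in the triangle and $H$ is exactly a triangle.

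Putting this together: if $G[C]$ has no 2-anti-matching, then either $\tau(H)\le 1$, meaning the core has size at least $|C|-1$ and $C$ is not solitary, or $H$ is a triangle, which is a 3-independent set. This gives the contrapositive of the forward direction and completes the equivalence. There is no real obstacle beyond keeping the case analysis tidy. The one subtlety is that the core is a maximum clique, so its size is determined even when the core itself is not unique.
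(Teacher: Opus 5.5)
Your proof is correct, and it takes a different route from the paper for the forward direction.

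The backward direction is the same as the paper's: a clique contains at most one endpoint of each anti-edge and at most one vertex of an independent set, so the core misses at least two vertices of $C$.

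For the forward direction, the paper works directly with a maximum clique $K$. It picks $u \neq v \in C - K$ and observes that, unless $G[C]$ has a 2-anti-matching, both have exactly one anti-neighbor in $K$, namely the same vertex $w$. It then rules out the edge $uv$, since $K + u + v - w$ would be a larger clique, so $\{u,v,w\}$ is independent.

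You instead pass to the complement graph, use $|K| = |C| - \tau(H)$, and classify graphs with no two disjoint edges as stars or a triangle.

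What each route buys:
\begin{itemize}
\item The paper's exchange argument is shorter. It stays in the language of the core and only inspects two vertices outside it.
\item Your reformulation needs a bit more setup but gives a complete structural picture: if $G[C]$ has no 2-anti-matching, its anti-edges form a star or a single triangle.
\item The vertex-cover viewpoint also shows directly that being solitary does not depend on which maximum clique is chosen as the core.
\item It also explains the paper's later remark that a non-solitary almost-clique has a unique core unless $G[C]$ has exactly one anti-edge. Indeed $\tau(H) \le 1$ forces the anti-edges to form a star, whose minimum vertex cover is unique unless the star is a single edge.
\end{itemize}
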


\begin{proof}
    If $G[C]$ contains a 2-anti-matching or a 3-independent set, at least two vertices cannot belong to $K$, hence $C$ is solitary.
    Conversely, if $C$ is solitary, let $u\neq v \in C - K$. Assume that $u$ and $v$ have exactly one anti-neighbor in $K$ and that it is the same vertex $w\in K$, because $G[C]$ otherwise contains a 2-anti-matching. If $u$ and $v$ are adjacent, then $G[K + u + v - w]$ is a clique larger than $K$, which is absurd; hence $\set{u,v,w}$ is a 3-independent set.
\end{proof}

Solitary almost-cliques are such that any coloring of $V - C$ can be extended to $C$ (see \cref{fig:2-anti-matching,fig:3-IS}). The second criteria that determines how we color almost-cliques relates to how they are connected to the vertices outside (see \cref{fig:populaire} for an example of a popular clique and \cref{fig:amicale,fig:reed-transform} for examples of friendly cliques, where $s_1$, $s_2$ and $s$ are friends).

\begin{definition}
    \label{def:popular}
    Let $C$ be a non-solitary almost-clique with core $K$, $v \in N(C) - K$ and $k \in \bR_{>0}$. We say $v$ is a \emph{\underline{$k$-friend}} of $C$ if it has at least $\Delta / k$ neighbors in $K$ and at least one non-neighbor in $K$. 

    If $C$ has two $k$-friend $u$ and $v$ with a shared neighbor in $K$ and there exists $u' \neq v' \in K$ such that $\set{u,u'}$ and $\set{v,v'}$ are not edges, we say that $C$ is \emph{\underline{$k$-popular}}.

    If $C$ has at least one $k$-friend but is not $k$-popular, we say that $C$ is \emph{\underline{$k$-friendly}}.
\end{definition}

\subsection{Sparse Recovery}
\label{sec:streaming-sparse-recovery}
To color the densest regions of the graph, it is necessary to recover some of their structure. For a solitary $C$, the information contained in the solitary helper structure allows us to extend any coloring of $V - C$ to $C$.

\begin{definition}
    \label{def:solitary-helper}
    Let $C$ be a solitary almost-clique. A \emph{\underline{solitary helper structure}} for $C$ is one of the two following types of tuple:
    \begin{itemize}
        \item $(u_1, v_1, u_2, v_2, N_G(u_1), N_G(u_2))$ such that $\set{u_1v_1, u_2v_2}$ is a 2-anti-matching of $G[C]$; or
        \item $(u_1, u_2, v, N_G(u_1), N_G(u_2))$ such that $\set{u_1, u_2, v}$ is a 3-independent set in $G[C]$.
    \end{itemize}
\end{definition}

We do not define other types of helper structure, for we can implicitly recover all the edges incident to the remaining dense vertices. Formally, we recover the following:
\begin{lemma}
    \label{lem:sparse-recovery}
W.h.p., \alg{sparse-recovery} uses $\Ot{ n }$ memory and after the streaming pass, we recover
    \begin{itemize}
        \item the sets $A(v)$ and $E(v)$ for every vertex $v \notin \Vsparse$ with $a(v) + e(v) \leq 4\rho$, and
        \item a solitary helper structure for every non-small solitary almost-clique.
    \end{itemize}
\end{lemma}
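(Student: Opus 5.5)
We will implement \alg{sparse-recovery} with linear sketches for $k$-sparse recovery, following \cite{AKM23}. For each vertex $v$ we maintain a sketch of its adjacency vector $\chi_v \in \set{0,1}^n$, modified so that it captures $A(v)$ and $E(v)$. The almost-clique decomposition is not known until the pass ends. However, it is computed in the same pass (\cref{lem:compute-acd-streaming}), and linear sketches can be combined after the fact. So we keep, for every $v$, a $k$-sparse recovery sketch $S_v$ of $\chi_v$ with $k = O(\rho)$. We also keep a sketch $T$ of the full neighborhood structure that supports linear combinations indexed by sets of vertices. Each $k$-sparse recovery sketch needs $O(k\log n)$ bits and succeeds \whp. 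Summed over $n$ vertices this gives $O(n\rho\log n) = \Ot{n}$ memory.

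\textbf{Recovering $A(v)$ and $E(v)$.} For a dense $v\in C$, consider the vector $\chi_v - \mathbb{1}_C$, where $\mathbb{1}_C$ is the indicator of $C$ (adjusted at coordinate $v$). Its support is exactly $E(v)\cup A(v)$, with entries $+1$ on $E(v)$ and $-1$ on $A(v)$. Since $C$ is known only after the pass, we use linearity. We store $S_v = \Phi\chi_v$ for a shared random matrix $\Phi$ drawn from a $4\rho$-sparse recovery scheme, and compute $\Phi\mathbb{1}_C$ afterwards. This requires $\Phi$ to be generated from a seed of $\poly\log n$ random bits, so that it can be evaluated on any known vector without being stored. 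The difference $\Phi(\chi_v-\mathbb{1}_C)$ then recovers the signed support whenever $a(v)+e(v)\le 4\rho$. We also attach a cheap sparsity test, for instance a second independent sketch, so that we never output wrong sets. A union bound over $v$ gives the first bullet \whp.

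\textbf{Solitary helper structures.} Let $C$ be non-small and solitary, so $|C|\ge \Delta+1-\rho$. The anti-edges inside $C$ are what we must find, and then $N_G(u_1),N_G(u_2)$ for the chosen vertices. If $G[C]$ has many anti-edges, the needed vertices can be found by sampling. Concretely, the total anti-degree $\sum_{v\in C}a(v)$ can be estimated from degree sums and a sketch of $\sum_{v\in C}\chi_v$. We pick $O(\log n)$ random vertices per almost-clique, chosen via hashing before the stream, and for them we store $O(\rho)$-sparse recovery of $\chi_v - \mathbb{1}_C$, or $\ell_0$-samplers on it. Each such sampler returns a random anti-neighbor or external neighbor.

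If instead $G[C]$ has few anti-edges, the vertices involved in anti-edges have $a(v)+e(v)\le 4\rho$ for all but a few of them. The argument uses $|C|\ge\Delta+1-\rho$ and degree at most $\Delta$: then $e(v)\le \Delta-|C|+1+a(v)\le \rho+a(v)$. For these vertices the first bullet already yields $A(v)$ and $E(v)$, hence $N_G(v) = (C - A(v) - v)\cup E(v)$. By \cref{fact:solitary}, a 2-anti-matching or 3-independent set exists. We then read it off from the recovered anti-neighborhoods, and the same recovery gives the neighborhoods.

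\textbf{Main obstacle.} The hard case is the intermediate regime. Here some anti-edge endpoints have $a(v)>4\rho$ while $G[C]$ has fewer than the $10^7\eps\Delta$ anti-edges that would make sampling trivial. We then still need full neighborhoods $N_G(u_i)$ of vertices with large anti-degree.

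My first approach is to choose $u_1,u_2$ as low-anti-degree endpoints whenever possible. By \cref{fact:solitary} one can pick the structure so that at most one high-degree witness is needed. For the remaining high-anti-degree vertex, I would store an $O(\rho)$-sparse recovery sketch of $\chi_v$ restricted to a random $\Theta(1/\rho)$-subsample of vertices. With this I can locate an anti-neighbor in $C$ and verify the structure, keeping total space $\Ot{n}$.
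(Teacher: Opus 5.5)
Your treatment of the first bullet is essentially the paper's argument. The paper stores $\Phi^V_2\boldsymbol{1}_{N[v]}$ for a deterministic Vandermonde matrix with $8\rho$ rows, so no seed is needed. It then decodes $\Phi^V_2\boldsymbol{1}_C-\Phi^V_2\boldsymbol{1}_{N[v]}$, whose support is $A(v)\cup E(v)$, and certifies the output with an independent random test sketch.

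The second bullet, however, has a genuine gap. A solitary helper structure needs the \emph{entire} neighborhoods $N_G(u_1),N_G(u_2)$ of one endpoint of each anti-edge. Nothing you store can decode $N_G(u)$ once $a(u)+e(u)\gg\rho$, and your case split does not avoid such vertices. It is false that with few anti-edges all but a few endpoints satisfy $a+e\le 4\rho$. Suppose the complement of $G[C]$ is two disjoint copies of $K_{5\rho,5\rho}$. There are only $50\rho^2\ll\varepsilon\Delta$ anti-edges and no $3$-independent set, yet every endpoint has $a(u)=5\rho$. So the first bullet covers no endpoint, and every 2-anti-matching needs two high-anti-degree witnesses, contradicting your claim that at most one is ever needed. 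The tools you propose for such vertices do not yield neighborhoods. An $\ell_0$-sampler returns a single coordinate. An $O(\rho)$-sparse sketch of $\chi_v$ restricted to a random subsample of coordinates reveals only part of $N(v)$. Subsampling which vertices keep an $O(\rho)$-sparse sketch cannot decode a vector whose support is much larger than $\rho$. Locating an anti-neighbor is not the same as knowing its neighborhood. Also, you cannot hash ``$O(\log n)$ vertices per almost-clique'' before the stream, since the almost-cliques are not yet known.

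The missing idea is multi-scale recovery, with the sampling rate matched to the sketch size. For each $i=0,\dots,\lceil\log(\Delta/\rho)\rceil$, set $s_i=2^i\rho$ and put each vertex in $V_i$ with probability $\min\{1,4\rho/s_i\}$. For $v\in V_i$, store an $s_i$-sparse sketch and a test sketch of $\boldsymbol{1}_{N[v]}$. This costs $O(n\rho\log n)$ bits per level and $\widetilde{O}(n)$ overall.

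Now consider the case where every anti-edge of $G[C]$ has both endpoints with $a+e>4\rho$.
\begin{itemize}
\item Let $v_0,v_1$ be the two vertices of $C$ of largest anti-degree; these are the endpoints whose neighborhoods you do \emph{not} need.
\item Since $C$ is not small, $4\rho<a(v_i)+e(v_i)\le\rho+2a(v_i)$, so $a(v_i)\ge\rho$. Pick $j$ with $s_j\le a(v_i)\le s_{j+1}$.
\item Every $u\in A(v_i)-v_{1-i}$ has $a(u)\le a(v_i)$, because $v_0,v_1$ are the top two. Hence $a(u)+e(u)\le\rho+2a(u)\le s_{j+3}$.
\item There are at least $s_j-1$ such $u$, each in $V_{j+3}$ with probability $4\rho/s_{j+3}$. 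So w.h.p.\ one is sampled, and its full neighborhood is decoded and verified.
\end{itemize}
This yields the anti-matching $\{v_0u_0,v_1u_1\}$ with $N(u_0),N(u_1)$ known. Ordering by anti-degree is what balances the number of candidates ($\approx a(v_i)$) against the sparsity they require ($\approx 2a(v_i)+\rho$). That balance is exactly what keeps the space near-linear.
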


We say that ``we know $N_G(v)$'' if we can iterate over the neighbors $u\in N_G(v)$ in time $O(\deg(v))$ and $\poly(\log n)$ space after the streaming pass. So, we know $N_G(u_1)$ and $N_G(u_2)$ where $u_1$ and $u_2$ are vertices from the solitary helper structure described in \cref{def:solitary-helper}. \cref{lem:sparse-recovery} implies that, \whp, we also know $N_G(v)$ for every dense vertex such that $a(v) + e(v) \leq 4\rho$. Indeed, to iterate over the neighbors of $v\in C$, it suffices to iterate over $E(v)$ and then $C - A(v)$, which is possible since $C$, $E(v)$ and $A(v)$ are stored in memory. We remark that we only store $N_G(v)$ implicitly, and hence the space usage remains $\Ot{ n }$.

Since the sparse recovery is almost verbatim that of \cite[Section 4.4]{AKM23}, we skip details here and refer readers to \cref{sec:sparse-recovery-details}.
Observe how using \cref{lem:sparse-recovery} allows us to recover all relevant information for non-small and popular almost-cliques.

\begin{corollary}
    \label{cor:sparse-recovery-popular}
    Let $C$ with core $K$ be critical and not solitary. From \alg{sparse-recovery}, we recover $K$ and $N(v) \cap K$ for all vertices $v \in V$. 
\end{corollary}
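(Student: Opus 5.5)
Since $C$ is critical, we have $\Delta+1-\rho < |C| \le \Delta$. Since $C$ is not solitary, its core $K$ satisfies $|K| \ge |C|-1$, so $G[C]$ is a clique minus at most the anti-edges incident to a single vertex. The plan is to show that every vertex $w\in K$ has $a(w)+e(w)\le 4\rho$, so that \cref{lem:sparse-recovery} hands us $A(w)$ and $E(w)$ for all of them. From this data we then read off $K$ and all adjacencies $N(v)\cap K$.

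\textbf{Bounding $a(w)+e(w)$.} Fix $w \in K$. Its anti-neighbors in $C$ can only be the (at most one) vertex $C-K$, so $a(w)\le 1$. For the external degree, $w$ is adjacent to all of $K-w$, and $|K-w|\ge |C|-2 \ge \Delta-\rho$. Since $\deg(w)\le\Delta$, this gives $e(w)\le \Delta-(|K|-1)\le \rho+1$. Hence $a(w)+e(w)\le \rho+2\le 4\rho$. Dense vertices are by definition not in $\Vsparse$, so \cref{lem:sparse-recovery} applies: w.h.p.\ we recover $A(w)$ and $E(w)$ for every $w\in K$.

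For the vertex $x\in C-K$, if it exists, we do not need its recovery. However, $a(x)\le |C|-1-|N(x)\cap K|$ could be large, so we avoid relying on it.

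\textbf{Recovering $K$.} We know $C$ from the almost-clique decomposition (\cref{lem:compute-acd-streaming}). We cannot immediately tell which vertices have small $a+e$, but we can test each $w\in C$ for which recovery succeeded. The cleanest identification is: $K$ is exactly the set of $w\in C$ with $A(w)\subseteq C-K$. I would argue this as follows.

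If $G[C]$ is a clique, then $K=C$ and every $w$ has $A(w)=\emptyset$.

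Otherwise, there is a unique vertex $x$ such that all anti-edges of $G[C]$ are incident to $x$. (The case of exactly one anti-edge has two candidates, and we pick either, matching the arbitrary choice of core.) Every $w\in K$ has recovered $A(w)\subseteq\{x\}$, and at least one has $A(w)=\{x\}$, which reveals $x$. Then $K=C-x$.

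Any spurious recovery output for $x$ itself is harmless. Since the recovered sets of the genuine core vertices already determine $x$, I would define $x$ as the unique vertex appearing in some recovered $A(w)$. Uniqueness follows because $A(w)\subseteq\{x\}$ for all $w\in K$.

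\textbf{Recovering $N(v)\cap K$.} For $v\in V$, we have
\[
N(v)\cap K=\{w\in K: v\in N(w)\},
\]
and for each $w\in K$ we know $N(w)=E(w)\cup (C-A(w)-w)$ explicitly. We can therefore scan the $O(\Delta\rho)$ total recovered data over $K$ and, for each $v$, collect the $w$ whose external list or internal neighborhood contains $v$. This yields $N(v)\cap K$ for all $v$ simultaneously.

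\textbf{Main obstacle.} The main obstacle is the bookkeeping around the ambiguous vertex $x$: making sure the recovery guarantee of \cref{lem:sparse-recovery} is only invoked for vertices satisfying its threshold. This is handled by the degree count above, which uses criticality ($|C|>\Delta+1-\rho$) in an essential way.
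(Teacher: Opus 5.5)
Your proposal is correct and follows the paper's own route. You bound $a(w)+e(w)\le \rho+2\le 4\rho$ for every core vertex, using non-solitarity ($a(w)\le 1$) and criticality ($e(w)\le \Delta+1-|C|+a(w)$), invoke \cref{lem:sparse-recovery} to get all edges with an endpoint in $K$, and then read off the vertex of $C-K$. You also spell out the identification step that the paper leaves implicit.

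That identification step needs one small repair. Recovery for the non-core vertex $x$ is not spurious but verified, so whenever $a(x)+e(x)\le 4\rho$ the vertices of $A(x)$ also appear in a recovered anti-neighborhood. In that case your ``unique vertex appearing in some recovered $A(w)$'' is not unique: your uniqueness argument restricts to $w\in K$, which is circular since $K$ is what you are trying to find. The fix is to take $x$ as the common endpoint of all recovered anti-edges, i.e.\ the vertex lying in at least two recovered anti-neighborhoods. If there is exactly one anti-edge, either endpoint works.
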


\begin{proof}
    Since $C$ is not solitary, we have that $|K| \geq |C|-1$. All vertices of $K$ thus have $a(v) \leq 1$ and $e(v) \leq \Delta + 1- |C| + a(v) \leq \rho + 1 \leq 4\rho$, where the first inequality uses that $C$ is not small. By \cref{lem:sparse-recovery}, we recover all edges with exactly one endpoint in $K$. In particular, if $C \neq K$, we identify the vertex in $C - K$ and thereby recover $K$.
\end{proof}

\subsection{Summary of the Information Collected by the Algorithm}
\label{sec:summary-streaming}

Before moving on with the analysis of our coloring algorithm, we recapitulate all the information gathered during the streaming pass. With high probability, at the end of the stream, we have the following information available:
\begin{itemize}
    \item the random lists of colors $\rL(v)$ as described in \cref{sec:streaming-palette-sampling},
    \item all the edges of the sparsified graph defined in \cref{def:sparsified-graph},
    \item a vertex partition $\Vsparse, C_1, \ldots, C_k$ as in \cref{def:acd} (by \cref{lem:compute-acd-streaming}),
    \item the sets $A(v)$ and $E(v)$ for all $v\in C$ with $a(v) + e(v) \leq 4\rho$ (by \cref{lem:sparse-recovery}),
    \item a solitary-helper structure for every non-small solitary almost-clique (by \cref{lem:sparse-recovery}),
    \item the core of every critical and not solitary almost-clique (by \cref{cor:sparse-recovery-popular}),
    \item the sets of critical $k$-popular and $k$-friendly almost-cliques for all $k > 0$.
\end{itemize}
It is an easy observation that the set of $k$-popular critical almost-cliques can be inferred, such cliques are not solitary and all the edges incident to their core are known.

It should also be clear that the algorithm uses $\Ot{ n }$ space as the sparsified graph contains $O(n\log^6 n)$ edges (by \cref{lem:sparsified-graph-edges}), and sparse recovery uses $\Ot{ n }$ space (by \cref{lem:sparse-recovery}).

\section{The Coloring Algorithm}
\label{sec:coloring-alg}
In this section, we explain how to color the graph after the streaming pass. We have two means to ensure that we do not create monochromatic edges when we color vertices. Either they use a color from their random list $\rL(v)$, and we can check for conflicts by looking at the edges of the sparsified graph (\cref{def:sparsified-graph}). Or we color a vertex with a color that is not from its random list $\rL(v)$, in which case we must be able to iterate over all its neighbors. To emphasize that aspect, we define the notion of serene coloring and maintain that our coloring is one.

\begin{definition}
\label{def:serene}
    We say that a (possibly partial) coloring $\col$ is \emph{\underline{serene}} if for all vertices $v$ with $\col(v) \neq \bot$, either $\col(v) \in \rL(v)$ or we know $N_G(v)$ after the streaming pass.
\end{definition}

\paragraph{Detailed Overview.}
Let us explain the steps of our coloring algorithm. 

\begin{itemize}
    \item \textbf{Step 1: Reed Transform}.
    First, we remove all non-small solitary and non-small  $\alpha\rho^2$-popular almost-cliques from $G$ and obtain the graph $G'$. Then, we remove all $2\rho$-friendly almost-cliques with a $(\Delta-1)$-clique, possibly adding some extra edges to the graph. The resulting graph is called $H$ and has the same almost-clique decomposition (apart from those that got removed), but contains no large, critical solitary, or critical $\rho$-popular almost-clique, and the almost-cliques with a core of size $\Delta-1$ are not $\rho$-friendly.

    \item \textbf{Step 2: Slack Generation.} 
    Every vertex in $H$ gets activated with constant probability and samples a uniform color in $[\Delta-1]$. If that color was sampled by no neighbor, we color the vertex with it.

    \item \textbf{Step 3: Coloring Critical Almost-Cliques of $H$.}
    Recall that all edges incident to the core of critical almost-cliques are known. Hence, we color those vertices serenely without using palette sparsification. We argue that after slack generation, all critical almost-cliques have two vertices whose coloring can be delayed, because (1) they have two pairs of same-colored neighbors, or (2) they have one pair of same-colored neighbor and an uncolored neighbor whose coloring can be deferred even further. Because the coloring of some vertices must be delayed, we actually color critical almost-cliques in two steps.
    
    \item \textbf{Step 4: Coloring Sparse Vertices \& Small Almost-Cliques.}
    After Step 3, all critical almost-cliques have been colored. Observe that in Step 1, we removed all large almost-cliques from $H$, because they are solitary. So it remains to color the small almost-cliques and the sparse vertices. The sparse vertices are colored as in \cite{ACK19}.
    Each dense vertex from this set gets $\Omega(\log n)$ pairs of same-colored neighbors from slack generation. As such, they can be colored using palette sparsification.

    \item \textbf{Step 5: Inverse Reed Transform.} 
    Steps 3 and 4 colored all vertices of $H$. From a proper coloring of $H$, we construct a coloring of $G'$. 
    Step 5 focuses on inverting the Reed Transform, during which we removed $2\rho$-friendly almost-cliques with a $(\Delta-1)$-core from the graph, and possibly added some edge. By same-coloring two pairs of vertices manually, we can extend the coloring to all such almost-cliques. Note that, from then on, we may recolor vertices. In particular, the pairs of same-colored neighbors produced by slack generation are no longer guaranteed to exist.

    \item \textbf{Step 6: Coloring Non-Small Popular \& Non-Small Solitary.}
    We complete the coloring by handling the solitary and friendly almost-cliques removed during the phase 1 of the Reed Transform. Structurally, those almost-cliques have pairs of vertices that, if same-colored, allow us to extend the coloring greedily (recall \cref{fig:degree-1-choosable}). Thanks to sparse recovery, we can identify those pairs and same-color them with colors from their random lists. Then, they can be colored using either a greedy algorithm or palette sparsification.
\end{itemize}

\begin{table}[h]
\centering
\renewcommand{\arraystretch}{1.5}
\setlength{\arrayrulewidth}{1pt} \begin{tabular}{|>{\columncolor{white}}c|c|c|c|c|}
\hline
 & \multicolumn{1}{c|}{\textbf{Solitary}} & \multicolumn{1}{c|}{\textbf{Popular}} & \textbf{Friendly with a $(\Delta-1)$ Core} & \textbf{Others}\\ \hline
\textbf{Small} & \multicolumn{2}{c|}{Step 4} & \cellcolor{gray!50} & Step 4 \\ \hline
\textbf{Large} & Step 6 & \multicolumn{3}{c|}{\cellcolor{gray!50}} \\ \hline
\textbf{Critical} & Step 6 &  Step 6 & Step 5 & Step 3 \\ \hline
\end{tabular}
\caption{For each type of almost-clique, the step of the algorithm at which it is colored. We stress that popular almost-cliques are not solitary and that the ``Friendly with a $(\Delta-1)$ Core'' are neither solitary nor popular.}
\end{table}

\subsection{Step 1: Preprocessing and the Reed Transform}
\label{sec:RT}

This first step removes from the graph all vertices that cannot be colored with slack generation. To ensure that the almost-cliques removed from the graph can be colored later, we may add some edges. The resulting graph is called $H$.

Concretely, we construct the graph $H$ in two phases. First, a preprocessing phase computes the graph $G'$ by removing from $G$ all almost-cliques that are either large, critical solitary, or critical $\alpha\rho^2$-popular. This step is obvious from the information gathered during the streaming pass (\cref{sec:summary-streaming}) and needs not further discussion.
Then, the Reed Transform computes the graph $H$ by removing from $G'$ all $2\rho$-friendly almost-cliques that contain a $(\Delta-1)$-clique and adding selected edges.
We choose the edges to add to $H$ in \cref{alg:RT}. 

\begin{Algorithm}
    \label{alg:RT}
    Reed Transform.

    \paragraph{Input:} 
    a graph $G'$ with no large, critical solitary, or critical $\alpha\rho^2$-popular almost-cliques.

    \paragraph{Output:} a graph $H$ such that: a) almost-cliques with $|K_i| = \Delta-1$ are not $\rho$-friendly and b) solitary and $\alpha\rho^2$-popular almost-cliques are small.

    \medskip

    For each $2\rho$-friendly almost-cliques $C_i$ with a $(\Delta-1)$-core (i.e., $|K_i|=\Delta-1$), do:
    \begin{enumerate}[label=$\roman*)$]
    \item Let $s_i$ be a $(2\rho)$-friend of $C_i$; if $C_i \neq K_i$, take $s_i$ inside $C_i$.
    \item Let $D_i = N_{G'}(s_i) \cap C_i$ be the set of neighbors of $s_i$ in $C_i$.
    \item For each $u \in D_i$, let $f(u)$ denote the external neighbor of $u$ that is not $s_i$, if any.
    \item Sample each vertex into the set $A_{RT}$ independently \wp $p_{RT} = 1/10$. 
    
    \noindent
    Let $S_i = \set{ u \in D_i \cap A_{RT} : f(u) \not\in A_{RT} }$.
    
    \item If all nodes of $D_i$ are of degree $\Delta$ \emph{and} $f(S_i)$ is an independent set then
        \begin{enumerate}
            \item Sample each vertex of $S_i$ into a set $T_i$ \wp $p_{ds} = 1/\rho$. \label{line:sampling-T}
            \item Let $u_i \ne v_i$ be any pair of nodes in $T_i$ such that $x_i = f(u_i)$ and $y_i = f(v_i)$ are distinct, and not members of the same critical almost-clique.
            \label{step:RT-non-critical}
            \item Add the pair $\set{ x_i, y_i }$ to the set $E_{new}$ and add $i$ into the set $I$.
        \end{enumerate}
  \end{enumerate}
  Return the graph $H$ where $V(H) = V(G') - \bigcup_{i \in I} C_i$ and $E(H) = G'[V(H)] \cup E_{new}$.
\end{Algorithm}

\begin{remark}
    Some important observations about \cref{alg:RT} before we analyze it:
    \begin{itemize}
        \item the random set $A_{RT}$ is introduced in \cref{alg:RT} to break-symmetry and simplify the inversion of the Reed Transform; it ensures that an $x$- or $y$-vertex is not selected as a $u$- or $v$-vertex for some other almost-clique and vice versa;
        \item the algorithm can be implemented after the streaming pass because the decomposition is known (\cref{lem:compute-acd-streaming}) and the core as well as all edges incident to core vertices of non-small almost-cliques are known (\cref{cor:sparse-recovery-popular}).
    \end{itemize}
\end{remark}

\begin{lemma}
    \label{lem:RT}
    Assume that $\Delta \geq \alpha\rho^3$ and $\eps \geq \max\set{8/\rho, 4\sqrt{2}/\sqrt{\eta\rho}}$.
    Let $G'$ be a graph with an $(\eta,\eps,\eps)$-almost-clique decomposition $\Vsparse, C_1, C_2,$ $\ldots$, $C_k$ that has no large, critical solitary, or critical $\alpha\rho^2$-popular almost-cliques.
    W.h.p., the graph $H$ produced by \cref{alg:RT} with the same vertex-partition restricted to $V(H)$ is such that:
    \begin{enumerate}[label=(RT.\arabic*),leftmargin=1em,wide]
        \item\label[part]{part:RT-acd} 
        it is an $(\eta/2, \eps,\eps/2)$-almost-clique decomposition;
        \item\label[part]{part:RT-not-sol-pop} 
        all solitary or $\rho$-popular almost-cliques in $H$ are small;
        \item\label[part]{part:RT-not-friendly}
        no almost-clique in $H$ with a $(\Delta-1)$-core is $\rho$-friendly; and
        \item\label[part]{part:RT-nice} $H$ has maximum degree $\Delta$ and does not contain a $\Delta$-clique.
    \end{enumerate}
\end{lemma}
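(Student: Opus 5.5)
I will prove the four properties separately. The core tool is a bound on how many edges of $E_{new}$ touch any single vertex. This is what keeps the almost-clique decomposition, sparsity and degrees essentially intact.

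\textbf{Step 1: degree control for added edges.} Each vertex $w$ can be an $f$-image for at most $\Delta$ vertices $u$. For each such $u$, $u$ is in at most one $D_i$, and it lands in $T_i$ with probability at most $p_{RT}/\rho$. Hence the expected number of indices $i \in I$ with $w \in \{x_i,y_i\}$ is at most $\Delta/(10\rho)$. Since $\Delta \ge \alpha\rho^3$, Chernoff (\cref{lem:chernoff}) gives that \whp every $w$ receives $O(\Delta/\rho)$ new edges.

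Conversely, suppose $w$ receives a new edge. Then $w=f(u)$ for some $u \in D_i$ with $\deg(u)=\Delta$. In a $(\Delta-1)$-core, $u$ has $\Delta-2$ neighbours in $K_i$, plus $s_i$, plus $f(u)$. So $w$ has exactly one neighbour in $K_i$, and $C_i$ is then deleted from $H$.

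This is the key mechanism for \cref{part:RT-nice}. Every new edge at $w$ is paired with a deleted neighbour $u$ of $w$: we charge the edge $xy$ to the distinct deleted vertices $u_i \in N(x_i)$ and $v_i \in N(y_i)$. Distinct cliques give distinct $u$'s, so the degree never increases and $\Delta(H) \le \Delta$.

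\textbf{Step 2: no $\Delta$-clique.} Suppose $Q$ is a $\Delta$-clique in $H$. It must use some new edge $x_iy_i$; otherwise it would already be in $G$. All vertices of $Q$ then have degree about $\Delta$ and lie in a common almost-clique $C$ of $H$. For this we use that $x_i,y_i$ each have at most $\Delta-1$ old neighbours, and we use \cref{part:RT-acd}.

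Step~\ref{step:RT-non-critical} forbids $x_i,y_i$ from lying in the same critical almost-clique. So $C$ would have to be small, large, or such that one of $x_i,y_i$ is outside it. Large cliques were removed. A small $C$ has $|C| \le \Delta+1-\rho$, so it cannot host a $\Delta$-clique. If $x_i \notin C$, then by \cref{def:AC}(\ref{part:acd-ext}) $x_i$ has $\eps\Delta$ non-neighbours in $C$, and $O(\Delta/\rho)$ new edges cannot fix them.

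The independence of $f(S_i)$ excludes the remaining case, in which two new edges from the same clique might complete a clique. The sampling $A_{RT}$ ensures that $x$/$y$-vertices are never $u$/$v$-vertices.

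\textbf{Step 3: the decomposition and structure.} For \cref{part:RT-acd} I argue as follows. Removing whole almost-cliques changes no neighbourhoods inside surviving cliques except through external neighbours. Each surviving vertex loses at most its external degree $e(v)$ into deleted cliques and gains $O(\Delta/\rho) \le \eps\Delta/8$ new edges. This preserves the size and neighbour conditions of \cref{def:AC}, and non-adjacency $\ge \eps\Delta/2$ from outside. Sparse vertices change sparsity by at most $O(\Delta\cdot\Delta/\rho)$ edges in the neighbourhood, which is at most half of $\eta\eps^2\Delta^2$ given $\eps^2 \ge 32/(\eta\rho)$. Hence $\eta$ becomes $\eta/2$.

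For \cref{part:RT-not-sol-pop,part:RT-not-friendly}, the critical solitary and $\alpha\rho^2$-popular cliques were removed in $G'$. What remains is to show that $\rho$-popularity or $\rho$-friendliness cannot newly arise in $H$. A $\rho$-friend in $H$ has $\ge \Delta/\rho$ core-neighbours, of which at most $O(\Delta/\rho^2)$-ish can be new; this count must be refined to be $<\Delta/(2\rho)$, since the new edges at a vertex go to distinct cliques. So it was a $2\rho$-friend in $G'$, and its clique was treated.

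Every $2\rho$-friendly clique with a $(\Delta-1)$-core is removed, provided the test in line (v) succeeds and a valid pair exists in $T_i$. This needs a structural argument, following \cite{Reed99}: if some $u \in D_i$ has degree $<\Delta$ or $f(S_i)$ is not independent, the clique is solitary or popular, or it can be handled differently. It also needs \whp $|T_i| \ge 2$ with distinct non-co-critical $f$-images, using $|D_i| \ge \Delta/(2\rho)$ and $|T_i| \approx \Delta/(20\rho^2)$. I expect this last part, together with the fine charging in Step 2, to be the main obstacle.
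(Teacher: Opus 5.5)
There is a genuine gap: you never show that \cref{alg:RT} is well defined, and this is the step where the hypotheses on $G'$ are actually used. Your charging argument for the maximum degree, the confinement of a $\Delta$-clique to a single (necessarily critical) almost-clique, and the sparsity bookkeeping for (RT.1) are essentially the paper's. What is missing is a proof that w.h.p.\ every treated $C_i$ has $u_i\neq v_i\in T_i$ whose images are distinct and not in a common critical almost-clique.

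The heuristic $|T_i|\approx\Delta/(20\rho^2)$ is not automatically concentrated. The events $u\in S_i$ are correlated through shared images: a single activated $f(u)$ kills all its preimages in $D_i$. The paper handles this as follows.
\begin{enumerate}
\item An external vertex with $\Delta/(\alpha\rho^2)$ preimages in $D_i$ would, together with $s_i$ and their common neighbour, make $C_i$ $\alpha\rho^2$-popular. So the indicators $X_u$ of the events $u\in A_{RT}$, $f(u)\notin A_{RT}$ form a read-$k$ family with $k=\Delta/(\alpha\rho^2)$.
\item \cref{lem:k-read} then gives $|S_i|\ge|D_i|/24$, and Chernoff gives $|T_i|\ge\Delta/(50\rho^2)$.
\item To choose $v_i$: $f(S_i)$ is independent and no critical almost-clique of $G'$ is solitary, so none contains a $3$-independent set (\cref{fact:solitary}). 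Hence the critical almost-clique of $x_i$ meets $f(S_i)$ in at most two vertices, each with at most $\Delta/(\alpha\rho^2)$ preimages. That leaves $\Delta/(50\rho^2)-2\Delta/(\alpha\rho^2)>0$ candidates, which is where $\alpha\ge150$ enters.
\end{enumerate}

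Your route for the test in line (v) would fail. A vertex of $D_i$ of degree less than $\Delta$, or a non-independent $f(S_i)$, does not make $C_i$ solitary or popular, so no structural argument can show the test always succeeds. It does not need to: such cliques are removed as well, just without adding an edge. Compare \cref{sec:IRT}, which same-colors only $s_iz_i$, or uses a pair $x_i,y_i$ already adjacent in $G'$.

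(RT.3) then follows because a $\rho$-friend in $H$ was a $2\rho$-friend in $G'$. Its almost-clique therefore was $2\rho$-friendly in $G'$; it cannot be $2\rho$-popular, as that would make it $\alpha\rho^2$-popular. Hence it was removed. Your reason for the $2\rho$-friend fact, that new edges at a vertex ``go to distinct cliques'', is not valid. They come from distinct removed cliques, but their other endpoints may all lie in one surviving core. What works is keeping the constant from your Step 1, which gives at most $\Delta/(5\rho)<\Delta/(2\rho)$ new edges per vertex w.h.p.

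Two smaller points:
\begin{itemize}
\item Part (2) of \cref{def:AC} must hold with the same $\eps$, so you need external degrees not to grow at all. The charging gives this, since each new edge at $v$ replaces a deleted external neighbour. The bound ``gains $O(\Delta/\rho)\le\eps\Delta/8$'' does not.
\item The claim that $f(u)$ has exactly one neighbour in $K_i$ is false, though you never use it.
\end{itemize}
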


\begin{proof}
We first verify that each $u\in D_i$ of degree $\Delta$ has exactly one external neighbor $f(u)$ different from $s_i$. 
Indeed, $u$ is in the $(\Delta-1)$-clique $K_i$, thus has $\Delta-2$ core neighbors.
Since $u$ is of degree $\Delta$, it has one additional neighbor besides $s_i$, which must be external because $s_i$ is the only vertex of $C_i - K_i$ (when $C_i \neq K_i$).
\begin{fact}
    Every $u\in D_i$ with degree $\Delta$ has exactly one external neighbor $f(u)$ other than $s_i$.
    \label{lem:ext-degs-in-I}
\end{fact}

We need that a constant fraction of $u\in D_i$ belong to $S_i$, i.e., are in $A_{RT}$ while their external neighbor $f(u)$ is not. We show that this is a high probability event.

\begin{claim}
    \label{claim:RT-well-defined}
W.h.p.\ over the randomness of $A_{RT}$, we have $|S_i| \ge |D_i|/24$.
\end{claim}

\begin{proofclaim}
    For $u \in D_i$, let $X_u$ be the indicator random variable of the event that $u \in A_{RT}$ and $f(u)\notin A_{RT}$.
We have $\Exp{X_u} = p_{RT}(1-p_{RT})$ and by the linearity of expectation their sum has expectation $\mu = p_{RT}(1-p_{RT})|D_i| \geq |D_i|/12$. The activation of $u$ affects only $X_u$ and the activation of a vertex in $f(D_i)$ affects at most $\Delta/\alpha\rho^2$ other vertices in $D_i$, for $C_i$ would otherwise be $\alpha\rho^2$-popular. So we may use the read-$k$ bound (\cref{lem:k-read}) with $k = \Delta/(\alpha\rho^2)$ and $\delta=1/24$ to show concentration as
    \[
    \Prob*{ \abs*{ \sum_{u \in D_i} X_u - \mu } \geq |D_i|/24  }
    \leq \exp\paren*{-\Omega\paren*{ \frac{|D_i|}{k} }} 
    \leq \exp(-\Omega(\rho)) \leq 1/\poly(n) \ ,
    \]
    where we use that $|D_i| \geq \Delta/(2\rho)$.
\end{proofclaim}

Henceforth, fix $A_{RT}$ such that \cref{claim:RT-well-defined} holds and let $S = \bigcup_{i \in I} S_i$. 

In \cref{line:sampling-T}, vertices are downsampled to ensure that the algorithm does not add an excessive number of edges to any single vertex.

\begin{claim}
After \cref{line:sampling-T}, over the randomness $T = \bigcup_i T_i$, we have that
\begin{enumerate}[label=$\roman*)$]
\item For each $C_i$, we have that $|T_i| \ge \Delta/(50\rho^2)$ \wp $\exp(-\Omega( \Delta / \rho^2 ))$.

\item For each node $u \in V(G')$, we have that $|f^{-1}(u) \cap T| \le 2 \Delta/\rho$ \wp $\exp(-\Omega(\Delta / \rho))$.
\item For each $C_i$, we have that $|f(T) \cap K_i| \le 4\Delta/\rho$ \wp $\exp(-\Omega(\Delta / \rho))$. 
\end{enumerate}
\label{claim:downsizing}
\end{claim}

\begin{proofclaim}
Consider an almost-clique $C_i$.
Observe that $|T_i|$ is the sum of independent binary variables over the nodes in $S_i$, each with probability $p_{ds}$.
By linearity of expectation, $\Exp{|T_i|} = |S_i| p_{ds} \ge |D_i|/24 \cdot 1/\rho \ge \Delta/(48 \rho^2)$, using \cref{claim:RT-well-defined} and that $|D_i| \geq \Delta/(2\rho)$.
By Chernoff, $|T_i| \ge \Delta/(50\rho^2)$, \wp $\exp(-\Omega(\Delta/\rho^2))$, establishing $i)$.

Consider an arbitrary vertex $u$ of the graph. 
Each neighbor $v \in f^{-1}(u)\cap S$, independently joins $T$ \wp $p_{ds}=1/\rho$. 
Hence, by linearity of expectation, we have $\Exp{|f^{-1}(u) \cap T|} \le \Delta/\rho$.
By Chernoff, we get that $|f^{-1}(u) \cap T|$ is at most $2 \Delta/\rho$ \wp $\exp(-\Omega( \Delta/\rho ))$, establishing $ii)$.

Each vertex of $K_i$ has at most two external neighbors, so $|E(K_i)| \le 2(\Delta+1)$.
Each vertex of $E(K_i)$ joins $T$ independently with probability at most $1/\rho$ so $\Exp{ |E(K_i) \cap T| } \leq 2(\Delta+1)/\rho$. By Chernoff, at most $4\Delta/\rho$ vertices of $E(K_i)$ join $T$ \wp $\exp(-\Omega(\Delta/\rho))$. It implies $iii)$ because each external neighbor in $T$ maps to at most one vertex of $K_i$ through $f$, i.e., $|f(T) \cap K_i| \leq |E(K_i) \cap T|$.
\end{proofclaim}

We deduce that the algorithm is well-defined.

\begin{claim}
W.h.p., for each $C_i$, there is a pair $u_i, v_i$ as required in \cref{step:RT-non-critical} of \cref{alg:RT}.
\label{claim:H-edges}
\end{claim}

\begin{proofclaim}
Let $u_i$ be any node in $T_i$.
If $x_i = f(u_i)$ is in a critical almost-clique $C_j$, then $f(S_i)$ includes at most one other vertex $z \in C_j \cap f(S_i)$ because $f(S_i)$ is an independent set and $C_j$ is not small, hence not solitary. Both $x_i$ and $z$ (if it exists) have at most $\Delta/\alpha\rho^2$ neighbors in $D_i$, for otherwise $C_i$ would be $\alpha\rho^2$-popular.
From \cref{claim:downsizing}-$i)$, we have that $|T_i| \ge \Delta/(50\rho^2)$, w.h.p.
Thus, if we exclude from consideration the nodes in $T_i$ that map by $f$ to either $x_i$ or $z$, we are left
with $|T_i| - 2\Delta/(\alpha\rho^2) \ge \Delta/(\alpha \rho^2) \ge \rho$ suitable candidates (using that $\alpha \geq 150$).
Any of these can be chosen as $v_i$, thereby establishing the claim.
\end{proofclaim}

It is now easy to verify that \ref{part:RT-acd} to \ref{part:RT-not-friendly} of \cref{lem:RT} hold. 
It is immediate from \cref{claim:downsizing}-$ii)$ that at most $2\Delta/\rho$ edges are added into $H$ that are incident on any given vertex.
So, a vertex $v\in \Vsparse$ gains at most $2\Delta/\rho$ new neighbors and each of its previous neighbors gains at most $2\Delta/\rho$ incident edges (by \cref{claim:downsizing}-$ii)$. So the sparsity of $v$ decreases by at most $4\Delta/\rho$, staying above $\eta\eps^2\Delta - 4\Delta/\rho \geq (\eta/2) \eps^2\Delta$. Similarly, we add few enough edges to ensure that \cref{part:acd-ext} of \cref{def:acd} continues to hold, up to a constant factor. 

Using the properties of the almost-clique decomposition \ref{part:RT-acd}, we deduce that $H$ contains no $\Delta$-clique.

\begin{claim}
    The graph $H$ has maximum degree $\Delta$ and contains no $\Delta$-clique.
\end{claim}

\begin{proofclaim}
    The maximum degree remains at most $\Delta$, for each edge added incident to $v$, another edge incident to $v$ is removed. So we focus on proving that $H$ contains no $\Delta$-clique. 

    A $\Delta$-clique in $H$ must be within an almost-clique, since its nodes must be dense, and it can't involve nodes from different almost-cliques (by \cref{def:AC} (\ref{part:acd-neighbors})). But the almost-clique decomposition did not change \ref{part:RT-acd} and no edges were added inside critical almost-cliques (by \cref{step:RT-non-critical}), so the $\Delta$-clique would have to have been contained in $G'$, a contradiction.
\end{proofclaim}

This concludes the proof of \cref{lem:RT}.
\end{proof}

\subsection{Step 2: Slack Generation}
\label{sec:slackgen}
Formally, when we color some graph, the slack of a vertex is the difference between the number of colors it has available and its number of (uncolored) neighbors. For $(\Delta-1)$-coloring, vertices start with $-1$ slack and each time we same-color a pair of neighbors, we increase the slack by one.
We use a well-known technique for generating pairs of same-colored neighbors: let vertices pick random colors. The algorithm is as follows; recall that we run it on the graph $H$ obtained from Step 1.

\begin{Algorithm}
\alg{slack-generation}
\medskip

\textbf{Input:} a graph $H$

\textbf{Output:} a proper partial $(\Delta-1)$-coloring
\medskip

\begin{enumerate}[label=$\roman*)$]
    \item vertices join $A_{SG}$ \wp $p_{SG} = 1/10$.
    \item vertices\footnote{For consistency, set $\chi(v) = \bot$ for every $v \notin A_{SG}$.} of $A_{SG}$ sample $\chi(v)$ uniformly at random in $[\Delta-1]$.
    \item if $\chi(v) \notin \chi(N(v))$, then set $\col(v)$ to $\chi(v)$ and otherwise let $v$ uncolored, i.e., $\col(v) = \bot$.
\end{enumerate}
\label{alg:slackgen}
\end{Algorithm}

To implement \cref{alg:slackgen} after the streaming pass, each vertex uses the one color from $\rL_2$ as the random color $\chi(v)$. It therefore suffices to look at the edges of the sparsified graph to know if a vertex retains its color.

The activation probability ensures that not too many vertices get colored by slack generation.
\begin{fact}
    \label{fact:slackgen-nb-colored}
    After \cref{alg:slackgen}, \whp, every $C_i$ contains at most $\Delta/5$ colored vertices.
\end{fact}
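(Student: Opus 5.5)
The bound follows from the fact that few vertices of $C_i$ are even activated. A vertex can only be colored by \cref{alg:slackgen} if it joined $A_{SG}$, so the number of colored vertices in $C_i$ is at most $|C_i \cap A_{SG}|$. It therefore suffices to bound $|C_i \cap A_{SG}|$; we ignore the extra requirement that the sampled color $\chi(v)$ not be repeated in $\chi(N(v))$.

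\textbf{Expectation.} $|C_i \cap A_{SG}|$ is a sum of $|C_i|$ independent Bernoulli variables with parameter $p_{SG} = 1/10$, one per vertex of $C_i$. The almost-clique $C_i$ keeps its size bound in $H$: by \cref{lem:RT}-\ref{part:RT-acd}, it is still an almost-clique with $|C_i| \leq (1+\eps/2)\Delta$. Hence
\[
\Exp{|C_i \cap A_{SG}|} \leq (1+\eps/2)\Delta/10 \leq 0.11\,\Delta =: \mu_{\mathsf{H}}.
\]

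\textbf{Concentration.} Apply the upper tail \cref{chernoff-mult-upperbound} of \cref{lem:chernoff} with $\mu_{\mathsf{H}} = 0.11\Delta$ and $\delta = 0.8$, so that $(1+\delta)\mu_{\mathsf{H}} < \Delta/5$. The failure probability is $\exp(-\Omega(\Delta))$. Since $\Delta \geq 2\alpha\rho^3$ and $\rho = \Theta(\eps^{-2}\log n)$, this is at most $1/\poly(n)$.

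\textbf{Union bound.} There are at most $n$ almost-cliques, so a union bound over them gives the claim with high probability.

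There is no real obstacle. The only point to check is that the size bound on $|C_i|$ still holds in $H$, which \cref{lem:RT} provides; it is also immediate from the fact that the vertex partition is merely restricted to $V(H)$, so $C_i$ can only shrink.
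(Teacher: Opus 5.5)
Your proposal is correct and follows the paper's proof: a vertex is colored only if it joins $A_{SG}$, Chernoff bounds the number of activated vertices in each $C_i$ (the paper uses the threshold $1.5\,p_{SG}|C_i|$; you use $\mu_{\mathsf{H}} = 0.11\Delta$ with $\delta = 0.8$), and a union bound over the almost-cliques finishes. Your constants check out, and your remark that the size bound on $C_i$ carries over to $H$ fills in a detail the paper leaves implicit.
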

\begin{proof}
    During slack generation, a vertex gets colored only if active. Since every vertex gets activated independently with probability $p = 1/10$, the Chernoff Bound implies that \whp at most $1.5 p|C_i| < \Delta / 5$ are colored in every almost-clique. By union bound, this holds for all almost-cliques of the decomposition.
\end{proof}

We use three flavors of slack generations. First, we have the well-known result (see, e.g., \cite{RM02,EPS15,HKMT21}) that $\zeta$-sparse vertices (cf.~\cref{def:sparse})
receive $\Omega(\zeta)$ slack \wehp in $\zeta$. This explains why sparse vertices and small almost-cliques can be colored easily. 

\begin{proposition}[{\cite[Lemma 6.1]{HKMT21}}]
    \label{thm:slack-gen}
    There exists a universal constant $\gamma \in (0,1)$ for which the following holds.
    Suppose $H$ is a graph of maximal degree $\Delta \gg 1$.
    If $\col$ is the coloring produced by \cref{alg:slackgen}, then a $\zeta$-sparse vertex $v$ with $\zeta \gg 1$ has that
    \[
        |L_\col(v)|
        \geq \deg_\col(v, H) + \gamma \cdot \zeta
    \]
    with probability at least $1 - \exp \paren*{ - \Omega( \zeta) }$.
\end{proposition}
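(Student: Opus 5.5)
We follow the classic slack-generation analysis (Molloy--Reed, Elkin--Pettie--Su, \cite{HKMT21}) and count same-colored anti-edges in $N(v)$. First fix $v$ and use $\zeta$-sparsity: $H[N(v)]$ has at least $\zeta\Delta$ anti-edges. Every vertex has degree at most $\Delta$, so we can greedily extract a family $\mathcal{M}$ of $\Omega(\zeta\Delta)$ anti-edges $\set{u,w}\subseteq N(v)$. We then estimate the number $Z$ of colors $c\in[\Delta-1]$ that appear at least twice in $N(v)$ among colored vertices.

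The key identity is
\[
|L_\col(v)| \ge (\Delta-1) - |\col(N(v))| = (\Delta-1) - |N(v)\cap\dom\col| + \bigl(|N(v)\cap \dom\col| - |\col(N(v))|\bigr).
\]
If $\deg(v)\le\Delta-1-\Omega(\zeta)$, slack is automatic, so assume $\deg(v)$ is nearly $\Delta$. The last bracket is at least $Z$, and we must show $Z\ge \gamma\zeta+1$ to absorb the $-1$. Since $\zeta\gg1$, the $+1$ is harmless.

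\emph{Expectation.} For an anti-edge $\set{u,w}\in\mathcal{M}$, consider the event that both are activated, both pick color $c$, and neither $u$, $w$ nor $v$'s other relevant neighbors conflict, i.e.\ no neighbor of $u$ or $w$ picks $c$. Since each vertex is active with probability $1/10$ and picks $c$ with probability $1/(\Delta-1)$, the probability that no vertex in $N(u)\cup N(w)$ picks $c$ is at least $(1-\tfrac{1}{10(\Delta-1)})^{2\Delta}\ge e^{-1/4}$. So a fixed pair is same-colored and retained with probability $\Theta(1/\Delta)$.

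To count distinct colors and not pairs, define $Z$ as the number of colors $c$ such that \emph{exactly} two vertices of $N(v)$ pick $c$, these two are non-adjacent, and both keep $c$. Summing over $\Omega(\zeta\Delta)$ anti-edges, with inclusion--exclusion to discount a third picker (probability $O(1)$ bounded away from $1$), gives $\Exp{Z}=\Omega(\zeta)$.

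\emph{Concentration.} This is the main obstacle, because $Z$ depends on the randomness of up to $\Delta^2$ vertices at distance two, and a naive bounded-differences bound is too weak. We would write $Z=Z_1-Z_2$, where $Z_1$ counts colors picked by exactly two non-adjacent vertices of $N(v)$, ignoring retention, and $Z_2$ counts such colors where one of the two loses its color.

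$Z_1$ is a function only of the choices of vertices in $N(v)$. Each vertex affects $Z_1$ by at most $1$, and a certificate for $Z_1\ge s$ has size $2s$. So Talagrand's inequality (or the certifiable form used in \cite{HKMT21}) gives concentration within $\pm \Exp{Z_1}/10$ with probability $1-\exp(-\Omega(\zeta))$.

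For $Z_2$, we bound the upper tail the same way. A witness for $Z_2\ge s$ is a set of $O(s)$ vertices: the pair in $N(v)$ plus one conflicting neighbor. Changing one vertex's choice changes $Z_2$ by at most $2$. Talagrand again gives $Z_2\le \Exp{Z_2}+\Exp{Z_1}/10$ with probability $1-\exp(-\Omega(\zeta))$.

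With the constants chosen so that $\Exp{Z_1}-\Exp{Z_2}=\Omega(\zeta)$, we get $Z\ge\gamma\zeta+1$ with probability $1-\exp(-\Omega(\zeta))$. This yields the stated inequality, with $\gamma$ extracted from these constants.
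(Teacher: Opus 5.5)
The paper does not prove this statement; it imports it from \cite{HKMT21}. Your plan is the standard one (cf.\ \cite{RM02}): split on the degree, use the anti-edges of $N(v)$ to get $\Exp{Z}=\Omega(\zeta)$, and write the repeated-color count as a difference of two concentrated variables. The expectation part is fine. The concentration step, which you rightly flag as the crux, fails as written.

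You define $Z_1$ and $Z_2$ through colors picked by \emph{exactly} two vertices of $N(v)$, and then claim that $Z_1\ge s$ is certified by the $2s$ pickers. It is not. Fixing those $2s$ trials does not stop a third vertex of $N(v)$ from also picking one of the $s$ colors, and that removes the color from $Z_1$. To certify that no other vertex of $N(v)$ picked $c$, you must reveal essentially all of $N(v)$. The certificate then has size $\Theta(\Delta)$ rather than $O(s)$, and Talagrand gives nothing better than a bounded-differences bound of order $\exp(-\Omega(\zeta^2/\Delta))$. That is useless when $\zeta\ll\sqrt{\Delta}$, which is exactly the regime the paper needs. The same objection applies to your witness for $Z_2$.

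The missing idea is to use monotone ``at least'' events, so that small witnesses exist.
\begin{itemize}
\item Let $Z_1$ count colors assigned to at least two non-adjacent neighbors of $v$. One such pair is a witness.
\item Let $Z_2$ count those colors that are removed from at least one neighbor of $v$ to which they were assigned. The witness is the pair, the removed picker $y\in N(v)$, and a neighbor of $y$ holding the same color, so at most $4$ trials per color.
\end{itemize}
Any color counted by $Z_1$ but not by $Z_2$ is retained by all of its (at least two) pickers in $N(v)$. Hence $|N(v)\cap\dom\col|-|\col(N(v))|\ge Z_1-Z_2$, and both variables are $2$-Lipschitz.

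Now let $N(v)$ contain $A\ge\zeta\Delta/2$ anti-edges. The disjoint events ``exactly $u,w$ pick $c$ within $N(v)$'' give $\Exp{Z_1}\ge(1-o(1))e^{-1/10}A/(100\Delta)$. A union bound over witnessing pairs gives $\Exp{Z_2}\le(1+o(1))\tfrac{3}{10}\cdot A/(100\Delta)$. Talagrand with deviations $\Theta(A/\Delta)$ then leaves $Z_1-Z_2=\Omega(\zeta)$ with probability $1-\exp(-\Omega(\zeta))$.

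Two smaller corrections:
\begin{itemize}
\item A vertex that changes its choice affects both its old and its new color, so the Lipschitz constant is $2$, not $1$. This is harmless.
\item $H[N(v)]$ has $\zeta\Delta$ anti-edges only when $\deg(v)=\Delta$. In general it has at least $(\zeta-(\Delta-\deg(v)))\Delta$. So split at $\Delta-\deg(v)\ge\zeta/2$; in that case the slack $\Delta-1-\deg(v)\ge\zeta/2-1$ holds deterministically.
\end{itemize}
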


In very dense critical almost-cliques, the sparsity of individual vertices is too small for \cref{thm:slack-gen} to provide vertices with extra colors (by same-coloring neighbors) with high enough probability. However, when the almost-clique has many different external neighbors, a very similar argument shows that many vertices of the almost-clique get one external neighbor colored the same as a vertex inside the almost-clique. For simplicity, we state the result for a clique $K$, rather than an almost-clique.
\begin{proposition}[{Rephrasing Lemma 5.4 in \cite{FHM23}}]
    \label{lem:slackgen-matching}
    Let $K$ be a clique and $M$ a matching between $K$ and $N(K) - K$. After \cref{alg:slackgen}, \wehp in $|M|$, at least $\Omega( |M| )$ vertices in $V(M) \cap K$ have two non-adjacent neighbors colored the same, i.e., one unit of slack.
\end{proposition}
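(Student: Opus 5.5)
For each edge $e=\{w,x\}\in M$ with $w\in K$ and $x\in N(K)-K$, I will lower-bound the probability of the event $\mathcal{E}_e$ that $x$ keeps a color which some other vertex $w'\in K-w$ non-adjacent to $x$ also keeps, while $w$ itself stays uncolored. On this event $w$ has two non-adjacent, same-colored neighbors: $x\notin K$, and $x$ is not adjacent to $w'$. To define $\mathcal{E}_e$ cleanly, let $x$ be active and keep $\chi(x)=c$, and let $w'\in K$ be a non-neighbor of $x$ with $\chi(w')=c$ that keeps its color.

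First I need candidates for $w'$. The relevant clique is a large one (size about $\Delta-O(1)$), while $x$ has at most $\Delta$ neighbors. I would also use that $x$ is not adjacent to most of $K$; this holds in the intended application via \cref{def:AC}(\ref{part:acd-ext}), so I would state it as an assumption that $|K-N(x)|\geq\delta\Delta$. Given this, the probability that some $w'\in K-N(x)$ selects $c$ and no neighbor of $w'$ or of $x$ selects $c$ is at least
\[
\frac{p_{SG}\delta\Delta}{\Delta-1}\,(1-p_{SG}/(\Delta-1))^{2\Delta}\;=\;\Omega(1),
\]
using a second-moment or disjointness argument over the choice of $w'$. The first-order term is $p\,\delta$ and the exclusion factor is $e^{-2p}$. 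Multiplying by the probability that $x$ is active and that $w$ is inactive gives $\Pr[\mathcal{E}_e]\geq q$ for a constant $q>0$.

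Next comes concentration. Let $Y=\sum_{e\in M}\mathbb 1[\mathcal{E}_e]$, so $\Exp{Y}\geq q|M|$. The events are not independent, because they share randomness through $K$ and through common neighborhoods. I would use a Talagrand-type or bounded-differences argument, following \cite{FHM23,HKMT21}. A cleaner route is to replace $\mathcal{E}_e$ by a certifiable sub-event: $x$ and some $w'$ chose $c$, with no conflict. This sub-event is certified by a constant number of variables, namely the choices of $x$, $w$, and $w'$. A single vertex's choice changes $Y$ by at most $O(1)$, except that a conflict can destroy many events at once. To handle this, I count instead $Z$, the number of $e$ for which $x$ and some $w'\in K$ chose the same color, and $Z'$, the number of such pairs destroyed by conflicts. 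Both are $O(1)$-Lipschitz, since $M$ is a matching and each vertex of $K$ chooses one color, so it certifies at most one $x$-class of colors in its role as $w'$ (up to multiplicity). Talagrand's inequality then gives deviations of order $\sqrt{|M|}\cdot$polylog with probability $\exp(-\Omega(|M|))$, and hence $Y\geq q|M|/2$ \wehp in $|M|$.

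The main obstacle is the Lipschitz/certificate control. Several $x$'s may share the same $w'$ through the same color, and vertices outside $K$ can appear as neighbors of many $x$'s. I would handle this by insisting that the witness $w'$ be matched injectively: only the number of distinct colors shared between $\chi(M\cap(N(K)-K))$ and $\chi(K)$ is counted. This keeps each change bounded by $2$, as in the standard sparsity slack proof.
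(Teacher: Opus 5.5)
The paper gives no proof of this statement; it imports it from \cite{FHM23}. So your argument has to stand on its own. The first half does. Since $K$ is a clique, the events ``$w'$ picks $c$ and no vertex of $N(w')\cup N(x)$ picks $c$'' are disjoint for distinct $w'\in K-N(x)$, which gives $\Pr[\mathcal{E}_e]=\Omega(\delta)$. You are also right to add a hypothesis like $|K-N(x)|\ge\delta\Delta$ for the matched $x$. The literal statement fails when $K$ is a proper subset of a larger clique $Q$ and $M$ matches $K$ into $Q-K$, and your hypothesis holds in every application in the paper.

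Your concentration plan is the right one: Talagrand's inequality on a color-indexed ``attempts minus deletions'' count. Plain bounded differences would not do, because up to $\Theta(\Delta^2)$ trials influence the count, so it only gives $\exp(-\Omega(|M|^2/\Delta^2))$.

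However, several steps of your concentration paragraph fail as written.
\begin{itemize}
\item The per-edge count $Z$ is not $O(1)$-Lipschitz. If $t$ vertices of $X=V(M)-K$ pick a color $c$ and $w'$ is the only vertex of $K$ picking $c$, changing the choice of $w'$ changes $Z$ by $t$. Talagrand needs a worst-case bound.
\item The sub-event ``$x$ and $w'$ chose $c$, with no conflict'' is not certified by the trials of $x,w,w'$. It is a negative condition on up to $2\Delta$ vertices.
\item Your stated conclusion is inconsistent. Deviations of order $\sqrt{|M|}\cdot\mathrm{polylog}$ do not come with failure probability $\exp(-\Omega(|M|))$. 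For $O(1)$-Lipschitz, $O(1)$-certifiable counts bounded by $|M|$, Talagrand gives a deviation of $\gamma|M|$ with probability $\exp(-\Omega(\gamma^2|M|))$, once $|M|$ exceeds a constant.
\end{itemize}

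Only the version in your last paragraph, which counts colors, can work, and it has to be set up precisely. Let $Z$ be the number of colors $c$ for which some $\{w,x\}\in M$ has $w\notin A_{SG}$, $\chi(x)=c$, and $\chi(w')=c$ for some $w'\in K-N(x)$. Let $Z'$ be the number of those colors for which some vertex of $X\cup K$ with $\chi=c$ has a neighbor with $\chi=c$. Then:
\begin{itemize}
\item Changing one trial affects the status of at most three colors: the old color, the new color, and, for a matched $w$, the color of its partner.
\item A color is certified by $3$ trials for $Z$ and by $5$ trials for $Z'$, and $Z'\le Z\le|M|$.
\item Each color counted by $Z-Z'$ is kept by some $x$ and some $w'\in K-N(x)$, while the partner $w$ of $x$ is uncolored.
\item Distinct colors give distinct $x$'s, hence distinct $w$'s because $M$ is a matching.
\end{itemize}

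The step you never carry out is $\mathbb{E}[Z-Z']=\Omega(|M|)$. It does not follow directly from $\Pr[\mathcal{E}_e]\ge q$, because $x$'s sharing a color are counted once. The fix is to strengthen $\mathcal{E}_e$ to $\mathcal{E}'_e$ by also requiring that no vertex of $X-x$ picks $c$. Since $|X|=|M|\le|K|\le\Delta+1$, the exclusion factor is still $(1-p/(\Delta-1))^{O(\Delta)}=\Omega(1)$. The events $\mathcal{E}'_e$ for distinct $e$ involve distinct colors, so $Z-Z'\ge\sum_{e\in M}\mathbf{1}[\mathcal{E}'_e]$, and the expectation bound follows.
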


We conclude this anthology of results about slack generation with one from Reed's original paper on $(\Delta-1)$-coloring. It serves the same purpose as \cref{lem:slackgen-matching} except that it counts vertices that receive two units of slack. Consequently, instead of asking for a matching between $C$ and $N(C) - C$, we require a one-to-two matching. Formally, we define a \underline{triple for $C$} as a tuple $(u,v,w)$ such that $v\in C$, $u \neq w \in E(v)$. We say that a set of triples is disjoint if no vertex belongs to more than one.

\begin{proposition}[{\cite[Section 4]{Reed99} or \cite[Chapter 11.3]{RM02}}]
    \label{lem:reed-triples}
    For every $\eps,\delta\in(0,1/10)$, there is a universal constant $\gamma = \gamma(\eps,\delta,p_{SG})$ for which the following holds.
    Let $C$ be an $(\eps,\delta)$-almost-clique and $\set{(u_i, v_i, w_i)}_{i=1}^t$ be a set of disjoint triples for $C$. Let $Z$ be the random variable counting number of triples for which $v_i$ is uncolored, and $u_i$ and $w_i$ are colored the same as some vertex in $C$ by slack generation. Then, 
    \[
    \Prob*{ Z \leq \gamma \cdot t  } \leq 
    \exp\paren*{-\Theta\paren*{ \frac{t^2}{\Delta}} }
    \]
\end{proposition}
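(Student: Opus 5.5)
We plan to prove \cref{lem:reed-triples} in two parts. First we lower bound $\Exp{Z}$ by $2\gamma t$. Then we show concentration through a martingale or Talagrand-type argument over the independent choices (activation and $\chi(\cdot)$) of all vertices.

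\textbf{Expectation.} Fix a triple $(u_i,v_i,w_i)$ and write $u=u_i$, $v=v_i$, $w=w_i$. We estimate the probability of the event $\mathcal{E}_i$ that the following all hold:
\begin{itemize}
    \item $v$ is inactive;
    \item $u$ and $w$ are both active and retain their colors;
    \item $\chi(u)$ and $\chi(w)$ are each the retained color of some vertex of $C$.
\end{itemize}
Condition on $u$ and $w$ being active and $v$ inactive, which happens with probability $p_{SG}^2(1-p_{SG})$.

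Since $u,w\notin C$, \cref{def:AC}(\ref{part:acd-ext}) gives $u$ at least $\delta\Delta$ non-neighbors $A_u \subseteq C$, and similarly $A_w$ for $w$. By the almost-clique property, for a constant fraction of pairs $(x,y)\in A_u\times A_w$ we have $x\neq y$. Consider the event that
\begin{itemize}
    \item $x$ is active and picks $\chi(x)=\chi(u)$,
    \item $y$ is active and picks $\chi(y)=\chi(w)$,
    \item none of $N(u)\cup N(x)\cup N(w)\cup N(y)$ (at most $4\Delta$ vertices, each active with probability $p_{SG}$) picks these colors.
\end{itemize}
This event has probability $\Theta(1/\Delta^2)$. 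The events for distinct pairs of colors are disjoint, so summing over the roughly $\delta^2\Delta^2$ admissible pairs gives $\Prob{\mathcal{E}_i}\geq c(\eps,\delta,p_{SG})$.

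One detail needs care: if $u$ and $w$ are adjacent, they must pick different colors, and this only loses a $1/\Delta$ factor. So $\Exp{Z}\ge c\,t$, and we set $\gamma=c/2$.

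\textbf{Concentration.} $Z$ is a function of independent trials $(\text{activation}_z,\chi(z))_{z\in V}$. Changing one trial affects only the triples within distance two of $z$. Since the triples are disjoint and degrees are at most $\Delta$, this could be $\Theta(\Delta)$ triples, which is too coarse for plain McDiarmid. We therefore follow Reed/Molloy--Reed (\cite[Chapter 11.3]{RM02}) and split $Z=Z_1-Z_2$:
\begin{itemize}
    \item $Z_1$ counts triples where $v$ is inactive and $\chi(u),\chi(w)$ each appear among the colors chosen in $C$ by vertices not adjacent to $u$ (resp.\ $w$). This ignores whether colors are retained.
    \item $Z_2$ counts triples counted in $Z_1$ where some relevant vertex loses its color.
\end{itemize}
For $Z_1$, changing $\chi(z)$ for $z\in C$ affects only triples whose $u$- or $w$-color equals the old or new $\chi(z)$. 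With a typicality event (each color chosen by $O(1)$ vertices of $C$ plus outer neighborhoods, which holds except with tiny probability), the effect is $O(1)$ on average. We would apply Talagrand's inequality with certificates of size $O(1)$ per counted triple (the vertices $u,w,x,y$ and their choices), or the exceptional-outcomes version of McDiarmid. This gives deviation $t/10$ with failure probability $\exp(-\Theta(t^2/\Delta))$. The $\Delta$ in the exponent arises because the variance proxy sums over about $\Delta$ influential vertices of $C$ (those whose colors can match), each with effect $O(1)$, rather than over $t$ vertices.

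For $Z_2$, we bound it similarly: its expectation is at most a small constant fraction of $\Exp{Z_1}$ once $p_{SG}$ is small, and it has a Lipschitz/certificate structure of the same kind.

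\textbf{Main obstacle.} The hardest step is the concentration step. Disjointness of the triples does not prevent many triples from sharing neighborhoods, so the bounded-difference constants must be handled through the typicality event and Talagrand certificates rather than worst-case Lipschitz bounds. I would first try the Molloy--Reed "Talagrand with exceptional outcomes" template verbatim. This yields exactly the $\exp(-\Theta(t^2/\Delta))$ bound, which is nontrivial only when $t\gg\sqrt{\Delta}$, matching the form stated.
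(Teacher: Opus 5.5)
The paper gives no proof of this statement. It imports it from Reed and from Molloy--Reed, whose Chapter 11 is built on Azuma's inequality, so your sketch has to stand on its own.

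\textbf{Expectation.} This part is fine in substance. To sum over pairs $(x,y)$ you need the events to be disjoint, for example by requiring $x$ to be the only vertex of $A_u$ that chooses $\chi(u)$.

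\textbf{Concentration: the gap.} The typicality event you rely on is false. About $p_{SG}|C|=\Theta(\Delta)$ active vertices choosing among $\Delta-1$ colours is a balls-into-bins experiment. Its maximum load is $\Theta(\log\Delta/\log\log\Delta)$ with high probability, so ``every colour is chosen by $O(1)$ vertices'' fails almost surely.

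More fundamentally, exceptional-outcome versions of Talagrand or McDiarmid add the probability of the exceptional set to the tail bound. Making ``some colour has multiplicity above $k$'' as unlikely as $\exp(-\Theta(t^2/\Delta))$ forces $k\log k\gtrsim t^2/\Delta$. The resulting worst-case Lipschitz constant is $\Theta(k)$, since recolouring one vertex of $C$ can flip every triple whose $u$- or $w$-colour is its old or new colour. That costs a factor $k^2$ in the exponent, and for $t=\Theta(\Delta)$ the two requirements are incompatible.

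Building privacy of the colours into the count would restore $O(1)$ Lipschitz constants. However, privacy and retention are negative conditions, so the count would lose $O(1)$-certifiability.

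Your account of where $\Delta$ enters the exponent is also off. $Z$ depends on up to $\Theta(t\Delta)$ outside vertices (the neighbours of the $u_i,w_i$ and of $C$) that decide retention. McDiarmid over all these trials yields only $\exp(-\Theta(t/\Delta))$. Talagrand yields an exponent of order $t/c^2$, not $t^2/\Delta$.

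\textbf{The missing idea.} Confine the randomness to $O(\Delta)$ trials before using bounded differences.
\begin{enumerate}
\item Let $W=C\cup\bigcup_i\{u_i,v_i,w_i\}$ and condition on the trials of all vertices outside $W$. With probability $1-|W|e^{-\Theta(\Delta)}$, each vertex of $W$ has at most $2p_{SG}\Delta$ active neighbours outside $W$, hence few externally forbidden colours.
\item Given such an outcome, count only triples in which $v_i$ is inactive, $u_i$ and $w_i$ keep distinct colours chosen by no other $u_j,w_j$, and each of these colours is kept by some non-neighbour of $u_i$ (resp.\ $w_i$) in $C$.
\item Your computation still gives conditional expectation $\Omega(t)$. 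Since each colour is now attached to at most one counted triple, changing one trial in $W$ moves the count by $O(1)$ in the worst case.
\item McDiarmid over the $O(\Delta)$ independent trials of $W$ then gives $\exp(-\Theta(t^2/\Delta))$. The atypical outside outcomes cost $e^{-\Theta(\Delta)}\le e^{-\Theta(t^2/\Delta)}$, because $t\le|C|$.
\end{enumerate}
This is where the $t^2/\Delta$ exponent actually comes from, and it explains why the bound is only useful for $t\gg\sqrt{\Delta}$.
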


Note that the concentration from \cref{lem:reed-triples} is effective only when $t$ is asymptotically larger than $\sqrt{\Delta}$. This is a consequence of the fact that, contrary to \cref{thm:slack-gen,lem:slackgen-matching}, counting colors does not suffice to count the number of vertices with two-units of slack.

\subsection{Step 3: Coloring Critical Cliques in \texorpdfstring{$H$}{H}}
\label{sec:coloring-critical-H}
In this step, we prove that all critical almost-cliques of $H$ can be colored after slack generation.
Since the formal argument is involved, let us provide first a bird's-eye view of the proof. It is in three parts: 
\begin{itemize}
    \item Firstly, in \cref{lem:expanding-critical}, we argue that if a subgraph of $H$ induced by a set of critical almost-cliques is such that each almost-clique has two adjacent vertices with respectively one and two units of slack, we can serenely color that subgraph greedily.
    \item Then we analyze how the critical vertices of $H$ get slack from slack generation (\cref{alg:slackgen}). The analysis is three-fold: when many low-degree vertices exist (\cref{lem:low-degree-critical}), or when many vertices have external neighbors with few edges to the almost-clique (\cref{lem:expanding-critical}), or finally when most of the vertices have degree $\Delta$ and external neighbors with many edges to the almost-clique (\cref{lem:temp-slack}). In the first two cases, we show that \whp at least two vertices get slack. For the last case, we argue that \whp the almost-clique has many vertices with one pair of same-colored neighbors and one uncolored external neighbor with $2\rho$ edges to the almost-clique. 
    \item Finally, we color the critical vertices of $H$ in two stages using \cref{lem:critical-extend}. That is to ensure that, in the last type of critical almost-cliques, the uncolored external neighbors with $2\rho$ neighbors in the almost-clique $C$ remain inactive while we color $C$, thereby providing one more unit of slack. The trick is that such external neighbors, if dense, must have $2\rho$ external neighbors. However, critical almost-cliques of $H$, since they are not solitary, contain at most one such vertex.
\end{itemize}

\noindent
In the end, we get the following result.

\begin{lemma}
    \label{lem:coloring-critical-H}
    Let $H$ be the graph obtained from \hyperref[sec:RT]{Step 1} and $\col$ be the coloring produced by \cref{alg:slackgen} in \hyperref[sec:slackgen]{Step 2}.
    With high probability over the randomness of $\rL_2(V)$ and $\rL_3(V)$, there exists a serene extension of $\col$ to all critical almost-cliques of $H$.
\end{lemma}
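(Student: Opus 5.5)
We follow the three-part outline given above and argue clique by clique. Fix a critical almost-clique $C$ of $H$ with core $K$. By \ref{part:RT-not-sol-pop}, $C$ is neither solitary nor $\rho$-popular, so $|K| \geq |C|-1$. By \cref{cor:sparse-recovery-popular}, we know $K$ and $N(v)\cap K$ for every $v$. Moreover, every $v\in K$ has $a(v)+e(v)\leq 4\rho$, so we know $N_H(v)$ for all core vertices, and any greedy choice of colors on $K$ is serene.

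\textbf{Deterministic extension.} The first step is a deterministic extension lemma. Suppose $C$ contains two adjacent uncolored vertices $v_1,v_2$ such that $v_1$ has one unit of slack (two same-colored neighbors) and $v_2$ has two units, or one unit plus an uncolored neighbor that is colored only after $C$. Then we color $C - \set{v_1,v_2}$ greedily, each time choosing a vertex with an uncolored neighbor still remaining; this is possible because $|N(v)|\le\Delta$ and $v$ still has an uncolored neighbor left, e.g.\ $v_1$ or $v_2$. We finish with $v_2$ and then $v_1$, using their slack. Vertices of $C-K$ (at most one) are handled first. Palette degree is counted in $H$, and the added edges $E_{new}$ are known explicitly, so this remains serene.

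\textbf{Probabilistic slack.} The second step shows that \whp every critical $C$ admits such a pair. We split into three cases.
\begin{itemize}
  \item \emph{Many vertices of $K$ have degree less than $\Delta$.} Such vertices already have one unit of slack for free. Only one more unit is needed, and \cref{lem:slackgen-matching} on a matching to distinct external neighbors, or direct counting, gives it.
  \item \emph{Many vertices of $K$ have external neighbors with fewer than $\Delta/(2\rho)$ edges into $K$.} We greedily extract $\Omega(\Delta/\rho)$ disjoint triples, or a large matching, and apply \cref{lem:reed-triples} with $t=\Omega(\Delta/\rho)\gg\sqrt{\Delta}$, using $\Delta\ge\alpha\rho^3$. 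This gives $\Omega(t)$ vertices with two units of slack. Since $K$ is a clique, any two of them are adjacent.
  \item \emph{Most vertices of $K$ have degree $\Delta$ and external neighbors that are $2\rho$-friends.} Since $C$ is not $\rho$-popular, these friends must be arranged like the situation in \cref{fig:amicale}. Also, \ref{part:RT-not-friendly} excludes $|K|=\Delta-1$, so $|K|\geq\Delta$ with $|C|\le\Delta$, meaning $K=C$ is a $\Delta$-clique minus structure, i.e.\ the friend is the unique missing neighbor. Here we would show, via a read-$k$ bound (\cref{lem:k-read}) with $k=O(\rho)$ friends, that \whp many vertices of $K$ get one same-colored pair from their light external neighbors. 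In addition, some friend $s$ is uncolored, with probability bounded below by a constant, independently enough. That friend supplies temporal slack: if $s$ is dense, it has at least $2\rho$ external neighbors, so it lies outside critical cores, and we postpone it.
\end{itemize}

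\textbf{Two-stage coloring.} The third step assembles these cases. Using $\rL_3$, we first color all critical almost-cliques except the deferred friends. A friend lying in a critical $C'-K'$ is colored afterwards, via palette sparsification on $\rL_3$ together with the extra slack it has as a dense vertex with many external neighbors (\cref{fact:acd-dense-sparsity}); alternatively, we color it when processing $C'$, which is the unique non-core vertex there. We then union bound over all cliques.

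The main obstacle is the third case of the second step. Here external neighbors come from only $O(\rho)$ sources, so standard concentration fails. Read-$k$ dependencies with $k$ up to $\Delta/\rho$ must be balanced against needing only two successful vertices, and I expect to need a further subcase split on how many distinct friends there are.
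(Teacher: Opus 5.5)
Your outline has the right shape (an extension lemma, a case analysis of how $C$ gets slack, two-stage coloring). However, the case analysis in your second step does not close, and the case you flag as the main obstacle is exactly where the missing idea lies.

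\textbf{The case split is at the wrong threshold.} Your second case fails as stated. If the external neighbors only have fewer than $\Delta/(2\rho)$ neighbors in $K$, greedy extraction yields about $(\Delta/2)/(\Delta/\rho)=\Theta(\rho)$ disjoint triples. This is tight: all core vertices may attach to a pool of $O(\rho)$ external vertices. Then \cref{lem:reed-triples} only gives failure probability $\exp(-\Theta(\rho^2/\Delta))$, which is useless. To get $t=\Omega(\Delta/\rho)$ the threshold must be $O(\rho)$; the paper uses $Y_C$, the external vertices with fewer than $2\rho$ neighbors in $K$.

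This leaves uncovered the external neighbors with between $2\rho$ and $\Delta/\rho$ neighbors in $K$, and your third case does not handle them:
\begin{itemize}
\item it only treats friends;
\item it rests on the false deduction $|K|\ge\Delta$, whereas by \ref{part:RT-nice} $|K|\le\Delta-1$, and $|K|\le\Delta-2$ once $C$ has a $\rho$-friend, so degree-$\Delta$ core vertices have at least two external neighbors;
\item ``some friend is uncolored with constant probability'' is not a high-probability statement, and with few distinct friends no concentration is available.
\end{itemize}

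\textbf{The missing idea: defer the medium neighbor, not a friend.} The second unit of slack comes from leaving the medium-weight external neighbor uncolored. Any $x\notin C$ with at least $2\rho$ neighbors in $K$ has $e(x)\ge 2\rho>\rho+1$, so it is never a core vertex of a critical almost-clique and may stay uncolored while cores are colored. Since $C$ is not $\rho$-popular, a degree-$\Delta$ core vertex $u_i$ has at most one external neighbor with at least $\Delta/\rho$ neighbors in $K$. So when $u_i$ has at most one neighbor in $Y_C$, you can pick $x_i\in X_C$ and another external neighbor $z_i$, both with at most $\Delta/\rho$ neighbors in $K$. The read-$k$ bound (\cref{lem:k-read}) with $k=\Delta/\rho$ over at least $\Delta/3$ indices then gives, w.h.p., $\Omega(\Delta)$ indices with $u_i,x_i$ inactive and $z_i$ active. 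Applying \cref{lem:slackgen-matching} to a matching of size $\Omega(\rho)$ yields $\Omega(\rho)$ core vertices with a same-colored pair and an uncolored deferred neighbor; this is \cref{lem:temp-slack}.

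\textbf{The deferred non-core vertex.} The deferred vertex may be the non-core vertex $s_j$ of another critical $C_j$. Your plan to color it afterwards by palette sparsification is unjustified. At that point $s_j$ is only guaranteed the $\Theta(\eps^2\rho)$ free colors coming from its sparsity, which $\rL_3(s_j)$ (density $\rho/\Delta$) will typically miss. Moreover, $N(s_j)$ need not have been recovered, so it cannot be colored freely. The paper instead colors such $C_j$ in a second stage. It same-colors $s_j$ with one of its at least $\rho$ anti-neighbors $z_j\in K_j$, which is uncolored w.h.p. and shares $\Omega(\Delta)$ free colors with $s_j$. This colors $s_j$ serenely and gives $K_j\cap N(s_j)$ its second unit of slack; the first unit comes from a matching of size $\rho/3$ to light external neighbors.

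\textbf{Minor slip in the extension step.} Your order is reversed. The one-unit vertex must be colored while its two-unit neighbor is still uncolored, not last. Also, the other core vertices need both $v_1$ and $v_2$ uncolored, not just one, since a degree-$\Delta$ vertex without slack only has $|L_\col(w)|\ge\deg_\col(w,H)-1$.
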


\begin{proof}
We shall prove that, after running slack generation in $H$, every critical almost-clique contains two adjacent vertices such that one has two units of slack and the other has one unit of slack. 
The coloring can then be greedily and serenely extended to all vertices of $H$, thereby proving \cref{lem:coloring-critical-H}.

\begin{claim}
    \label{lem:critical-extend}
    Let $J \subseteq \mathbb{N}$ be such that all $C_j$ for $j\in J$ are critical and define $\fC = \bigcup_{j\in J} C_j$.
    Suppose $\psi$ is a partial serene coloring of $H$ such that for all $j\in J$, there are two \emph{adjacent and uncolored} vertices $u \neq v \in K_j$ such that 
    \begin{enumerate}
        \item $|L_\psi(u)| \geq \deg_\psi(u, \fC)$, and
        \item $|L_\psi(v)| \geq \deg_\psi(v, \fC) + 1$.
    \end{enumerate}
    Then, \whp over the randomness of $\rL_3(\fC)$, the coloring $\psi$ can be extended serenely to all vertices of $\fC$.
\end{claim}

\begin{proofclaim}
    We color the almost-cliques of $\fC$ sequentially. Observe that as we extend the coloring to $\fC$, properties (1) and (2) remain true. Consider a critical almost-clique $C_j$. If $C_j \neq K_j$, let $s$ be the unique vertex of $C_j$ not in $K_j$. If $s$ is not colored, we color it as follows. By \cref{fact:slackgen-nb-colored}, it has at least $\Delta/5 - \eps\Delta$ uncolored neighbors, thus at least $\Delta/5 - \eps\Delta -1 \geq \Delta/10$ available colors. The probability that none of the colors from $\rL_3(s)$ is available is at most $(1 - \rho/\Delta)^{\Delta/10} \leq \exp(-\rho/10) \leq 1/\poly(n)$. So we may color $s$ first with a color from $\rL_3(s)$. Then, we greedily color the vertices of $C_j - N(u) \cap N(v) - u -v$, and then those of $C_j - u - v$. Since $u$ is adjacent to $v$, which is uncolored, we may color $u$ next. Finally, the vertex $v$ always has at least one available color, so it may be colored last.
    Observe that, by \cref{lem:sparse-recovery}, we know $N_G(v)$ and $N_H(v)$ for all $v\in K_i$. In particular, our coloring remains serene.
\end{proofclaim}

For the second part of the proof, we analyze how critical almost-cliques of $H$ get slack from slack generation. Consider a fixed critical $C = C_i$ of $H$ with core $K = K_i$. Let us begin with the case where $K$ contains many vertices of degree less than $\Delta$.

\begin{claim}[Low-Degree Vertices]
    \label{lem:low-degree-critical}
Suppose $S \subseteq K$ is a set of at least $\Delta/10$ vertices with degree less than $\Delta$ in $H$. Then after slack generation (\cref{alg:slackgen}), \whp, there are two vertices $u \neq v \in S$ as in \cref{lem:critical-extend}.
\end{claim}

\begin{proofclaim}
If at least one vertex of $S$ has degree at most $\Delta - 2$ and another has degree at most $\Delta-1$, the claim already holds. So we may assume without loss of generality that all vertices of $S$ have degree $\Delta - 1$. 
We show that there exists a matching of size at least $\rho/10$ between vertices of $S$ and $C - N(C)$. \cref{lem:slackgen-matching} then implies that, \whp, at least two vertices $u\in S$ have an external neighbor colored the same as a neighbor in $K$ by slack generation. As such, at least two vertices in $K$ have $|L_\col(v)| \geq \Delta - 1 - |\col(N_H(v))| \geq \Delta - 1 - |N_H(v) \cap \dom\col| + 1 \geq \deg_\col(v,H) + 1$.

To prove the existence of the matching, we argue that each vertex of degree $\Delta-1$ has at least one external neighbor with fewer than $\Delta/\rho$ neighbors in $K$. A greedy construction therefore produces a matching of $\frac{\Delta/10}{\Delta/\rho} \geq \rho/10$ edges $\set{u, v}$ where $u\in S$ and $v\in N(C) - C$.

There are three cases.
\begin{itemize}
\item If $|C| = |K| = \Delta-1$. The vertices of degree $\Delta-1$ have external degree exactly $\Delta-1 - (|K| - 1) = \Delta - |K| = 1$. Since $C$ has no $\rho$-friend \ref{part:RT-not-friendly}, the external neighbor has fewer than $\Delta/\rho$ edges to $K$.

\item If $|C| = |K| \leq \Delta-2$, then a vertex of degree $\Delta-1$ has exactly $\Delta-1 - (|K| - 1) \geq 2$ external neighbors. Since $C$ is not $\rho$-popular \ref{part:RT-not-sol-pop}, at most one of these external neighbors may have $\Delta/\rho$ edges to $K$.

\item If $|C| - 1 = |K| \leq \Delta-2$, call $s$ the vertex of $C - K$. A vertex of degree $\Delta-1$ in $K - N_H(s)$ has at least two external neighbors, of which at most one can have $\Delta/\rho$ edges to $K$. A vertex of degree $\Delta-1$ in $K \cap N_H(s)$ has at least one external neighbor, which cannot have $\Delta/\rho$ edges to $K$ for otherwise $C$ would be $\rho$-popular, with $s$ and that external vertex.
\end{itemize}
These are all possible cases since $H$ contains no $\Delta$-clique and $C$ is not solitary.
\end{proofclaim}

The remaining two cases depend on how $C$ connects to the outside.
Partition external neighbors of $C$ depending on their number of neighbors inside $K$:
\[
    X_C = \set{ u\in N(C) - C : |N(u) \cap K| \geq 2\rho }
    \quad\text{and}\quad
    Y_C = N(C) - (C \cup X) \ .
\]
First, observe that cliques where most vertices have two external neighbors in $Y_C$ are easy.

\begin{claim}[Expanding Cliques]
    \label{lem:expanding-critical}
    If there is a set $S \subseteq K$ of at least $\Delta/2$ vertices with two external neighbors in $Y_C$, then after slack generation, $\Omega(\Delta/\rho)$ vertices of $S$ have $|L_\col(v)| \geq \deg_\col(v, H) + 1$, with high probability.
\end{claim}

\begin{proofclaim}
    Construct greedily disjoint triples $(u, v, w)$ where $v\in K$ and $u\neq w\in E(v) \cap Y_C$. Since $u$ and $w$ have at most $2\rho$ neighbors in $K$, we can find at least $t = (\Delta/2)/(4\rho) = \Delta/(8\rho)$ such disjoint triples. By \cref{lem:reed-triples}, after slack generation, at least $\Omega(t)$ triples are such that their $v$-vertex has two pairs of same-colored neighbors \wehp in $t^2 / \Delta = \Omega( \Delta / \rho^2 )$; hence \whp for $\Delta \geq \rho^3$. Simple accounting shows that such vertices $v$ verify $|L_\col(v)| \geq \deg_\col(v, H) + 1$.
\end{proofclaim}

For the last case, we do not show directly that some vertex gets two pairs of same-colored neighbors. Instead, we argue that there must be many vertices with one pair of same-colored neighbor and one \emph{uncolored} external neighbor in $X_C$ whose coloring can be deferred.

\begin{claim}
    \label{lem:temp-slack}
    Suppose there is a set $S \subseteq K$ of at least $\Delta/3$ vertices of degree $\Delta$ with at most one external neighbor in $Y_C$. 
    Then, \whp, at least $\Omega(\rho)$ vertices in $S$ have an uncolored neighbor in $X_C$ and an external neighbor colored the same as a vertex in $K$ by slack generation.
\end{claim}

\begin{proofclaim}
    Number the vertices of $S$ as $u_1, u_2, \ldots, u_\ell$ where $\ell = |S| \geq \Delta/3$.
    First, we show that every $u_i$ has neighbors $x_i \in E(u_i) \cap X_C$ and $z_i \in E(u_i) - x_i$ with $|N_H(x_i) \cap K|, |N_H(z_i) \cap K| \leq \Delta/\rho$.
    The proof is by case analysis. 
    
    If $|K| = \Delta-1$, then $C = K$ and $C$ has no $\rho$-friend \ref{part:RT-not-friendly}, hence every $u_i$ has two external neighbors, at least one of which belongs to $X_C$ (since $u_i$ would otherwise have more than one external neighbor in $Y_C$). Suppose now that $|K| \leq \Delta - 2$. If $K = C$, then every $u_i$ has $\Delta - (|K|-1) \geq 3$ external neighbors. At least two belong to $X_C$ (since $u_i$ would otherwise have more than one external neighbor in $Y_C$) and at most one can have more than $\Delta/\rho$ edges to $K$ (since $C$ is not $\rho$-popular, by \ref{part:RT-not-sol-pop}). If $K = C - s$, the case where $u_i \in K - N(s)$, is similar. When $u_i \in K \cap N(s)$, then $u_i$ has $\Delta - (|C|-1) \geq 2$ external neighbors. Neither of them has $\Delta/\rho$ edges to $K$, for $K$ would otherwise be $\rho$-popular (because of $s$) and at most one of them belong to $Y_C$. So we find suitable $x_i$ and $z_i$ for all $u_i$.

    Now, we show that after slack generation, \whp, at least $\Omega( \Delta )$ indices $i \in[\ell]$ are such that $u_i$ and $x_i$ are uncolored, and $z_i$ is colored the same as some vertex in $K$. 
    First, we argue that, after Step (1) of slack generation, \whp over the randomness of $A_{SG}$, at least $\Omega( \ell )$ vertices $u_i$ are not in $A_{SG}$, have $x_i \notin A_{SG}$ and $z_i\in A_{SG}$.
    Let $X_i$ be the indicator random variable of the event that
    $u_i, x_i \notin A_{SG}$ and $z_i \in A_{SG}$. 
    Clearly, $\Exp{ X_i } = (1-p_{SG})^2 p_{SG}$. The activation of some $u_i$ affects only $X_i$; activation of other vertices in $C$ are irrelevant. The activation of a vertex $x_i,z_i$ outside $C$ affects all neighboring vertices in $C$, which is at most $k = \Delta / \rho$ by our choice of $x_i$ and $z_i$. Let $X = \sum_i X_i$, by the read-$k$ bound (\cref{lem:k-read}) with $\delta = (1-p_{SG})^2 p_{SG} /2$, we have that
    \[
    \Prob*{ \abs*{ X - \Exp{X} } > \Exp{ X }/2 }
    \leq \exp\paren*{ -\frac{\paren*{ (1-p_{SG})^2 p_{SG}/2 }^2 \cdot 2\ell}{k} }
    \leq 1/\poly(n) \ .
    \]
    Thus, \whp, the set $U = \set{u_i : u_i, x_i \notin A_{SG}, z_i\in A_{SG}}$ of such vertices has size at least $(1-p_{SG})^2 p_{SG} \ell/2 = \Omega( \Delta )$.

    Call $R$ the set of activated external neighbors $z_i \in A_{SG}$ and consider the bipartite graph with bipartition $(U, R)$ and an edge between $u\in U$ and $v\in R$ iff the edge exists in $H$. 
    As we argued, $U$ has size $\Omega(\Delta)$ and, by definition, each $u_i \in U$ has at least one edge to $z_i \in R$. Conversely, every vertex in $R$ has at most $\Delta / \rho$ neighbors in $U \subseteq K$, so there is a matching of size at least $\Omega(\rho)$ in this bipartite graph. Using the \cref{lem:slackgen-matching}\footnote{In \cite{FHM23}, the proof of \cref{lem:slackgen-matching} requires that vertices be activated randomly with probability, say, $1/2$. Given that we condition on the activation of slack generation $A_{SG}$ earlier in the proof, it might not be clear that the argument still applies. {A simple fix consist in having two shots of activation: view the $1/10$ activation probability in \cref{alg:slackgen} as the result of the random experiment where we activate first with probability $1/5$, and then again with probability $1/2$.} The argument from this proof conditions on the first activation bit, while \cref{lem:slackgen-matching} uses the randomness of the second activation.}, we get that at least $\Omega(\rho)$ vertices $u\in U$ adjacent to $z\in R$ such that $z$ is colored the same as a vertex in $K$.
\end{proofclaim}

Now that we have established how each type of critical almost-clique gets slack, let us describe how we serenely extend the coloring produced by slack generation.
We color critical almost-cliques of $H$ in two stages.

\underline{Stage 1:}
Let $J_1$ be the set of indices such that $C_j = K_j$ or the unique vertex in $C_j - K_j$ has external degree strictly less than $2\rho$ or was colored by slack generation. For the $j\in J_1$ such that $C_j$ is as described by \cref{lem:low-degree-critical,lem:expanding-critical}, \whp, the vertices described in \cref{lem:critical-extend} (with $J = J_1$ and $\fC_1 = \bigcup_{j\in J} C_j$) exist. 
Otherwise by \cref{lem:temp-slack}, \whp, all $C_j$ with $j\in J_1$ contain at least $\Omega(\rho) > 2$ vertices $u \in K_j$ with a neighbor $x \in X_{C_j}$ uncolored in $\col$ and a neighbor $z$ colored the same as a vertex of $K_j$.
Crucially, we have that $x \notin \fC_1$ because the core vertices in critical almost-cliques have external degree at most $\rho+1 < 2\rho$ --- which is less than necessary to belong to $X_{C_j}$ --- and the only non-core vertices with external degree $\geq 2\rho$ included in $\fC_1$ are colored by slack generation --- while $x$ is uncolored.
Consequently, the number of colors available to $u$ is 
\begin{align*}
    |L_\col(u)| &= \Delta - 1 - |\col(N_H(u))|\\
    &\geq \Delta - |N_H(u) \cap \dom\col| \tag{because $u$ has two neighbors colored the same} \\
    &\geq \deg_\col(u, H) \geq \deg_\col(u, \fC_1)+1 \ .
    \tag{because $\deg(u,H)=\Delta$ and $x \notin \fC_1$}
\end{align*}
All almost-cliques of $\fC_1$ verify the conditions of \cref{lem:critical-extend}, hence we serenely extend the coloring produced by slack generation such that all vertices of $\fC_1$ are colored.

\underline{Stage 2:}
Let $J_2$ be the remaining indices $j$ such that $C_j$ is critical. In particular, for $j\in J_2$, we have that $C_j - K_j$ contains a vertex $s_j$ uncolored after Stage 1. Call $D = K_j \cap N(s_j)$. If $D$ verifies the assumptions of either \cref{lem:low-degree-critical} or \cref{lem:expanding-critical}, \whp, at least two vertices verify the assumptions of \cref{lem:critical-extend}. 
If this is not the case, we can construct a matching of $\frac{(1-\eps)\Delta - \Delta/10 - \Delta/2}{\Delta/\rho} \geq \rho/3$ vertices between $D$ and $N(C) - C$ because each vertex outside $C$ has at most $\Delta/\rho$ neighbors in $D$ (otherwise $C$ would be $\rho$-popular). By \cref{lem:slackgen-matching}, \whp, at least $\Omega(\rho)$ vertices of $D$ have a pair of same-colored neighbors. 
We claim that, \whp, one anti-neighbor of $s_j$ was not colored by slack generation. Indeed, since $e(s_j) \geq 2\rho$ (because $j$ would otherwise be in $J_1$), $s_j$ has at least $|C_j| - (\Delta+1) + e(s_j) \geq \rho$ anti-neighbors, where the last inequality uses that $C_j$ is not small. The probability that they are all activated during slack generation is at most $p_{SG}^{\rho} \leq 1/\poly(n)$, and hence \whp at least one $z_i \in K_j - N(s_j)$ is uncolored after Stage 1. We same-color $s_j$ and $z_j$ using a color from $\rL_3(s_j)$, which exists \wp at least $1 - (1 - \rho/\Delta)^{\Delta/3} \geq 1 - 1/\poly(n)$. Then every vertex of $D$ has two pairs of same-colored vertices in their neighborhood. Hence, the vertices of $\fC_2 = \bigcup_{j\in J_2} C_j$ can be colored by \cref{lem:critical-extend}. 

This concludes the proof of \cref{lem:coloring-critical-H} for $\fC_1$ and $\fC_2$ consists of all vertices in critical almost-cliques of $H$.
\end{proof}

\subsection{Step 4: Coloring Sparse Vertices \& Small Almost-Cliques}
\label{sec:coloring-H-not-critical}
We color the non-critical vertices of $H$ in the following order: first, the sparse vertices; second, the small almost-cliques with fewer than $10^7\eps\Delta$ anti-edges; last, the small almost-cliques with at least $10^7\eps\Delta$ anti-edges.

\paragraph{Coloring Sparse Vertices.}
By \cref{thm:slack-gen}, has $\deg_\col(v, H) + \Omega(\eps^2 \Delta)$ available colors after slack generation. Again, as the coloring has only been extended, this remains true in Step 4 after we have colored all almost-cliques of $H$. Then, the greedy algorithm colors the sparse vertices with high probability. Indeed, when it comes to $v$, whatever the current coloring may be, the probability that none of $\Omega(\eps^2\Delta)$ colors available to $v$ get sampled in $\rL_4$ is at most $\paren*{ 1 - \frac{\rho}{\Delta} }^{\Omega(\eps^2\Delta)} \leq \exp(- \Omega( \eps^2\rho) ) \leq 1/\poly(n)$, from our choice of $\rho$. It extends to all sparse vertices by union bound.

\paragraph{Coloring the Small Almost-Cliques.}
We use that every dense vertex in a small almost-clique $C$ is $\Omega(\eps^2 \rho)$-sparse. Indeed, if $e(v) \geq \rho/2$, it follows from \cref{fact:acd-dense-sparsity}, and otherwise $v$ has degree at most $|C| - 1 + e(v) - a(v) \leq \Delta - \rho/2$, hence is $\Omega(\rho)$-sparse. By \cref{thm:slack-gen} about slack generation, that after \cref{alg:slackgen}, \whp every $v$ in a small almost-clique has
\begin{equation}
    \label{eq:sparse-slackgen}
    |L_\col(v)| 
    \geq \deg_\col(v,H) + \Omega( \eps^2\rho ) 
    \geq \deg_\col(v,H) + \Omega(\log n) \ ,
\end{equation}
where the second inequality uses that $\rho = \Theta(\eps^{-2}\log n)$.
Note that the coloring produced by \hyperref[sec:coloring-critical-H]{Step 3} is an extension of the coloring produced by slack generation, thus that \cref{eq:sparse-slackgen} continues to holds at the beginning of Step 4. We now argue that the coloring can be extended with palette sparsification.

Recall that almost-cliques with $10^7\eps\Delta$ anti-edges can be colored using palette sparsification, by the argument of \cite{ACK19} stated here as \cref{lem:streaming-color-trouée}. We stress that the small almost-cliques with many anti-edges are colored last, because \cref{lem:streaming-color-trouée} assumes that vertices get uncolored, which means we might have to recolor some vertices colored by slack generation. In any case, it suffices to prove the following.

\begin{lemma}
    \label{lem:streaming-color-small}
    Fix a small almost-clique $C$ with fewer than $10^7\eps\Delta$ anti-edges. With high probability over $\rL_2(V)$ and $\rL_4(C)$, any extension of the coloring computed by slack generation can be further extended to $C$ using only colors from $\rL_4(v)$ for $v\in C$.
\end{lemma}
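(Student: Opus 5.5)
We reduce to a palette-sparsification matching argument in the spirit of \cref{lem:palette-graph}, using the slack guaranteed by \cref{eq:sparse-slackgen}. Fix $C$ small with fewer than $10^7\eps\Delta$ anti-edges, and condition on a slack-generation outcome for which \cref{eq:sparse-slackgen} holds for every $v\in C$. This holds \whp over $\rL_2(V)$ via \cref{thm:slack-gen}, since every vertex of a small almost-clique is $\Omega(\eps^2\rho)$-sparse.

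Let $\psi$ be an arbitrary extension of this coloring with $C\cap\dom\psi$ containing only the slack-generation colors. Write $U=C-\dom\psi$ and $k=|U|$. Since extensions only color more vertices, each colored neighbor removes at most one color, so
\[
|L_\psi(v)| \ge \deg_\psi(v,H) + \Omega(\eps^2\rho) \ge \deg_\psi(v,U)+\Omega(\log n)
\]
for all $v\in U$. The key point is that this slack persists under any extension.

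We then color $U$ vertex by vertex with an ordering argument rather than a perfect matching. Because $C$ has few anti-edges, most of $U$ forms a near-clique, and vertices in $U$ have $\deg_\psi(v,U)\ge k-1-a(v)$.

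The plan is as follows. First, color the vertices of $U$ in an arbitrary order greedily. When vertex $v$ is reached, it has at least $|L_\psi(v)| - (\text{number of already colored } U\text{-neighbors}) \ge \Omega(\log n) + (\text{uncolored } U\text{-neighbors})\ge \Omega(\eps^2\rho)$ available colors. The probability that $\rL_4(v)$, which samples each color with probability $\rho/\Delta$, misses all of them is $(1-\rho/\Delta)^{\Omega(\eps^2\rho)}$.

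This is where the main obstacle lies: the bound is useless, because $\eps^2\rho\cdot\rho/\Delta\ll1$. Greedy only works when the available set has size $\Omega(\Delta/\rho\cdot\log n)$. To handle this, I would split $U$ by ordering. Vertices colored early have many available colors, since they have $\approx$ uncolored-degree many colors, which is $\Omega(\Delta)$ while $\Omega(\Delta)$ neighbors remain uncolored. Only the last $O(\Delta/\rho)$ vertices are problematic. For these, I would invoke \cref{lem:palette-graph} on the bipartite graph between the remaining set $\fL$ of the last $k'$ vertices and $\fR=$ colors of $[\Delta-1]$ not used on $N(\fL)$ outside $\fL$ (trimmed so that $|\fR|\le 2k'$).

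Conditions (2)–(3) follow from the slack: each $v$ misses at most $a(v)$ colors relative to the ideal, and $\sum_v a(v)\le 2\cdot10^7\eps\Delta$ is small compared to $k'$ only if $k'$ is chosen large, say $k'=\Theta(\eps^{1/2}\Delta)$. The slack $\Omega(\log n)$ then covers the $-k/4$ deficiency only in aggregate. The sampling probability $\rho/\Delta \ge 20(\log k'+\log n)/k'$ holds for this $k'$.

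Finally, there are $\exp(\tilde O(n))$ possible extensions $\psi$, so a direct union bound fails. Instead, I would note that the Hall-type conditions of \cref{lem:palette-graph} depend on $\psi$ only through which colors are blocked. I would argue monotonicity: an extension can only shrink $\fR$ by as much as it shrinks $\fL$'s demands. It therefore suffices to apply the lemma to the worst case determined by $C$ and the slack-generation coloring, and then take a union bound over the $n$ almost-cliques.
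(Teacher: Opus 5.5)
\textbf{Main gap: the quantifier over extensions.} The problem is your final paragraph. You want the statement for all extensions $\psi$ at once, via a single ``worst-case'' palette graph, but no such graph exists. Once the coloring inside $C$ is the slack-generation coloring, $\fL$ and $\fR$ are fixed. An extension then changes the palette graph only by deleting, for each $v\in\fL$, the edges from $v$ to the colors it places on $E(v)$. Different extensions delete different edge sets, so the intersection of all these palette graphs is far too sparse for \cref{lem:palette-graph}. Your monotonicity claim therefore does not reduce the problem to one fixed instance.

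In fact the simultaneous version is false in general. A vertex $v$ of a small almost-clique may have $e(v)\ge 2\rho$, while with high probability $|\rL_4(v)|\le 2\rho$. An extension chosen after seeing $\rL_4(C)$ can then (in a suitable graph) put exactly the colors of $\rL_4(v)$ on external neighbours of $v$, leaving $v$ nothing to pick.

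What the paper's proof establishes, and all the algorithm needs, is the per-extension statement. When $C$ is processed, the coloring outside $C$ is determined by randomness independent of $\rL_4(C)$: slack generation, $\rL_3$, and the $\rL_4$-lists of sparse vertices and of previously handled almost-cliques. So you fix that coloring, apply \cref{lem:palette-graph} once, and union-bound over the at most $n$ almost-cliques.

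\textbf{Secondary issues: $\fR$, and the unnecessary hybrid.} As written, $\fR$ is mis-specified. The colors avoided by all of $N(\fL)-\fL$ form the intersection of the available lists over $\fL$, including the colors on every external neighbour, and this set can have fewer than $k'$ colors. The right choice, which your ``misses at most $a(v)$'' reasoning already presupposes, is the set of colors used by no vertex of $C$.

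With that $\fR$, the greedy phase and the trimming are unnecessary: the paper takes $\fL$ to be all uncolored vertices of $C$.
\begin{enumerate}
\item At most $\Delta/5$ vertices of $C$ were colored by slack generation, so $k=|\fL|\ge(4/5-\eps)\Delta$, and hence $|\fR|\le\Delta-1\le 2k$ with no trimming.
\item $|\fR|\ge k$ follows from any vertex without anti-neighbours.
\item The bound $\deg(v,\fR)\ge|L_\psi(v)|-a(v)\ge k+\Omega(\log n)-2a(v)$ gives condition (2) directly.
\item Summing that bound over $S$ gives condition (3), since $|S|\cdot\Omega(\log n)$ with $|S|\ge k/2\ge\Delta/3$ dominates $2\sum_{v\in S}a(v)\le 4\cdot 10^7\eps\Delta$.
\end{enumerate}

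Your hybrid can be repaired. An early vertex still has at least $k'-\eps\Delta$ available colors, and arbitrary trimming is harmless because $\deg(v,\fR')\ge|\fR'|-e(v)$ for every $\fR'\subseteq\fR$. Note, though, that with $\eps=10^{-8}$ the anti-edge budget $10^7\eps\Delta=\Delta/10$ is not small compared to $k'=\Theta(\eps^{1/2}\Delta)$. Your condition (3) would then rest entirely on aggregate slack over only $k'/2$ vertices.
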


\begin{proof}
Consider such a $C$.
Let $\fL$ be the uncolored vertices of $C$ and $\fR$ the colors used by none of the vertices in $C$. Let $\fG$ be the palette graph: the bipartite graph with bi-partition $\fL \cup \fR$ and an edge between $v\in \fL$ and $\chi\in \fR$ when $\chi$ is available to $v$, with respect to the coloring we computed so far. We use \cref{lem:palette-graph} to show that $\fG$ where each edge is retained with probability $\rho/\Delta$ has a $\fL$-perfect-matching, thus the coloring can be extended to every vertex of $C$, with high probability.
Call $t$ the number of vertices in $C$ colored by slack generation.

Call $k = |\fL| = |C| - t$.
Every color available to $v$ belongs to $\fR$ unless it is used by an anti-neighbor. Hence, every $v\in \fL$ has $\deg(v, \fR)
    \geq |L_\col(v)| - a(v)
    \geq k + \Omega(\log n) - 2 a(v)$,
where the second inequality uses \cref{eq:sparse-slackgen} along with the fact that $v$ has at least $k - a(v)$ uncolored neighbors in $C$.

Now, let us explain how we apply \cref{lem:palette-graph}. Since $C$ contains fewer than $10^7\eps\Delta \leq \Delta/10$ anti-edges, from picking $\eps \leq 10^{-8}$, at least half of the vertices have no anti-neighbor. For such a vertex $v$, the lower bound on $\deg(v,\fR)$ implies that $|\fR| \geq \deg(v, \fR) \geq k$.
The inequality $|\fR| \leq 2k$ holds because $|\fR| \leq \Delta - 1$ and $k = |C| - t \geq 4\Delta/5 \geq |\fR|/2$ by \cref{fact:slackgen-nb-colored}. This shows the Part (1) of \cref{lem:palette-graph}. Part (2) follows from the lower bound on $\deg(v,\fR)$ since $k \geq \Delta/2$ by \cref{fact:slackgen-nb-colored}, and $a(v) \leq \eps\Delta \leq 2\eps k \leq k/3$ for every vertex.
We conclude by proving Part (3). Fix $S \subseteq A$ such that $|S| \geq k/2$. Then, using again the lower bound on $\deg(v,\fR)$, we have
\begin{align*}
    \sum_{v\in S} \deg(v,\fR)
    &\geq |S| \cdot k + |S| \cdot \Omega( \log n )  - 2\sum_{v\in S} a(v) \\
    &\geq |S| \cdot k + |S| \cdot \Omega( \log n ) - 2 \cdot 10^7\eps\Delta
    \geq |S| \cdot k \ ,
\end{align*}
where the second inequality uses the bound on the number of anti-edges in $C$ and the last that $\eps \leq 10^{-8}$.
\end{proof}

\subsection{Step 5: Inverting the Reed Transform}
\label{sec:IRT}
In this step, we take the coloring of $H$ computed thus far and inverse the Reed Transform from \cref{alg:RT} to deduce a coloring of $G'$, i.e., of the all vertices except those in non-small solitary or non-small $\alpha\rho^2$-popular almost-cliques. We emphasize that we may change colors given by slack generation (\cref{alg:slackgen}), thus guarantees of \cref{thm:slack-gen,lem:slackgen-matching,lem:reed-triples} do not apply henceforth.

The algorithm same-color two pairs of vertices in each almost-clique removed by the Reed Transform. This guarantees that the remaining vertices can be serenely colored (recall that we know the neighborhood of core-critical vertices). We argue that the pairs can be same-colored greedily, in fact they always have $\Omega( \Delta / \rho )$ colors available. When certain pairs of vertices to same-color overlap, then we same-color entire independent sets at once.

\begin{lemma}
    \label{lem:IRT}
    Let $H$ be the graph from \cref{alg:RT}. Suppose $\col$ is a total proper serene coloring of $H$. With high probability, we compute a serene coloring of $G'$, i.e., of the graph $G$ where all non-small solitary and non-small $\alpha\rho^2$-popular almost-cliques are removed.
\end{lemma}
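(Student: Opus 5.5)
We extend $\col$ from $V(H)$ to $V(G') = V(H) \cup \bigcup_{i\in I} C_i$ one removed almost-clique at a time. A removed clique is reinserted last, so its external neighbors are already colored when we handle it. The starting point is the edge $x_iy_i\in E_{new}$: since $\col$ is proper on $H$, we have $\col(x_i)\neq\col(y_i)$, even though $x_i,y_i$ need not be adjacent in $G'$. Deleting $E_{new}$ keeps $\col$ proper on $G'[V(H)]$, so the only task is coloring each $C_i$. We do not recolor any vertex of $V(H)$.

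For a fixed $i\in I$, recall that $K_i$ is a $(\Delta-1)$-clique, $s_i$ is the friend (inside $C_i$ when $C_i\neq K_i$), and $u_i,v_i\in T_i\subseteq D_i$ have $f(u_i)=x_i$ and $f(v_i)=y_i$. We make two same-colorings.
\begin{enumerate}
\item Color $v_i$ with $\col(x_i)$. This is legal: $v_i$'s external neighbors are only $s_i$ and $y_i$, so $x_i\notin N(v_i)$ (using $x_i\neq y_i$ and the fact that $f(S_i)$ is independent); $\col(y_i)\neq\col(x_i)$; and $s_i$ will avoid this color.
\item Color $s_i$ with the color of some anti-neighbor $z\in K_i - N(s_i)$, which exists because $s_i$ is a friend with a non-neighbor in $K_i$. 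We choose $z\neq v_i$ and give both the same fresh color $c\notin\col(N(s_i)\cup N(z))$, $c\neq\col(x_i)$.
\end{enumerate}
After this, $u_i$ sees the two same-colored pairs $\{v_i,x_i\}$ and $\{s_i,z\}$ in its neighborhood. Its neighborhood is $K_i - u_i$ together with $s_i$ and $x_i$, so it has at most $\Delta-2$ distinct colors around it. Every other vertex of $K_i$ sees at least the pair $\{s_i,z\}$ or the pair $\{v_i,x_i\}$ and has an uncolored neighbor $u_i$. We then color $K_i - u_i$ greedily, each vertex having degree at most $\Delta$ with one color saved and $u_i$ still uncolored, and finish with $u_i$. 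Since all of $K_i$ is core-critical, we know $N_G$ for these vertices (\cref{cor:sparse-recovery-popular}), so this greedy coloring is serene.

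The color for $s_i$ must come from $\rL_5(s_i)$ unless $N_G(s_i)$ is known. The pair $(s_i,z)$ has at most $2\Delta$ colored neighbors, so the count "$\Delta-1$ minus the used colors" is not directly large. We instead argue that $s_i$ and $z$ share many neighbors: all of $K_i\cap N(s_i)$, which has size at least $\Delta/(2\rho)$. Hence $|\col(N(s_i)\cup N(z))|\le\Delta-1-\Omega(\Delta/\rho)$ once $K_i$ is still uncolored; this works because we color the pair before the rest of $K_i$. Then $\rL_5(s_i)$, which has sampling rate $\rho^2/\Delta$, hits a free color with probability $1-\exp(-\Omega(\rho))$. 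This gives $\Omega(\Delta/\rho)$ available colors, as the lemma's intro promises.

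The main obstacle is interactions between different removed cliques. A vertex $x_j$ or $y_j$ of one clique could be a vertex of another removed clique $C_i$, or an $s_i$ could be shared, so that the pairs to same-color overlap and the colors $\col(x_i)$ are not fixed in advance. The random set $A_{RT}$ is meant to prevent $x$/$y$-vertices from also being $u$/$v$-vertices. When $s$-vertices or anti-neighbors coincide across cliques, I would group the vertices that must share a color into an independent set (for instance $s$ together with anti-neighbors in several cliques) and color each group at once with a color from the intersection of their lists. By \cref{claim:downsizing}(ii),(iii), each vertex has at most $O(\Delta/\rho)$ such constraints, so $\Omega(\Delta/\rho)$ colors still remain and the same sampling argument applies. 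I would first verify that these groups are independent sets and that processing them before all core colorings keeps every count above $\Omega(\Delta/\rho)$.
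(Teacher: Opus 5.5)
\textbf{There is a genuine gap.} Your plan matches the paper's strategy: same-color $x_iv_i$ and $s_iz$, finish $K_i$ greedily, and merge colliding constraints into independent groups colored from $\rL_5$. However, two of your concrete steps are wrong, and the part you defer is where the proof actually lives.

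First, ``we do not recolor any vertex of $V(H)$'' cannot be kept. When $C_i=K_i$, the friend $s_i$ lies outside $C_i$ and is typically a vertex of $H$ already colored by $\col$, so your ``fresh color'' for $s_i$ is a recoloring. It is also a necessary one. In the configuration of \cref{fig:amicale}, $C_i$ has a second friend $t$, non-adjacent to $s_i$, whose neighbors in $K_i$ are exactly $K_i-N(s_i)$. Nothing in $H$ forces $\col(s_i)\neq\col(t)$, and if the two colors coincide, the $(\Delta-1)$-clique $K_i$ has only $\Delta-2$ colors left, so no extension exists. This is why the paper first uncolors all of $\cS$. Doing so is exactly what creates the overlaps you worry about: a freed $s$ may be the friend of several removed cliques, or equal some $x_j$ (so $v_j$ must follow its new color), or equal some $y_j$.

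Second, your final greedy step asserts that every vertex of $K_i-u_i$ sees one of the two pairs. A degree-$\Delta$ vertex of $K_i-N(s_i)-N(x_i)$ other than $z$ sees neither pair, and with only $u_i$ uncolored it may face all $\Delta-1$ colors. The fix is to also keep a second vertex of $D_i-\{u_i,v_i\}$ uncolored (it has one unit of slack from $\{s_i,z\}$). This is precisely the hypothesis of \cref{lem:critical-extend} that the paper invokes.

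You also leave unproved the two claims that carry the argument.

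Independence of the groups is not automatic. When $w$ is the friend of several removed cliques, the chosen $z_i\in K_i-N(w)$ must be pairwise non-adjacent and non-adjacent to every $v_j$ with $x_j\in\{w\}\cup\{z_i\}$. Yet each $z_i$ has up to two external neighbors, which can be other such vertices. \cref{lem:nice-zi} handles the case $w\in C_i$, $S(w)=\{i\}$ directly. Otherwise it shows $|K_i-N(w)|\geq\rho$ and picks each $z_i$ inside a random $1/10$-subset $R$, among vertices whose external neighbors (and relevant $f(v_j)$) avoid $R$, with read-$k$ concentration.

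For the counting, ``each vertex has $O(\Delta/\rho)$ constraints, so $\Omega(\Delta/\rho)$ colors remain'' does not follow, since a pair $(s,z)$ starts with only about $\Delta/(2\rho)$ free colors. What is needed is slack relative to degree in the contracted graph $Q_1=G/\mathscr{I}_1$. \cref{claim:s-d1LC} proves $|L(\fI)|\geq\deg(\fI,Q_1)+\Delta/(4\rho)$. It charges the list loss and virtual degree of the other members of $\fI$ against uncolored neighbors of $s$ that are not virtual neighbors: the $\geq\Delta/(2\rho)$ neighbors in each $K_i$ with $i\in S(s)$, and $u_j$ for each $j$ with $x_j=s$.

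This accounting also dictates an order:
\begin{enumerate}
\item uncolor $\cS$;
\item give $v_i$ the color of an already colored $x_i$ (safe, since its only possibly colored neighbor is then $y_i$, and $x_iy_i\in E(H)$);
\item color the groups $\fI(w)$ greedily from $\rL_5(w)$;
\item color the pairs $x_iv_i$ with $x_i$ a core vertex of another removed clique by $(\deg+1)$-list-coloring, using $|\cX\cap C_i|\leq 4\Delta/\rho$ from \cref{claim:downsizing}.
\end{enumerate}

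Finally, the paper also treats removed cliques in which some vertex of $D_i$ has degree $<\Delta$ (only $s_iz_i$ is needed) or $f(S_i)$ is not independent (take $x_i,y_i$ adjacent in $G'$). You only handle $i\in I$.
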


\begin{proof}
The goal is to same-color all pairs of vertices $x_i v_i$ and $s_i z_i$, where $z_i$ is an anti-neighbor of $s_i$ in $K_i$. Then the coloring can be serenely extended to all $2\rho$-friendly critical almost-cliques with a $(\Delta-1)$-clique, by \cref{lem:critical-extend}. If a vertex in $D_i$ does not have degree $\Delta$, we do not have any vertices $x_i$ and $y_i$; same-coloring the $s_iz_i$ pair suffices. If $f(S_i)$ is not an independent set, we pick any pair $u_i \neq v_i \in S_i$ such that $x_i = f(u_i)$ and $y_i = f(v_i)$ are adjacent in $G'$.

Denote by $\cS$, $\cX$, $\cY$, $\cU$ and $\cV$ the set vertices respectively containing all $s_i$, $x_i$, $y_i$, $u_i$ and $v_i$. Recall that \cref{alg:RT} chooses $\cX$ and $\cV$ disjoint (because $\cV \subseteq A_{RT}$ while $\cX \cap A_{RT} = \emptyset$).
For $w$, call $S(w)$ the set of $i\in I$ for which $w = s_i$. First, let us explain how we select the $z_i$ vertices.

\begin{restatable}{claim}{ClaimZi}
    \label{lem:nice-zi}
If $|S(w)| \geq 1$, then there exist vertices $z_i \in K_i - N(w)$ for all $i\in S(w)$ and such that 
    \[
    \fI(w) := 
    \set{ w } \cup \set{ z_i : w = s_i } \cup \set{ v_i : w = x_i \text{ or } z_j = x_i \text{ for some $j\in S(w)$} }
    \]
    is an independent set.
\end{restatable}

\begin{proofclaim}
There are two cases.

\underline{Case 1:} $w\in C_i$ and $S(w) = \set{i}$.
    Pick any $z_i\in K_i - N(w)$. Let us argue that $\fI(w)$ is an independent set. First, $z_i$ is not adjacent to a $v_j$ with $w = x_j$ because $v_j$ has only one external neighbor different from $s_j$, and it is $w$. Consider a $v_j$ with $z_i = x_j$. There are no edges between $v_j$ and $w$ because it would mean that $w = y_j$ hence that the Reed Transform chose $x_j, y_j$ in the same critical almost-clique, which is impossible (by \cref{alg:RT}, \cref{step:RT-non-critical}). Also note that $\cV$ is an independent set (because $\cV \subseteq A_{RT}$ while $f(\cV) \cap A_{RT} = \emptyset$).

\underline{Case 2:} Observe that, if we are not in Case 1, we have that $K_i - N(w)$ contains at least $\rho$ vertices for each $i\in S(w)$. It follows from \cref{part:acd-ext} of \cref{def:AC} when $w \notin C_i$ and, when $w \in C_i$, from the observation that $w$ has at least $e(w) \geq \Delta/(2\rho)$ external neighbors (because $|S(w)| \geq 2$ hence $w$ is $\Delta/(2\rho)$-friendly with an almost-clique different from $C_i$), thus $w$ has at least $a(v) \geq e(v) - \rho \geq \rho$ anti-neighbors (the first inequality uses that $C_i$ is not small).

    We use the probabilistic method.
    Sample each node of $\bigcup_{i \in S(w)} K_i - N(w)$ into a set $R$ \wp $1/10$.
    For each $i \in S(w)$, let $R_i = R \cap K_i - N(w)$.
    Define $X_u$ as the indicator random variable equal to one iff $u \in R$ and $E(u) \cap R = \emptyset$ and $f(v_j) \notin R$ for all $v_j$ with $u = x_j$. Observe that if we only select $z_i$'s such that $X_{z_i} = 1$, then we are guaranteed that $\fI(w)$ is an independent set. Since $u$ has at most two external neighbors, each leading to at most one $f(v_j)$, it should be clear that $\Exp{ X_u } \ge 1/10(1-4/10) \geq 1/100$. The variable depends on vertices of external degree at most two joining $R$ or not. As each vertex in some $K_i - N(w)$ has at most two external neighbors, the read-$k$ bound (\cref{lem:k-read} with $k=O(1)$ and $\delta=1/200$) implies that $\sum_{u\in K_i - N(w)} X_u \geq \rho / 200$ \wehp in $\rho/k = \Omega( \rho )$, hence with high probability in $n$.
    
    Taking the union bound over the at most $\Delta$ indices in $S(w)$, we can pick all the $z_i$ this way, \wp $1-\exp(-\Omega(\rho))$. In particular, there must exist such a choice of $\{z_i\}$, as desired.
\end{proofclaim}

Call $z_i$ the vertex given by \cref{lem:nice-zi} in $C_i$ and  call $\cZ$ the set of all $z_i$ vertices. 
Note that $\cX$ can overlap with $\cS$ and $\cZ$.
The nodes in $\cZ,\cV$ are core-critical but not those in $\cS$. 
We same-color the pairs in the following order:
\begin{enumerate}
    \item uncolor all vertices of $\cS$,
    \item color all $v_i$ with $\col(x_i)$ where $x_i$ is colored,
    \item color all pairs $s_i z_i$ and $x_i v_i$ where $x_i \in \cS \cup \cZ$, and finally
    \item color all remaining pairs $x_i v_i$, i.e., where $x_i \notin \cS \cup \cZ$ and $x_i\in C_j$ with $j\in I$.
\end{enumerate}
Step (2) cannot create a color conflict because $v_i$ has at most one colored external neighbor $y_i$ (after uncoloring $S$) and $H$ (from the Reed Transform) contained an edge between $x_i$ and $y_i$.

Let us explain how we perform Step (3). We form a virtual graph $Q_1$ by contracting sets of vertices in $G$. More precisely, for a set of \emph{disjoint} sets of vertices $\mathscr{I} \subseteq 2^{V(G)}$, denote by $G/\mathscr{I}$ the graph with vertex set $\mathscr{I}$ and an edge between two vertices if the corresponding sets contain adjacent vertices in $G$. Let $\mathscr{I}_1$ be the sets of $\fI(w)$ for all $w\in S$ and $Q_1 = G/\mathscr{I}_1$. For a vertex $\fI$ of $Q_1$ (i.e., $\fI \in \mathscr{I}_1$), let $L(\fI)$ be the intersection of the lists of available colors in $G$ overall the vertices in $\fI$, i.e., $L(\fI) = \bigcap_{w\in \fI} L_\col(w)$. 

Since each $\fI(w)$ is an independent set, a coloring of $Q_1$ using lists $L(I)$ induces a proper extension of the coloring to $G$ where vertices from the same $\fI(w) \in \mathscr{I}_1$ are colored the same.
Since each $s_i$ has at least $\Delta/\rho$ uncolored neighbors, while each element $\fI(s_i)$ other than $s_i$ gives rises to $O(1)$ edges, we do obtain a $(\deg+ \Omega(\Delta/\rho))$-list-coloring instance.

\begin{claim}
    \label{claim:s-d1LC}
    For all $\fI \in \mathscr{I}_1$, we have that $|L(\fI)| \geq \deg(\fI, Q_1) + \Delta / 4\rho$.
\end{claim}
\begin{proofclaim}
    By definition, $\fI$ contains exactly one vertex $s \in S$ and the remaining vertices are from $\cZ$ or $\cV$.
    Let $W$ be the number of neighbors of $s$ (in $G$)
    that have already been colored. Let $T$ be the number of uncolored neighbors of $s$ that are not in an almost-clique containing a vertex of $\fI$.
    Let $R = |S(s)|$ be the number of pairs $s_j z_j$ such that $s = s_j$.
    For each of the $R$ pairs $s z_i$ (for which $s = s_i$), the vertex $s$ has $\Delta / 2\rho$ neighbors in $K_i$ (as a $2\rho$-friend of $C_i$ in $G$).
    Let $P$ be the number of pairs $s v_i$ such that $s = x_i$.
    For each of the $P$ pairs $s v_i$ (for which $s = x_i$), $s$ has a neighbor $u_i$ in $C_i$. 
    Since the total degree of $s$ in $G$ is at most $\Delta$, we have that 
    \[ T + R \cdot \Delta/(2\rho) + P + W \le \Delta\ . \]

    To count the degree of $\fI$ in $Q_1$, we count the number of edges between vertices of $I$ and vertices in some $\fI' \in \mathscr{I}_1 - \fI$. First, each of the $T$ uncolored neighbor of $s$ could belong to some $\fI'$. Second, each $v_i$ with $s = x_i$ has one edge to $s_i \notin \fI$. Third, for each $j\in S(s)$, the vertex $s$ is adjacent to $v_j \notin \fI$ that gets same-colored as some $x_j$. Finally, each $z_i \in \fI$ has at most two external neighbors that each cause to increase of $\fI$ by one, either because the external is an uncolored vertex in some other $\fI'\in \mathscr{I}_1$ or because it is a $v_j$ such that $z_i = x_j$, and it has an edge to $s_j$. So we may bound the degree in $Q_1$ as 
    \[
        \deg(\fI, Q_1) \leq T + 3R + P \ .
    \]

    The $L(\fI)$ contains at least $\Delta - 1 - W - 2R$ colors because a pair $x_iv_i$ where $x_i = w$ or $x_i = z_j$ is such that all neighbors of $v_i$ are uncolored.
    Combining the two displayed expressions, we obtain that $|L(\fI)| \geq T + P + R (\Delta/(2\rho) - 3) - 1$. Since $R \ge 1$ and $\Delta \geq \rho^3$ by assumption, the claim follows.
\end{proofclaim}

To implement this in streaming, recall that we know all edges incident to core-critical vertices, which include $\cZ$ and $\cV$. In particular, we can compute the sets $\fI(w)$ described in \cref{lem:nice-zi}, which thereby implies that we know $Q_1$.
We color vertices of $Q_1$ greedily.
When it comes to $\fI = \fI(w)$ with $w\in S$, we claim that we can color $\fI$ with some color from $\rL_5(w)$ with high probability. Indeed, by \cref{claim:s-d1LC}, for any partial coloring of $Q_1$, the vertex $\fI$ still has $\Omega(\Delta/\rho)$ available colors, thus
\[
\Prob*{ \rL_5(w) \cap L(\fI) = \emptyset } 
\leq \paren*{ 1 - \frac{\rho^2}{\Delta} }^{\frac{ \Delta }{ 4\rho }}
\leq \exp(-\Omega(\rho)) 
\leq 1/\poly(n) \ .
\]
In particular, when we color vertices of $\fI$ with that color, the coloring remains serene because (1) $w$ is the only non-core-critical vertex of this set, and it gets a color from its random list, and (2) we know the neighborhood of all other vertices from sparse recovery.

The pairs that remain to be same-colored are of the form $x_i v_i$ such that $x_i \notin \cS \cup \cZ$ and $x_i \in C_j$ for some $j\in I$. As such, it has at least $\Delta-O(\rho)$ available colors and at most two external neighbors. It can be that $x_i = x_j = x$, in which case we must same color the three vertices $x$, $v_i$ and $v_j$. Note that $v_i$ and $v_j$ cannot be adjacent as $\cV$ is an independent set. Since the uncolored vertices in $\cX$ and $\cV$ belong to $(\Delta-1)$-cliques of which at most two vertices are colored, the vertices to same-color share at least $\Delta - O(1)$ available colors. 

\cref{claim:downsizing}-$iii)$ implies that $|\cX \cap C_i| \leq 4\Delta/\rho$ for all $i\in I$ because $\cX \subseteq f(T)$.
Hence, the degree of virtual vertices is at most $12\Delta/\rho$, and we can use $(\deg+1)$-list-coloring. Recall that all the $x_i$ that remains are core-critical, hence we know their neighborhood and can serenely color them with any color.
\end{proof}

\subsection{Step 6: Post-processing}
Finally, we assume that $\col$ is a serene coloring of $G'$, i.e., of all the vertices except for the non-small solitary and non-small $\alpha\rho^2$-popular almost-cliques. We iterate over such almost-cliques and apply either \cref{lem:color-popular} or \cref{lem:color-solitary}. The algorithm successfully colors all such almost-cliques with high probability by the union bound.

\begin{lemma}
    \label{lem:color-popular}
    Suppose $\col$ is a serene coloring and $C$ is an $\alpha\rho^2$-popular almost-clique that is neither small nor solitary.
    With high probability over $\rL_6(V)$, we can compute a serene coloring $\psi$ such that $C \cup \dom\col \subseteq \dom\psi$.
\end{lemma}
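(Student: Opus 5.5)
The plan is to exploit the structure in \cref{fig:populaire}. Since $C$ is $\alpha\rho^2$-popular, there are two $\alpha\rho^2$-friends $a,b$ with a common neighbor $w\in K$ and distinct anti-neighbors $a'\neq b'\in K$ with $\set{a,a'},\set{b,b'}\notin E$. By \cref{cor:sparse-recovery-popular} we know $K$ and $N(x)\cap K$ for every $x$, so these five vertices can be identified after the stream. We also know the full neighborhood of every core vertex, since $a(v)+e(v)\le 4\rho$ there. Hence every vertex of $K$ may be colored with any available color while keeping the coloring serene. The only vertices that must use random lists are $a$, $b$ and possibly the vertex $s\in C-K$ (if $C\neq K$).

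\textbf{Step 1 (uncolor and same-color the pairs).} Uncolor $a$ and $b$ (and $s$, if present and not already colored outside). $C$ was removed in preprocessing, so no vertex of $C$ is colored. Now color $a$ and $a'$ with a common color from $\rL_6(a)$, and then $b$ and $b'$ with a common color from $\rL_6(b)$. To see a color exists, count for $a$. Its colored neighbors number at most $\Delta-\Delta/(\alpha\rho^2)$, because at least $\Delta/(\alpha\rho^2)$ of its neighbors lie in $K$ and are uncolored. The vertex $a'$ has at most $e(a')\le\rho+1$ colored neighbors. So the pair $\set{a,a'}$ has at least $\Delta/(\alpha\rho^2)-O(\rho)$ common available colors. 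The probability that $\rL_6(a)$, which samples at rate $\rho^3/\Delta$, misses all of them is at most $\exp(-\Omega(\rho/\alpha))\le 1/\poly(n)$. The same bound holds for $b,b'$, additionally avoiding the color just given to $a$ when $b\sim a$ or $b'\sim a$; this costs only $O(1)$ more colors. The choice is serene: $a,b$ get colors from their random lists, and $a',b'$ are core vertices whose neighborhoods we know. Before this, if $s$ exists and is distinct from $a,b$, color it first from $\rL_6(s)$. This is possible because it has at least $\Delta/2$ uncolored neighbors in $K$, hence many available colors.

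\textbf{Step 2 (greedy completion).} After Step 1, each vertex of $K\cap N(a)\cap N(b)$ (in particular $w$) sees two same-colored pairs, i.e.\ it has two units of slack. Each vertex of $K\cap(N(a)\cup N(b))$ has at least one unit. Every other core vertex has degree at most $\Delta$ and therefore at worst $-1$ slack, but it has uncolored neighbors in $K$ until the end. Order the uncolored vertices of $K$ by distance to $w$ in $G[K]$, farthest first, ending with a neighbor $u\in N(a)\cup N(b)$ of $w$ and then $w$ itself, as in \cref{lem:critical-extend}. Each vertex has an uncolored neighbor later in the order when it is colored, so it sees at most $\Delta-2$ colored neighbors and a free color exists. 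The last two vertices $u,w$ are covered by their one and two units of slack, respectively. If $a'$ or $b'$ coincides with $w$ or with the other friend's neighbors, it is already colored and causes no issue.

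\textbf{Main obstacle.} The delicate part is Step 1: ensuring that uncoloring $a$ and $b$ does not break earlier guarantees. Recoloring $a$ may invalidate structure used for other almost-cliques, and $a$ or $b$ may itself be a core vertex of another removed critical clique, or be a friend of several popular cliques. I would handle this as in \cref{claim:s-d1LC}: process popular cliques sequentially and, for each friend, count neighbors that were uncolored by other popular cliques. The $\Omega(\Delta/\rho^2)$ margin from \cref{lem:color-popular}'s friendship threshold then absorbs these, since each vertex is a friend of at most $O(\alpha\rho^2)$ cliques. Serenity is preserved throughout: only $a$, $b$ and $s$ use random colors, and all other recolored vertices are core-critical.
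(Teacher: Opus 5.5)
Your proposal is correct and follows essentially the paper's proof. Color $s$ first if it exists and is not one of the friends. Then same-color each friend with its anti-neighbor in $K$ using a color from $\rL_6$ of the friend; the $\Delta/(\alpha\rho^2)-O(\rho)$ shared available colors make failure probability $\exp(-\Omega(\rho/\alpha))$. Finally complete $K$ greedily, coloring a neighbor of a friend (one unit of slack) and then $w$ (two units) last.

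The cross-clique accounting in your last paragraph is unnecessary. The lemma is stated for an arbitrary serene $\col$, and $a,b$ are recolored with colors available with respect to all their current neighbors, so properness is preserved without reserving any margin.
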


\begin{proof}
    We assume without loss of generality that $C$ is uncolored in $\col$, and otherwise begin by uncoloring all its vertices.
    Let $K$ be the core of $C$ and let $x_1$, $x_2$ be two $\alpha\rho^2$-friend of $C$ and $w$ their shared neighbor in $C$. 
    If $C \neq K$ and $s \in C - K$ is not $x_1$ nor $x_2$, we begin by coloring $s$. It has at least $\Delta/2$ uncolored neighbors in $C$, thus at least $\Delta/2-1 - \eps\Delta\geq\Delta/3$ available colors. The probability that none of those colors are sampled in $\rL_6(s)$ is at most $(1 - \rho^3/\Delta)^{\Delta/3} \leq 1/\poly(n)$. So we may color $s$ while keeping the coloring serene. Recall that since $C$ is not solitary, the only vertex in $C-K$ is $s$, thus only $K$ remains to color.

    We have that $x_1, x_2 \notin K$ and $w\in K$. Recall that every $v\in K$ has $a(v) \leq 1$ and $e(v) \leq \rho+1$ because $C$ is not solitary nor small, respectively. In particular, we know $N_G(v)$ for all $v\in K$. Denote by $z_1 \neq z_2 \in K$ the anti-neighbors of $x_1$ and $x_2$ respectively (whose existence is given by definition, c.f. \cref{def:popular}).
    Then $L_\col(x_i) \cap L_\col(z_i)$ contains at least $\Delta/(\alpha\rho^2) - 1 - e(z_i) \geq \Delta/(2\alpha\rho^2)$ colors, because $x_i$ has at least $\Delta/(\alpha\rho^2)$ uncolored neighbors in $K$. Since each color is sampled independently with probability $\rho^3/\Delta$, we have that
    \[
    \Prob*{ \rL_6(x_1) \cap L_\col(x_1) \cap L_\col(z_1) = \emptyset }
    \leq \paren*{ 1 - \frac{\rho^3}{\Delta} }^{\frac{\Delta}{2\alpha\rho^2}} 
    \leq \exp(-\rho/2\alpha) \leq 1/\poly(n) \ .
    \]
    So, \whp, we can serenely same-color $x_1$ and $z_1$. Similarly, \whp, we can serenely same-color $x_2$ and $z_2$ with a different color. As $w$ is incident to $x_1, x_2, z_1, z_2$, it has two units of slack. Any other vertex in $N(x_1) \cap K$ is adjacent to $w$, $x_1$ and $z_1$. So the coloring can be extended to $K$.
\end{proof}

\begin{lemma}
    \label{lem:color-solitary}
    Suppose $\col$ is a serene coloring and $C$ is a non-small solitary almost-clique. With high probability over $\rL_6(V)$, we can compute a serene coloring $\psi$ such that $C \cup \dom\col \subseteq \dom\psi$.
\end{lemma}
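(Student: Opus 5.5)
We mirror the proof of \cref{lem:color-popular}. First, we uncolor all of $C$ in $\col$ if necessary. By \cref{lem:sparse-recovery}, since $C$ is non-small and solitary, we know a solitary helper structure for $C$. It is either a 2-anti-matching $\set{u_1v_1,u_2v_2}$ or a 3-independent set $\set{u_1,u_2,v}$, and in both cases we know $N_G(u_1)$ and $N_G(u_2)$. The plan is to same-color $u_1$ with $v_1$ and $u_2$ with $v_2$ (resp.\ $u_1$, $u_2$ and $v$ all with a single color), using colors from the random lists $\rL_6(v_1)$, $\rL_6(v_2)$ (resp.\ $\rL_6(v)$). This keeps the coloring serene, since the partners $u_1,u_2$ have known neighborhoods. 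We then extend greedily to the rest of $C$. For $v_1,v_2,v$ we cannot iterate over the neighborhood, so we use the sparsified graph to check that the chosen color is not used by a neighbor. Whenever a color lies in $\rL(v_1)$, conflicts with neighbors of $v_1$ are visible in the sparsified graph.

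\textbf{Availability of common colors.} Since $C$ is uncolored, each of the relevant vertices has at most $e(\cdot)\le\eps\Delta$ colored neighbors. Hence $L_\col(u_1)\cap L_\col(v_1)$ contains at least $\Delta-1-2\eps\Delta\ge\Delta/2$ colors, and after the first pair is colored the second pair still shares $\ge\Delta/2-2$ colors. The same bound holds for the triple. Each color is in $\rL_6(v_1)$ independently with probability $\rho^3/\Delta$, so the failure probability is at most $(1-\rho^3/\Delta)^{\Delta/3}\le 1/\poly(n)$, exactly as in the popular case.

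\textbf{Greedy completion.} This is the main obstacle. After the two same-colorings, every vertex of $C$ adjacent to both members of a pair gains one unit of slack. We need an ordering of the remaining uncolored vertices of $C$ in which each vertex, when colored, has an available color, and in which we can do this serenely. Most vertices of $C$ are not core-critical and their neighborhoods are unknown, so we cannot simply color greedily with arbitrary colors. Our plan is to split into two sets.
\begin{itemize}
\item The vertices with large anti-degree or external degree (say $a(w)+e(w)>4\rho$) are $\Omega(\rho)$-sparse-ish by \cref{fact:acd-dense-sparsity}, or have many uncolored neighbors in $C$ at the time they are colored. We color them first using $\rL_6$, since they have $\Omega(\Delta)$ uncolored neighbors in $C$ and hence $\Omega(\Delta)$ available colors. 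This is the same argument as for $s$ in \cref{lem:color-popular}.
\item The remaining vertices have $a(w)+e(w)\le 4\rho$, so we know $N_G(w)$ by \cref{lem:sparse-recovery}. We color them with arbitrary available colors, in order of decreasing distance to a vertex $w^*$ adjacent to all of $u_1,v_1,u_2,v_2$ (resp.\ to $u_1,u_2,v$), coloring $w^*$ last.
\end{itemize}

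\textbf{Existence of $w^*$ and the ordering.} Such a $w^*$ exists since the common neighborhood of four vertices of $C$ has size $\ge|C|-4\eps\Delta-4>0$. In the 3-independent-set case, a common neighbor of all three gains two units of slack, which is more than needed. Each vertex other than $w^*$ has an uncolored neighbor later in the order, since $G[C]$ has diameter $2$. This gives one unit of temporary slack, which compensates the $-1$ slack of $(\Delta-1)$-coloring a degree-$\Delta$ vertex. The vertex $w^*$ has one unit of slack from each same-colored pair, so it needs no temporary slack and can be colored last.

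The delicate point is the interface: the high-degree vertices colored in the first phase must not destroy the slack of the low-anti-degree vertices. This is fine because the first-phase vertices only ever consume colors. The slack argument for each low vertex counts colored neighbors against $\Delta-1$ regardless of who colored them, and we only need that $w^*$ still sees two repeated colors, which are fixed before either phase.
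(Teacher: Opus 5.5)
\textbf{The proposal has two genuine gaps, both in the greedy completion.}

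\textbf{First gap: your first phase.} You claim that every vertex with $a(w)+e(w)>4\rho$ has $\Omega(\Delta)$ uncolored neighbors in $C$ when it is colored. This holds only if $\Omega(\Delta)$ vertices of $C$ are ``low'' and hence still uncolored during that phase, and nothing forces that. Take a non-small solitary $C$ with $|C|=\Delta+1$ in which every vertex has $a(w)=e(w)=2\rho$. Every vertex of $C$ is then ``high''. The last vertices you color from $\rL_6$ have essentially all of their $\Delta$ neighbors already colored, and possibly no color of $[\Delta-1]$ remains available to them. Greedy choices from random lists give you no control over repeated colors. Lower-bounding their sparsity via \cref{fact:acd-dense-sparsity} does not help either, since the slack-generation guarantees are void after Step 5.

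The paper avoids this with a case split that your argument lacks. If $G[C]$ has at least $10^7\eps\Delta$ anti-edges, the uncolored $C$ is colored by palette sparsification via \cref{lem:streaming-color-trouée}. This is exactly why $C$ is uncolored first, since that lemma assumes it. Otherwise at most $2\cdot 10^7\eps\Delta\le \Delta/5$ vertices of $C$ have any anti-neighbor. Only then does ``color those vertices first from $\rL_6$, each still has $\Omega(\Delta)$ available colors'' go through. The remaining vertices have no anti-neighbor, hence external degree at most $\rho$, so their neighborhoods are known.

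\textbf{Second gap: your slack accounting in the second phase is off by one.} With $\Delta-1$ colors, a degree-$\Delta$ vertex with $k$ uncolored neighbors and no repeated color among its colored neighbors has only $k-1$ guaranteed available colors. One later uncolored neighbor can therefore leave it with zero colors; that is the Brooks ($\Delta$-coloring) argument, not the $(\Delta-1)$ one. Ordering by distance to a single $w^*$ guarantees only one later neighbor. So a vertex of $N(w^*)$ that is anti-adjacent to, say, $u_1$ and $u_2$ can get stuck if it is colored just before $w^*$: it sees no same-colored pair.

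The fix is the paper's ordering. First color the known-neighborhood vertices outside $N(u_1)\cap N(v_1)\cap N(u_2)\cap N(v_2)$; each of them still has at least two uncolored neighbors inside that common neighborhood. Then color the common neighborhood itself. Each of its vertices sees two same-colored pairs (or three same-colored vertices in the independent-set case), so it has an available color even when all its other neighbors are colored.
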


\begin{proof}
    We assume without loss of generality that $C$ is uncolored in $\col$, and otherwise begin by uncoloring all its vertices.
    We may also assume that $C$ contains fewer than $10^7\eps\Delta$ anti-edges, since otherwise \cref{lem:streaming-color-trouée} colors $C$ through palette sparsification. In particular, it implies that $|C| \leq \Delta+1$.
    Recall that for such $C$, we recover a solitary helper structure as specified in \cref{def:solitary-helper} (by \cref{lem:sparse-recovery}). Suppose that it is of the first kind: we know about distinct vertices $v_1, u_1, v_2, u_2 \in C$ such that $v_1u_1$ and $v_2u_2$ are disjoint anti-edges, and we know $N(u_1)$ and $N(u_2)$. We serenely same-color $v_iu_i$ with a color from $\rL_6(v_i)$. We also use different colors for both anti-edges. Since $C$ is uncolored, $v_1$ and $u_2$ have at least $\Delta/2$ shared available colors and the probability that none of them get sampled in $\rL_6(v_1)$ is 
    \[
    \Prob*{ \rL_6(v_1) \cap L_\col(v_1) \cap L_\col(u_1) = \emptyset }
    \leq \paren*{ 1 - \frac{\rho^3}{\Delta} }^{\Delta/2} 
    \leq \exp(-\rho^3/2) \leq 1/\poly(n) \ .
    \]
    The same goes for $v_2$ and $u_2$ excluding also the color picked for $v_1u_1$. Call $\chi_1$ and $\chi_2$ the color given to $v_1u_1$ and $v_2u_2$ respectively. We argue that the coloring of $C$ can be extended through palette sparsification.

    Since $C$ contains fewer than $10^7\eps\Delta \leq \Delta/10$ anti-edges, at most half of the vertices of $C$ have some anti-neighbor. Call this set $S$. Each vertex from $S$ has at least $(1 - \eps)\Delta - 4$ uncolored neighbors, hence at least $\Delta/2$ available colors. We color vertices $v\in S$ sequentially with colors from their random list $\rL_6(v)$. The probability that we do not find a color for some $v\in S$ is at most $(1-\rho^3/\Delta)^{\Delta/2} \leq 1/\poly(n)$; hence, by union bound, all vertices of $S$ get serenely colored with high probability. To color the vertices of $C - S$, observe that since they have no anti-neighbor, they also have external degree at most $\rho$. In particular, we know their neighborhood from sparse recovery. First, we color greedily the vertices of $C - S - N(v_1) \cap N(u_1) \cap N(v_2) \cap N(u_2)$. This is possible because each such vertex has at least $(1 - 5\eps)\Delta \geq 2$ uncolored neighbors in $N(v_1) \cap N(u_1) \cap N(v_2) \cap N(u_2)$. Finally, we color greedily vertices of $N(v_1) \cap N(u_1) \cap N(v_2) \cap N(u_2)$, which is possible because they have two pairs of same-colored neighbors.

    For the second type of solitary helper structure, where we recover a 3-independent set $v, u_1, u_2$ along with $N(u_1)$ and $N(u_2)$, we same-color $v, u_1$ and $u_2$ with a color from $\chi\in \rL_6(v)$. It exists \whp by the same argument as in the prior case. Then, we can color the remaining vertices like we did in the previous case.
\end{proof}
 \section{Lower bound for \texorpdfstring{$(\Delta - k)$}{(Delta-k)}-Coloring in Semi-Streaming}
\label{sec:lowerbound}

Recall that by a result of Molloy and Reed \cite{MR14}, graphs of maximum degree $\Delta$ (larger than some sufficiently large constant) that do not have certain forbidden local subgraphs can be $(\Delta - k)$-colored for all $k$ such that $(k + 1)(k + 2) \leq \Delta$. Very recently, \cite{anonymous} proved that such a $(\Delta - k)$-coloring can be computed in $\poly(\log\log n)$ rounds of \LOCAL for all such $k$ but the largest one (for which the problem requires $\Omega(n/\Delta)$ rounds).

In this section, we show that such a result cannot be obtained in semi-streaming.
\LowerBound*

\newcommand{\ALG}{\mathsf{ALG}}
\newcommand{\INDEX}{\mathsf{INDEX}}
The proof is via a reduction to the one-way communication complexity of the $\INDEX_m$ problem. In this problem, Alice is given a bit-string $x = (x_1, x_2, \ldots, x_m)$ and Bob is given an index $i\in[m]$. Alice sends one message to Bob and so that Bob correctly outputs $x_i$ with probability at least $2/3$. It is well-known that Alice needs to send $\Omega( m )$ bits to Bob (see \cite[Theorem 3.7]{KNR99} or \cite[Chapter 6]{RY_book20} for a more recent treatment).

We construct a gadget with $\Theta(\Delta)$ vertices such that Bob can infer the value of $x_i$ from a proper $c$-coloring of that graph. The graph induced by $C$ is a $(c - 3)$-clique, and the graph induced by $A \cup B$ will be the complement of that induced by $\ov{A} \cup \ov{B}$. We select $m$ pairs $e_1, e_2, \ldots, e_m$ between $A$ and $B$, but Alice includes the edge $e_j$ iff $x_j = 1$ (and the corresponding edge between $\ov{A}$ and $\ov{B}$ is added iff $x_j = 0$). Then Bob connects the endpoints of $e_i$ in $A \cup B$, and he connects all the vertices of $C$ and the endpoints of $e_i$ to the endpoints of $e_i$ in $\ov{A} \cup \ov{B}$. 
See \cref{fig:lowerbound-gadget}. 
The key observation is that the graph induced by $C$ plus the endpoints of $e_i$ in $A \cup B$ and $\overline{A} \cup \ov{B}$ form a $(c+1)$-clique minus one of the edge in $A \cup B$ or $\ov{A} \cup \ov{B}$, depending on whether $x_i = 0$ or $x_i = 1$. Since any $c$-coloring of this graph must same-color the endpoints of this missing edge, Bob recovers the value of $x_i$ from a proper $c$-coloring of this gadget.
It follows that $c$-coloring this gadget requires $\Omega( \Delta(\Delta - c + 1))$ bits. To obtain \cref{thm:lowerbound}, we create $g = \frac{m}{\Delta(\Delta - c + 1)}$ parallel gadgets, each encoding a different contiguous block of $\Delta(\Delta-c+1)$ bits from $x$. 

\begin{figure}[ht!]
    \centering
    \includegraphics[width=.9\linewidth]{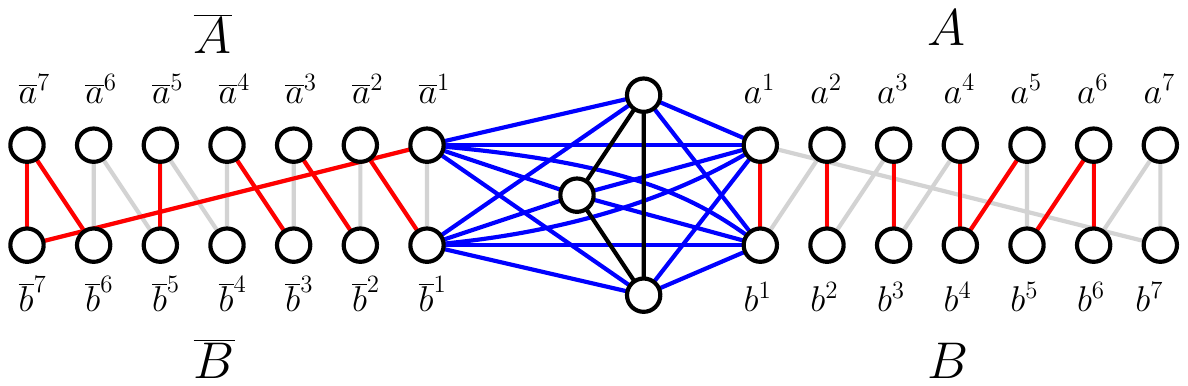}
    \caption{One gadget from the lower bound with $\Delta = 7$ and $c = 6$. In black are the edges of $C$, known to both players, and in gray the pairs used to encode the bits of $x$. The edges known/added by Alice are represented in red, and the edges known/added by Bob are represented in blue. Here $x = 10101011011000$ as read on the right with the edges between $A$ and $B$, and $i = 1$ since the edges of Bob are incident to the first gray pair. \label{fig:lowerbound-gadget}}
\end{figure}

\begin{proof}[Proof of \cref{thm:lowerbound}]
    Suppose there exists a streaming algorithm as described in \cref{thm:lowerbound}.
    Let $m$ be an integer divisible by $\Delta(\Delta - c + 1)$. Let $(x_1, x_2, \ldots, x_m) \in \set{0,1}^m$ be Alice's input and $i \in [m]$ be Bob's input to the $\INDEX$ problem. Alice and Bob construct the following graph $G(x, i)$.

    Let $g = \frac{m}{\Delta(\Delta - c + 1)}$ and $n = g(4\Delta + c - 3)$ be the number of vertices. The graph is the union of $g$ gadgets: for all $q\in[g]$, we have the following five sets of vertices
    \[
    \begin{array}{ccccc}
        A_q &= \set{ a_q^1, a_q^2, \ldots, a_q^{\Delta}}, & B_q &= \set{ b_q^1, b_q^2, \ldots, b_q^{\Delta}}, \\
        \ov{A}_q &= \set{ \ov{a}_q^1, \ov{a}_q^2, \ldots, \ov{a}_q^{\Delta}}, & \ov{B}_q &= \set{ \ov{b}_q^1, \ov{b}_q^2, \ldots, \ov{b}_q^{\Delta}}, \\
        C_q &= \set{ c_q^1, c_q^2, \ldots, c_q^{c-3}}  &
    \end{array}
    \]
    For succinctness, we call $G_q$ the subgraph of $G(x, i)$ induced by $A_q \cup B_q \cup \ov{A}_q \cup \ov{B}_q \cup C_q$ for all $q\in[g]$ (this is the gadget from \cref{fig:lowerbound-gadget}).
    An edge connects every pair of vertices in $C_q$, i.e., $C_q$ is a $(c-3)$-clique. Let $F=([2\Delta], E)$ be a $(\Delta - c + 1)$-regular bipartite graph with bipartition $\set{1, 2, \ldots, \Delta}$ and $\set{\Delta+1, \Delta+2, \ldots, 2\Delta}$. Number the edges of $F$ as $e_1, e_2, \ldots, e_t$ with $t = \Delta(\Delta - c + 1)$. 
    
    For $j \in [m]$, let $r, q \in \bN$ be such that $j = gq + r$ and $e_r = \set{ x , \Delta + y}$.
    If $x_j = 1$, Alice adds the edge $\set{a_q^x, b_q^y}$ and if $x_j = 0$ Alice adds the edge $\set{ \ov{a}_q^x, \ov{b}_q^y }$.
    Observe that the graph induced by $A_q \cup B_q$ is the complement of that induced by $\ov{A}_q \cup \ov{B}_q$, for an edge exists in one iff it does not in the other.

    Then let $i = gp + \ell$ and $e_\ell = \set{x, \Delta + y}$. Bob connects $a_p^x$, $b_p^y$, $\ov{a}_p^x$ and $\ov{b}_p^y$ to all vertices of $C_p$. He also adds and connects those four vertices as in $K_{2,2}$: he adds edges $\set{ a_p^x, \ov{a}_p^x }$, $\set{ a_p^x , \ov{b}_p^y}$, $\set{ b_p^y, \ov{a}_p^x }$, and $\set{ b_p^y, \ov{b}_p^y}$.

    Since $c\leq \Delta$, the graph $G(x, i)$ has maximum degree $\Delta$. The graph $G_q$ for $q \neq p$ is $c$-colorable since $A_q \cup B_q$ induces a bipartite graph, $C_q$ is a $(c-3)$-clique, and Bob did not add any edge between them. 
    Observe that $C_p \cup \set{a_p^x, b_p^y, \ov{a}_p^x, \ov{b}_p^y}$ is a $(c+1)$-clique minus one edge, where the edge missing is $\set{ \ov{a}_p^x , \ov{b}_p^y }$ if $x_i = 1$, and $\set{ a_p^x, b_p^y }$ if $x_i = 0$. So $C_p \cup \set{a_p^x, b_p^y, \ov{a}_p^x, \ov{b}_p^y}$ can be $c$-colored by using the same color for both endpoints of that missing edge. The vertices that are not part of this $(c+1)$-clique have degree at most $\Delta - c + 1$, thus can be greedily colored when $c > (\Delta+1)/2$. Observe that, in fact, any proper $c$-coloring of $C_p \cup \set{a_p^x, b_p^y, \ov{a}_p^x, \ov{b}_p^y}$ must same-color the endpoints of the missing edge. As such, Bob recovers the value of $x_i$ from a $c$-coloring of $G(x, i)$.

    To solve the $\INDEX$ instance $x, i$, Alice runs the streaming algorithm on all the edges of $G(x, i)$ except those added by Bob. Then she sends to Bob everything stored by the streaming algorithm, and Bob completes the streaming pass by processing his edges. Since the algorithm produces a proper $c$-coloring of $G(x,i)$ with probability at least $2/3$, Bob recovers $x_i$ with probability at least $2/3$. It follows that the streaming algorithm must use 
    \[ 
    \Omega(m) = \Omega(g\Delta(\Delta - c + 1)) \geq \Omega( n (\Delta - c + 1)) 
    \]
    space, where the last inequality uses that $g \geq n/(5\Delta)$ since $c \leq \Delta$.
\end{proof}

\printbibliography
\appendix
\section{Sparse Recovery}
\label{sec:sparse-recovery-details}

Our sparse recovery technique follows closely that of \cite{AKM23}. The first ingredient is a property of the Vandermonde matrix that allows to recover sparse vectors. A vector $x \in \bF^n$ is \underline{$k$-sparse} if it has at most $k$ non-zero coordinates. We use the following well-known fact:
\begin{proposition}
    \label{lem:vandermonde}
    For $n, k \geq 1$ arbitrary integers and $p \geq n$ a prime number, the Vandermonde matrix $\Phi^V$ of dimension $(2k \times n)$ over $\bF_p$ is defined as $\Phi^V_{i, j} = j^{i-1} \mod p$ for all $i \in [2k]$ and $j\in[n]$. For any $k$-sparse vector $x\in \bF_p$, one can uniquely recover $x$ from $\Phi^V x$ in polynomial time.
\end{proposition}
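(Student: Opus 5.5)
The statement is the standard Vandermonde sparse-recovery fact, so the plan is in two parts: uniqueness via a rank argument, then an efficient decoder via syndrome decoding.

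\emph{Uniqueness.} Suppose two $k$-sparse vectors $x \neq x'$ satisfy $\Phi^V x = \Phi^V x'$. Then $y = x - x'$ is a nonzero $2k$-sparse vector with $\Phi^V y = 0$. Let $T$ be its support, with $|T| \le 2k$. Restrict to the columns indexed by $T$ and the first $|T|$ rows. This gives a square Vandermonde matrix with nodes $\{j \bmod p : j \in T\}$. Since $p \ge n$, the nodes $j \in [n]$ are distinct modulo $p$ (assume $p > n$, or handle $j = p$ separately; otherwise take $p > n$). They are also nonzero except possibly when $j = p$. Its determinant is therefore $\prod_{j<j'} (j' - j) \neq 0$ in $\bF_p$. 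So the columns indexed by $T$ are linearly independent, which contradicts $\Phi^V y = 0$. This step is routine.

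\emph{Efficient recovery.} I would use the Reed--Solomon / Prony (syndrome decoding) approach. Write $s_i = \sum_{j} x_j j^{i}$ for $i = 0, \dots, 2k-1$; these are exactly the entries of $\Phi^V x$. Let $T$ be the unknown support, with $|T| = t \le k$. Define the locator polynomial $\Lambda(z) = \prod_{j\in T}(1 - jz) = 1 + \lambda_1 z + \dots + \lambda_t z^t$. The sequence $s_i$ then satisfies the linear recurrence $s_{i} + \lambda_1 s_{i-1} + \dots + \lambda_t s_{i-t} = 0$ for $t \le i \le 2k-1$.
\begin{enumerate}
\item Find the minimal such recurrence, either with Berlekamp--Massey or by solving Hankel systems for $t = k, k-1, \dots$. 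This takes $O(k^2)$ or polynomial time.
\item Find the roots of $\Lambda$ by evaluating it at $z = j^{-1}$ for all $j \in [n]$. This identifies $T$.
\item Solve the resulting $t\times t$ Vandermonde system for the values $x_j$, $j \in T$. The system is invertible by the uniqueness argument.
\end{enumerate}
Each step is polynomial in $n$ and $\log p$.

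\emph{Main obstacle.} The delicate point is showing that the recurrence found in step 1 is really the true $\Lambda$, and not some other recurrence of length at most $k$ that also fits the $2k$ syndromes. The standard argument handles this. The $t \times t$ Hankel matrix $(s_{a+b})$ factors as $V D V^{\top}$, where $V$ is the Vandermonde matrix on the nodes in $T$ and $D = \mathrm{diag}(x_j)$. This matrix is nonsingular, so the coefficients $\lambda$ are uniquely determined once $t$ is known. Moreover, $t$ equals the rank of the $k\times k$ Hankel matrix, which determines $t$. Having $2k$ measurements is exactly what Berlekamp--Massey's uniqueness guarantee requires.

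Since the statement is well known, I would cite it as a fact, as \cite{AKM23} does.
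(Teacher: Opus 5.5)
Your proof is correct. The paper gives no proof of this proposition: it calls it a ``well-known fact'' and uses it as a black box, as \cite{AKM23} does. What you supply is the standard justification behind that citation. Uniqueness comes from the fact that any $2k$ columns of $\Phi^V$ are independent. Efficiency comes from Prony/Berlekamp--Massey syndrome decoding. There the recurrence is fixed by the nonsingular Hankel matrix $V D V^{\top}$, and $t$ is the rank of the $k\times k$ Hankel matrix. Writing out the decoder clarifies two points the citation hides.

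First, you do not need $p>n$ for distinctness. For $j\neq j'$ in $[n]$ we have $0<|j-j'|\le n-1<p$, so the nodes are distinct in $\bF_p$ whenever $p\ge n$. The uniqueness argument also survives a node equal to $0$. The case $p=n$ only affects your reciprocal locator $\prod_{j\in T}(1-jz)$, whose roots miss a node $j\equiv 0$. You can avoid this by using the monic $P(z)=\prod_{j\in T}(z-j)$ with the forward recurrence $\sum_{m=0}^{t}c_m s_{i+m}=0$ for $0\le i\le 2k-1-t$; the Hankel system is the same. In the paper $p\ge n^c$, so the issue does not arise there.

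Second, the paper silently relies on how the procedure behaves on inputs that are not sparse. In the proof of \cref{lem:sparse-recovery}, recovery is also run on vectors $x(v)$ that need not be sparse. The procedure must still halt in polynomial time with a candidate set $S$ or a failure flag, which is then checked against the sketch $\Phi^R$. Your pipeline does this: Berlekamp--Massey always terminates, and you can reject whenever the recurrence is longer than $k$ or its polynomial lacks the required number of distinct roots among the nodes.
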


Since we do not know whether vectors are $k$-sparse or not, we test the outcome produced by the recovery algorithm of \cref{lem:vandermonde}. We use random matrices; see \cite[Proposition 3.7]{AKM23} for a proof.
\begin{proposition}
    \label{lem:random-matrix}
    Let $n, t \geq 1$ be arbitrary integers and $p \geq n$ a prime number. If $x \neq y\in \bF_p^n$ and $\Phi^R \in \bF_p^{(t \times n)}$ is a random matrix, we have 
    \[
    \Prob*{ \Phi^R x = \Phi^R y } \leq p^{-t} \ .
    \]
\end{proposition}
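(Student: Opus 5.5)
The statement is \cref{lem:random-matrix}: for fixed $x \neq y \in \bF_p^n$ and a uniformly random $\Phi^R \in \bF_p^{t\times n}$, we want $\Prob*{\Phi^R x = \Phi^R y} \le p^{-t}$.

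By linearity, $\Phi^R x = \Phi^R y$ holds iff $\Phi^R d = 0$, where $d = x - y$. Since $x \neq y$, the vector $d$ is nonzero. So it suffices to show that $\Prob*{\Phi^R d = 0} = p^{-t}$ for any fixed nonzero $d \in \bF_p^n$. The plan is to analyse the $t$ rows of $\Phi^R$ separately and then multiply, using that the rows are independent.

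First, fix a single row $r \in \bF_p^n$ with independent uniform entries. Pick a coordinate $j$ with $d_j \neq 0$ and condition on all entries $r_k$ with $k \neq j$. Then
\[
\langle r, d\rangle = r_j d_j + c,
\]
where $c$ is a constant under this conditioning. Because $d_j$ is invertible in $\bF_p$, the map $r_j \mapsto r_j d_j + c$ is a bijection of $\bF_p$. Hence $\langle r, d\rangle$ is uniform on $\bF_p$, and it equals $0$ with probability exactly $1/p$. Averaging over the conditioning preserves this value.

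Second, the $t$ rows are independent, so the events $\langle r_i, d\rangle = 0$ for $i \in [t]$ are independent. The probability that all of them occur is therefore $p^{-t}$. This gives the bound with equality.

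There is no real obstacle here. The only point to be careful about is that primality of $p$ is used to make $d_j$ invertible; the hypothesis $p \ge n$ is not needed for this statement.
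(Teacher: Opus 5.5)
Your proof is correct: writing $d = x - y \neq 0$, using invertibility of a nonzero coordinate $d_j$ to show each row's inner product with $d$ is uniform on $\bF_p$, and multiplying over the $t$ independent rows gives exactly $p^{-t}$. The paper does not prove this proposition itself but defers to \cite[Proposition 3.7]{AKM23}, and your argument is the standard one behind that result; you are also right that the hypothesis $p \ge n$ plays no role in this step.
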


The streaming algorithm is the same as that of \cite{AKM23} up to minor changes in the parameters. The recovery algorithm, after the streaming pass, is essentially the same except that we recover slightly more information in solitary almost-cliques.

Recall that $\rho = \Theta(\eps^{-2}\log n)$. For a set $S \subseteq [n]$, denote by $\boldsymbol{1}_S$ the characteristic vector of $S$.

\begin{Algorithm}
    \label{alg:sparse-recovery}
    Sparse Recovery.
    \medskip

    Let $p$ be any prime number between $n^c$ and $2n^c$.

    For every $i = 0, 1, 2, \ldots, \ceil{\log\paren*{ \Delta / \rho } }$, let $s_i = 2^i \rho$ and do:
    \begin{enumerate}
        \item Let $\Phi_i^V \in \bF_p^{2s_i \times n}$ be the Vandermonde matrix and $\Phi^R_i \in \bF_p^{c \times n}$ be a random matrix.
        \item Sample each $v\in V_i$ independently with probability $p_i = \min\set{1, \frac{4\rho}{s_i}}$. For every vertex $v \in V_i$, initialize two vectors $y_i(v) = \Phi^V_i \boldsymbol{1}_v \in \bF_p^{2s_i}$ and $z_i(v) = \Phi^R_i \boldsymbol{1}_v \in \bF_p^{c}$.
        \item For each $v \in V_i$, when an edge $\set{u,v}$ appears in the stream, do
        \[
            y_i(v) \gets y_i(v) + \Phi^V_i \boldsymbol{1}_u
            \quad\text{and}\quad
            z_i(v) \gets z_i(v) + \Phi^R_i \boldsymbol{1}_u \ .
        \]
    \end{enumerate}
\end{Algorithm}

\begin{proof}[Proof of \cref{lem:sparse-recovery}]
    Let us first argue about the space usage of \cref{alg:sparse-recovery}. For a given $i$ and vertex $v\in V_i$, the vectors $y_i(v) \in \bF_p^{2s_i}$ and $z_i(v) \in \bF_p^c$ require $O(s_i\log p)$ bits of memory to store. When $i \in \set{0,1,2}$, we have $V = V_i$, so the algorithm uses $O(ns_1\log n) = O(n\log^2 n)$ bits of memory. Then, each $V_i$ for $i \geq 3$ has expected size $n \cdot 4\rho/s_i$, thus by the Chernoff Bound, 
    \begin{align*}
    \Prob*{ |V_i| > 8n\rho/s_i } \leq \exp\paren*{ -\frac{4n\rho}{3s_i} }
    \leq 1/\poly(n) \ ,\\
    \tag{using that $s_i \leq 2^{\log(\Delta/\rho)+1}\rho \leq 2\Delta \leq 2n$}
    \end{align*}
    resulting in $\sum_i O( n\rho\log p ) =  O(n\log^3 n)$ bits of memory for storing all the $y_i(v)$ and $z_i(v)$ vectors. Storing all the random matrices $\Phi^R_i$ requires $O(n\log p \log n) = O(n\log^2 n)$ memory. It is not necessary to store $\Phi_i^V$ explicitly for its coefficients can be derived from $s_i$, $p$ and $n$. Overall, we use $O(n\log^3 n)= O(n\log^3 n)$ bits of memory.

    Now, consider $v \in C$. Suppose that $a(v) + e(v) \leq 4\rho$. Observe that $a(v)$ and $e(v)$ are unknown at this point. At the end of the streaming pass, for every $v\in V$, we have that $y_2(v) = \Phi^V_2 \boldsymbol{1}_{N[v]}$, so the vector $x(v) = \Phi_2^V \boldsymbol{1}_C - y_2(v)$ is equal to $+1$ at coordinates $u\in A(v)$, $-1$ for $u\in E(v)$ and zero otherwise. By assumption $x(v)$ is $4\rho$-sparse and we recover $A(v)$ and $E(v)$, by \cref{lem:vandermonde}; hence, we recover $N(v)$. Now, suppose that $a(v) + e(v) > 4\rho$ so that $x(v)$ is not $\rho$-sparse, but by using the algorithm of \cref{lem:vandermonde} on $\Phi^V \cdot x(v)$, we recover some set $S \subseteq V$. To test if $S = N(v)$, we compute $\Phi_2^R \boldsymbol{1}_S$ and test if it equals $z_2(v) = \Phi_2^R \boldsymbol{1}_{N(v)}$. By \cref{lem:random-matrix}, if $S \neq N(v)$, we obtain different values with high probability.

    Consider a non-small solitary almost-clique $C$. Observe that if $\set{u,v}$ is an anti-edge of $C$ such that $a(u) + e(v) \leq 4\rho$, then by the earlier argument we recover $N(u)$ and learn about the anti-edge $\set{u,v}$. So we may assume that all anti-edges in $G[C]$ have both endpoints with $a(v) + e(v) > 4\rho$. Observe it suffices to construct the 2-anti-matching helper structure from \cref{def:solitary-helper}, because when there is a 3-independent set and no 2-anti-matching, we recovered the neighborhood of all vertices.

    Pick the two vertices $v_0 \neq v_1$ with largest anti-degrees in $C$. Observe that since $C$ is not small, for both $i\in \set{0,1}$, we have that
    \begin{equation}
        \label{eq:lowerbound-anti}
    4\rho \leq a(v_i) + e(v_i) \leq \Delta + 1 - |C| + 2a(v_i)
    \leq \rho + 2a(v_i) \ ,
    \end{equation}
    and hence that $a(v_i) \geq \rho$.
    Fix $v_i$ and call $j \geq 1$ the integer for which $s_j  \leq a(v_i) \leq s_{j+1}$ and $A = A(v_i) - v_{1-i}$. By the same reasoning as for \cref{eq:lowerbound-anti}, every $u\in A$ has 
    \[
    a(u) + e(u) 
    \leq \rho + 2a(u)
    \leq \rho + 2s_{j+1} \leq s_{j+3} \ .
    \]
    In particular, the vector $x(u) = \Phi_{j+3}^V \boldsymbol{1}_C - y_{j+3}(u)$ is $s_{j+3}$-sparse for all $u\in A$. The probability that none of the $u \in A$ gets sampled in $V_{j+3}$ is at most
    \[
    (1 - p_{j+3})^{a(v_i) - 1} 
    \leq \exp\paren*{ - \frac{\rho}{s_{j+3}} (s_j - 1) } 
    \leq \exp(-\rho/16) \leq 1/\poly(n) \ ,
    \]
    where the last inequality holds because $s_j - 1 = s_{j+2}/4 - 1 \geq s_{j+2}/8 = s_{j+3}/16$ (using $s_{j+2} \geq 8$) and $\rho \geq \Theta(\log n)$. After the streaming pass, we recover $N(u)$ for every $s_i$-sparse vertex of $V_i$ --- eliminating those that are not sufficiently sparse with the test of \cref{lem:random-matrix}. For $v_0$ and $v_1$ in $C$, we thereby identify a 2-anti-matching $\set{v_0u_0, v_1u_1}$ where we recovered $N(u_0)$ and $N(u_1)$.
\end{proof}

 \section{Degree-Minus-One Choosable Subgraphs}
\label{sec:choosable-subgraphs}

A graph $Q=(V,E)$ is \underline{$(\deg-1)$-choosable} if, for any lists $L : V \to 2^{\mathbb{N}}$ such that $|L(v)| = \deg(v, Q)-1$ for all $v\in V$, there exists a coloring $\col$ of $Q$ such that $\col(v) \in L(v)$ for all $v\in V$.

The following lemma implies that every almost-clique with a 2-anti-matching or any popular almost-clique induces a $(\deg-1)$-choosable subgraph of size 8.

\begin{lemma}
    Fix $d \geq 5$.
    Let $Q$ be a graph on vertex set $V = K + s_1 + s_2$ where $Q[K]$ is a $d$-clique and 
    \begin{enumerate}
        \item there exists some $v\in N(s_1) \cap N(s_2) \cap K$,
        \item both $s_1$ and $s_2$ are incident to at least $5$ vertices in $K$, and
        \item there are two vertices $a_1 \neq a_2 \in K$ such that $a_1 \notin N(s_1)$ and $a_2 \notin N(s_2)$.
    \end{enumerate}
    Then $Q$ is $(\deg-1)$-choosable.
    This holds regardless of the existence of the edge $\set{s_1, s_2}$.
    \label{lem:d-1CS}
\end{lemma}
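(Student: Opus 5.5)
The plan is to reduce to a Hall-type matching argument on the clique $K$ after pre-coloring the four ``special'' vertices $s_1,a_1,s_2,a_2$ in two same-colored pairs. The intended coloring mirrors \cref{fig:populaire} and the proof of \cref{lem:color-popular}. We want to give $s_1$ and $a_1$ a common color $\chi_1$, and $s_2$ and $a_2$ a different common color $\chi_2$. The common neighbor $v$ then sees two pairs of same-colored neighbors and has two units of slack, and we finish $K$ greedily with $v$ colored last.

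First I would record the list sizes. A vertex $u\in K$ has $|L(u)| = d-2+|N(u)\cap\set{s_1,s_2}|$, so $|L(v)|=d$. Each $s_i$ has $|L(s_i)| = \deg(s_i,Q)-1 \geq 4$.

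Second, I would find the pairs. If $L(s_1)\cap L(a_1)$ and $L(s_2)\cap L(a_2)$ admit distinct representatives $\chi_1\neq\chi_2$, we color with them and go to the final step. Otherwise some intersection is empty or both are the same singleton. In that case $L(s_i)\cup L(a_i)$ is large, and I would try to exploit this. If $L(s_i)\cap L(a_i)=\emptyset$, then for any color we put on $s_i$, the vertex $a_i$ loses nothing. So $a_i$ effectively behaves as if it were not adjacent to $s_i$'s color class, and we can instead spend the slack elsewhere. Concretely, I would color $s_1,s_2$ first from their lists, avoiding a color common to many $K$-lists. I would then apply Hall's theorem to the bipartite vertex/color graph on $K$: we need $|\bigcup_{u\in S} L'(u)|\ge |S|$ for every $S\subseteq K$, where $L'$ is the list after deleting the colors of the colored neighbors.

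The final step, and the main obstacle, is verifying Hall's condition on $K$. For a set $S$ containing $v$ the condition is easy, because $|L'(v)|\geq d-2$. The danger is a set $S$ consisting of vertices of $K$ adjacent to neither $s_1$ nor $s_2$. These have lists of size only $d-2$, and if such vertices are numerous and share the same $(d-2)$-set, Hall's condition fails outright. So I would first check carefully whether the hypotheses (in particular hypothesis (2) with only ``at least $5$'' neighbors) really exclude this. If they do not, I would expect to need the implicit assumption from the application, namely that every vertex of $K$ has degree $\Delta$ in $G$ so that its list has size $\ge d-1$ in the relevant setting. Under that assumption I would redo the count: vertices outside $N(s_1)\cup N(s_2)$ have lists of size $d-1$ and lose only the colors $\chi_1,\chi_2$ via $a_1,a_2$. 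Hall then reduces to a deficiency-one count, which is paid for by $v$'s double slack. That is the step I would scrutinize first.
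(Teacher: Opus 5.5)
There is a genuine gap. You only treat the easy case, where $L(s_1)\cap L(a_1)$ and $L(s_2)\cap L(a_2)$ have distinct representatives $\chi_1\neq\chi_2$. Even there, ``$v$ colored last'' also needs a second uncolored vertex with slack adjacent to $v$. Any $w\in N(s_1)\cap K-v-a_2$ works, since it sees $s_1$ and $a_1$ both colored $\chi_1$.

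The remaining cases are the real content of the lemma, and for them you give no argument. You say you would ``try to exploit'' the size of $L(s_i)\cup L(a_i)$ and ``avoid a color common to many $K$-lists'', then leave the Hall verification open. You even suggest an extra hypothesis (lists of size $\ge d-1$ on $K$, from degree $\Delta$ in $G$) that is not in the statement and is not needed.

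The danger you single out is also not the real one. By (2), at most $d-5$ vertices of $K$ miss $s_1$. So a set of vertices adjacent to neither $s_i$ has at most $d-5$ elements, while each of their lists keeps at least $d-4$ colors even after $a_1,a_2$ are colored; such sets never violate Hall. The sets that can fail are the near-full ones. Controlling them is exactly the question of whether $v$ gains two units of slack and some neighbour of $v$ gains one.

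The missing idea is that a same-colored pair is not required. Coloring a single neighbour of $v$ with a color \emph{outside} $L(v)$ gives $v$ the same unit of slack, and the list sizes $|L(v)|=d$, $|L(s_i)|\ge 4$, $|L(a_i)|\ge d-2$ force such a color to exist.

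\emph{One intersection nonempty.} Suppose $\chi_1\in L(s_1)\cap L(a_1)$ but $L(s_2)\cap L(a_2)-\chi_1=\emptyset$; this includes your ``same singleton'' case. Same-color $s_1,a_1$ with $\chi_1$; if $\chi_1\notin L(v)$ you are already done. Otherwise $L(s_2)-\chi_1$ and $L(a_2)-\chi_1$ are disjoint with total size at least $3+(d-3)=d>|L(v)-\chi_1|$. So one of $s_2,a_2$ can take a color outside $L(v)$.

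\emph{Both intersections empty.} Since $|L(s_1)|+|L(a_1)|\ge d+2>|L(v)|$, you can color one of $s_1,a_1$ with some $\chi_1\notin L(v)$. Next, $|(L(s_2)\cup L(a_2))-\chi_1|\ge d+1$ gives some $x\in\{s_2,a_2\}$ a color $\chi_2\notin L(v)+\chi_1$; let $y$ be the other vertex of that pair. The second slack-carrying vertex comes from some $w\in N(s_2)\cap K-v-a_1$. If such a $\chi_2$ can be chosen outside $L(w)$, color $x$ with it. Otherwise color $w$ with $\chi_2$. This gives slack to $v$, and also to $y$, because $L(x)\cap L(y)=\emptyset$.

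In every case $v$ ends at slack $+1$ with an adjacent uncolored vertex at slack $0$. The greedy completion (or your Hall check) then goes through using only $|L(u)|\ge d-2$ on $K$. This two-case analysis is what the paper does.
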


\begin{proof}
    Fix arbitrary lists $L : V \to 2^{\bN}$ such that $|L(u)| = \deg(u)-1$ for every $u\in V$. More precisely, $L(v)$ contains exactly $d$ colors, $L(s_1)$ and $L(s_2)$ contain between $4$ and $d$ colors, and every $u\in K$ has at least $d-2$ colors in its list.

    We show by case analysis that we can provide two units of slack to $v$ and one unit of slack to another uncolored vertex. It implies that the coloring of $Q$ can be completed greedily.

    \underline{Case 1:}
    For at least one $i \in \set{1,2}$, we have $L(s_i) \cap L(a_i) \neq \emptyset$. Let us assume without loss of generality that $i=1$.
    Same-color $\set{s_1, a_1}$ with $\chi_1 \in L(s_1) \cap L(a_1)$. This provides $v$ with at least one unit of slack and some uncolored neighbor of $s_1$ in $K - v - a_2$ with unit slack (regardless of whether its list contains $\chi_1$ or not). We may assume that $\chi_1 \in L(v)$ as otherwise, $v$ has two units of slack, and we are done already. If there exists $\chi_2 \in L(s_2) \cap L(a_2) - \chi_1$, we same-color $s_2$ and $a_2$ with $\chi_2$ which provides $v$ a second unit of slack. So we henceforth assume $L(s_2) \cap L(a_2) - \chi_1$ is empty. But in that case, there exists a color $\chi_2 \in L(s_2) \cup L(a_2) - \chi_1$ that is not in $L(v) $ because otherwise
    \[
    d-1 = |L(v) - \chi_1|
    \geq |L(s_2) - \chi_1| + |L(a_2) - \chi_1|
    \geq (4-1) + (d-3) = d \ .
    \]
    Coloring $s_2$ or $a_2$ with such a $\chi_2\notin L(v) - \chi_1$ provides $v$ its second unit of slack.

    \underline{Case 2:}
    For both $i\in\set{1,2}$ sets $L(s_i)$ and $L(a_i)$ are disjoint. By the same argument as in Case 1, there must exist a color $\chi_1\in L(s_1) \cup L(a_1)$ that is not in $L(v)$. Color $s_1$ or $a_1$ with $\chi_1$ and leave the other vertex uncolored. This provides one unit of slack to $v$. Then, there must be $x\in \set{s_2, a_2}$ such that $L(x) - (L(v) + \chi_1)$ contains at least one color. We call $y \in \set{s_2, a_2} - x$ the other vertex of the pair. Such an $x$ exists, because $L(x)$ and $L(y)$ are disjoint and if $L(v)$ contains both, then
    \[
    d = |L(v)|
    \geq |L(x) - \chi_1| + |L(y) - \chi_1|
    \geq (d-2) + 4 - 1 = d+1 \ ,
    \]
    a contradiction. We subtract $\chi_1$ (used by $s_1$ or $a_1$) only once as it can be in at most one of the two lists. There are now two subcases: let $w$ be a shared neighbor of $x$ and $y$ in $K - v - a_1$ (which exists because of (2))
    \begin{itemize}
        \item If there exists $\chi_2 \in (L(x) - (L(v)  + \chi_1)) - L(w)$ then coloring $x$ with $\chi_2$ provides both $v$ and $w$ unit slack (and we are done).
        \item Otherwise, $L(x) - (L(v) + \chi_1) \subseteq L(w)$ and coloring $w$ with any color $\chi_2 \in L(x) - (L(v) + \chi_1)$ provides one unit of slack to both $v$ and $y$. Vertex $y$ receives unit slack because $\chi_2 \in L(x)$ which is disjoint from $L(y)$.\qedhere
    \end{itemize}
\end{proof}

For the remaining type of solitary almost-cliques, we have the following result. The argument is similar in nature but the case analysis differs. Here again, the resulting subgraph contains 9 vertices.

\begin{lemma}
    Fix $d \geq 6$. Let $Q$ be a graph on vertex set $V = K + u_1 + u_2 + u_3$ where $Q[K]$ is a $d$-clique and 
    \begin{enumerate}
        \item $\set{u_1, u_2, u_3}$ is an independent set,
        \item each $u_i$ is adjacent to all vertices of $K$.
    \end{enumerate}
    Then $Q$ is a $(\deg-1)$-choosable.
\end{lemma}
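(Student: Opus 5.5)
The plan mirrors the proof of \cref{lem:d-1CS}. Fix lists with $|L(x)| = \deg(x,Q)-1$. Each $u_i$ has degree $d$, so $|L(u_i)| = d-1$. Each $k\in K$ has degree $d+2$, so $|L(k)| = d+1$. I first color (some of) $u_1,u_2,u_3$, and possibly one or two clique vertices, so that some uncolored $v\in K$ has two units of slack and some other uncolored $w\in K$ has at least one. Here a unit of slack for $k$ is a colored neighbor whose color is outside $L(k)$, or two colored neighbors sharing a color. Given this, the rest is greedy. Color the vertices of $K-v-w$ first: each has $d+1$ colors and at most $d$ colored neighbors at that moment, since $v,w$ are still uncolored. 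Then color $w$, and finally $v$. Any $u_i$ left uncolored must be colored before the last clique vertices; I will arrange that a left-over $u_i$ has a $K$-neighbor colored outside $L(u_i)$, or has its list untouched when it is colored early.

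The case analysis is on how the $u$-lists intersect each other and the $K$-lists.

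\emph{Case 1:} some color $\chi$ lies in all three lists $L(u_i)$. Color $u_1,u_2,u_3$ with $\chi$. Every $k\in K$ then has two units of slack and we are done.

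\emph{Case 2:} $L(u_1)\cap L(u_2)\ni\chi_1$, say, but $\chi_1\notin L(u_3)$. Color $u_1,u_2$ with $\chi_1$, which gives every $k\in K$ one unit. If some $v$ has $\chi_1\notin L(v)$, that is a second unit for $v$, and we still need to handle $u_3$ (below). Otherwise, if some $v$ has $L(u_3)\not\subseteq L(v)$, color $u_3$ with a color outside $L(v)$. The remaining subcase is $L(u_3)+\chi_1\subseteq L(v)$ for all $v\in K$. Counting gives $L(v) = L(u_3)+\chi_1+e_v$ for a single extra color $e_v$. In this subcase I would not color $u_3$ at all. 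Instead, color two clique vertices $w_1,w_2$ with colors outside $L(u_3)+\chi_1$: their $e$-colors, or $e_{w_1}$ plus a second vertex of the same kind. This gives $u_3$ two units of slack, so $u_3$ can be colored last. The rest of $K$ then has $d-2$ vertices, each with lists of size at least $d-2$ after removing $\chi_1$ and the used $e$'s, hence it is greedy-colorable with one unit to spare from $\chi_1$. If all $e_v$ coincide this breaks, and then I would instead break the $\chi_1$ pairing and choose different colors for $u_1,u_2$.

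\emph{Case 3:} the three lists are pairwise disjoint. Argue as in Case 2 of \cref{lem:d-1CS}. Since $|L(u_i)\cup L(u_j)| = 2d-2 > d+1$, for any fixed $v$ we can color $u_1$ with a color outside $L(v)$, and then $u_2$ with a color outside $L(v)$ as well. The extra unit for $w$ comes from $u_3$, or from coloring a suitable clique vertex with a color of some $L(u_i)$ so that $u_i$ gains slack, as in the final bullet of that proof.

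I expect the main obstacle to be the nested-lists subcase of Case 2 (and its analogue in Case 3). There, every clique list contains essentially all colors of the $u$-lists, so no single $u$-color creates slack inside $K$. The fix I would try first is to reverse the roles: spend clique colors outside the relevant $u$-lists to create slack for the $u$'s, and color them last. This uses $d\ge 6$ to guarantee that enough clique vertices remain for the greedy completion.
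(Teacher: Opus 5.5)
\textbf{There is a genuine gap: the subcase of your Case 2 where all the $e_v$ coincide is left open, and it is exactly where the lemma's hypothesis $d\ge 6$ is needed.} Your Case 1 and the first branches of Case 2 are fine. That includes the role reversal when two of the $e_v$ differ: coloring $w_1,w_2$ with distinct $e$-colors and $u_3$ last does work.

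In the open subcase, every clique list is the same set $W=L(u_3)+\chi_1+e$ of size $d+1$. Saying you would ``break the $\chi_1$ pairing'' is not an argument. To see that this subcase carries the content, take $d=5$, $L(k)=W=\{1,\dots,6\}$ for all $k\in K$, and $L(u_1)=\{1,2,3,4\}$, $L(u_2)=\{3,4,5,6\}$, $L(u_3)=\{1,2,5,6\}$. This instance is uncolorable: the $u$'s must use at least two colors of $W$, leaving at most four for five clique vertices. It falls exactly into your open subcase. None of the branches you complete needs $d\ge 6$ (the use you cite, for the greedy completion, is not where it matters), so the hypothesis has to enter here.

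The subcase closes as follows. Since all clique lists equal $W\supseteq L(u_3)$, it suffices to color the $u$'s using at most one color of $W$.
\begin{itemize}
\item If $L(u_1),L(u_2)\not\subseteq W$, color both outside $W$ and color $u_3$ arbitrarily.
\item If, say, $L(u_1)\subseteq W$, then $L(u_2)\not\subseteq W$. Otherwise three $(d-1)$-subsets of a $(d+1)$-set would share at least $3(d-1)-2(d+1)=d-5\ge 1$ colors, which is your Case 1. Also $|L(u_1)\cap L(u_3)|\ge d-3\ge 1$, so same-color $u_1,u_3$ and put $u_2$ outside $W$.
\end{itemize}
Either way each clique vertex keeps at least $d$ colors, and $K$ colors greedily.

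\textbf{Case 3 is also only sketched.} The bound $|L(u_1)\cup L(u_2)|>d+1$ shows only that one of $u_1,u_2$ can avoid $L(v)$. What you need is that, by disjointness, at most one $L(u_\ell)$ lies inside $L(v)$, and likewise at most one inside $L(w)$. The extra unit for $w$ also needs an argument. For example, fix $v\ne w$.
\begin{itemize}
\item Unless some $L(u_\ell)\subseteq L(v)\cap L(w)$, you can choose $\ell$ with $L(u_\ell)\not\subseteq L(w)$ and the other two lists not inside $L(v)$. Color $u_\ell$ outside $L(w)$ and the other two $u$'s outside $L(v)$.
\item If $L(u_\ell)\subseteq L(v)\cap L(w)$, then $L(v)\cup L(w)$ has at most four colors outside $L(u_\ell)$. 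Since $2d-2>4$, one of the other two $u$'s can avoid $L(v)\cup L(w)$. Put the remaining one outside $L(v)$, and color $u_\ell$ arbitrarily.
\end{itemize}

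\textbf{How the paper avoids your nested configuration.} It splits on whether some pair of $u$-lists shares at least three colors.
\begin{itemize}
\item If so, one of three things must exist: a color in all three lists, a shared color outside $L(v)$, or a color of the third list outside $L(v)$. Otherwise $L(u_i)\cap L(u_j)$ and $L(u_k)$ would be disjoint subsets of $L(v)$ of total size $3+(d-1)>d+1$.
\item If not, all pairwise intersections have size at most $2$, so $|L(u_i)\cup L(u_j)|\ge 2d-4>d+1$. This is where the paper uses $d\ge 6$.
\end{itemize}
The paper is terse in that second case too. As in your Case 3, the colors must be chosen so that a second clique vertex also gains a unit; an arbitrary choice of colors outside $L(v)$ can fail.
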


\begin{proof}
    Fix lists $L : V \to 2^{\mathbb{N}}$ such that $|L(v)| = \deg(v)-1$ for all $v$. Each $v\in K$ has a list of $d+1$ colors and each $u_i$ has a list of $d-1$ colors.

    Suppose first that there exists a pair $u_i \neq u_j$ for which $L(u_i) \cap L(u_j)$ contains at least 3 colors. Call $v$ any vertex of $K$ and $u_k \neq u_i, u_j$ the remaining $u$-vertex. If $L(u_k)$ shares a color with $L(u_i) \cap L(u_j)$, then we can extend the coloring by same-coloring all three vertices. If $L(u_i) \cap L(u_j)$ is not included in $L(v)$, then we can extend the coloring by same-coloring $u_iu_j$ with a color outside $L(v)$. If $L(u_i) \cap L(u_j) \subseteq L(v)$ but $L(u_k)$ contains a color outside $L(v)$, then we same-color $u_iu_j$ and color $u_k$ with a color not in $L(v)$. This is all the cases because otherwise $L(u_i) \cap L(u_j)$ and $L(u_k)$ are disjoint subsets of $L(v)$, which implies the following absurdity
    \[
    d+2= 3 + (d-1) \leq |L(u_i) \cap L(u_j)| + |L(u_k)| \leq |L(v)| = d+1 \ .
    \]

    Suppose now that all pairs $i\neq j$ are such that $|L(u_i) \cap L(v_j)| \leq 2$. Consider first $u_1$ and $u_2$. Note that $L(u_1) \cup L(u_2)$ contains at least $2(d-3) + 2 = 2d-4$ colors. For $d \geq 6$, we can color either $u_1$ or $u_2$ with a color not in $L(v)$. Suppose without loss of generality that this vertex is $u_1$. By the same reasoning, we color $u_2$ or $u_3$ with a color not in $L(v)$. The coloring can then be extended to all vertices.
\end{proof}
 
\end{document}